\pgfplotsset{compat=1.13}
\tikzstyle{state}+=[minimum size=20pt,inner sep=2pt]
\tikzstyle{action}=[font=\small,inner sep=0pt,outer sep=3pt]
\tikzstyle{actionnode}=[circle,draw=black,fill=black,minimum size=1mm,inner sep=0,outer sep=0]
\tikzstyle{actionedge}=[-,draw]
\tikzstyle{prob}=[font=\scriptsize,inner sep=0pt,outer sep=1pt]
\tikzstyle{probedge}=[draw]
\tikzstyle{directedge}=[draw]
\tikzset{chainarrow/.tip={Stealth[length=3pt]}}
\tikzset{>=chainarrow}
\tikzstyle{discontinuity-limit}=[circle,inner sep=0pt,minimum size=3pt,fill=white,draw=black]
\tikzstyle{discontinuity-value}=[circle,inner sep=0pt,minimum size=3pt,fill=black]
\begin{document}

%% Title information
%\title[Conditional Value-at-Risk]{Optimizing the Conditional Value-at-Risk}
\title[CVaR for Reachability and Mean Payoff in MDP]{Conditional Value-at-Risk for Reachability and Mean Payoff in Markov Decision Processes}
%\subtitle{Subtitle}
%\subtitlenote{with subtitle note}

%% Author information
\author{Jan K{\v{r}}et\'insk\'y}
\orcid{0000-0002-8122-2881}
\affiliation{
  \position{}
  \department{Institut für Informatik (I7)}
  \institution{Technische Universität München}
  \streetaddress{Boltzmannstr. 3}
  \city{Garching bei München}
  \state{Bavaria}
  \postcode{85748}
  \country{Germany}
}
\email{jan.kretinsky@in.tum.de}

\author{Tobias Meggendorfer}
\orcid{0000-0002-1712-2165}
\affiliation{
  \position{}
  \department{Institut für Informatik (I7)}
  \institution{Technische Universität München}
  \streetaddress{Boltzmannstr. 3}
  \city{Garching bei München}
  \state{Bavaria}
  \postcode{85748}
  \country{Germany}
}
\email{tobias.meggendorfer@in.tum.de}

\begin{abstract}
We present the \emph{conditional value-at-risk (CVaR)} in the context of Markov chains and Markov decision processes with reachability and mean-payoff objectives.
CVaR quantifies risk by means of the expectation of the worst $p$-quantile.
As such it can be used to design risk-averse systems.
We consider not only CVaR constraints, but also introduce their conjunction with expectation constraints and quantile constraints (value-at-risk, VaR).
We derive lower and upper bounds on the computational complexity of the respective decision problems and characterize the structure of the strategies in terms of memory and randomization.
\end{abstract}

%% 2012 ACM Computing Classification System (CSS) concepts
%% Generate at 'http://dl.acm.org/ccs/ccs.cfm'.

\begin{CCSXML}
	<ccs2012>
	<concept>
	<concept_id>10003752.10003790.10011192</concept_id>
	<concept_desc>Theory of computation~Verification by model checking</concept_desc>
	<concept_significance>500</concept_significance>
	</concept>
	</ccs2012>
\end{CCSXML}

\ccsdesc[500]{Theory of computation~Verification by model checking}

\copyrightyear{2018}
\acmYear{2018}
\setcopyright{none}
\acmConference[LICS '18]{33rd Annual ACM/IEEE Symposium on Logic in Computer Science}{July 9--12, 2018}{Oxford, United Kingdom}
\acmBooktitle{LICS '18: 33rd Annual ACM/IEEE Symposium on Logic in Computer Science, July 9--12, 2018, Oxford, United Kingdom}
\acmPrice{15.00}
\acmDOI{10.1145/3209108.3209176}
\acmISBN{978-1-4503-5583-4/18/07}

\maketitle

\section{Introduction}

\paragraph{Markov decision processes (MDP)} are a standard formalism for modelling stochastic systems featuring non-determinism.
The fundamental problem is to design a strategy resolving the non-determi\-nistic choices so that the systems' behaviour is optimized with respect to a given objective function, or, in the case of multi-objective optimization, to obtain the desired trade-off.
The objective function (in the optimization phrasing) or the query (in the decision-problem phrasing) consists of two parts. 
First, a payoff is a measurable function assigning an outcome to each run of the system.
It can be real-valued, such as the \emph{long-run average reward} (also called \emph{mean payoff}), or a two-valued predicate, such as \emph{reachability}.
Second, the payoffs for single runs are combined into an overall outcome of the strategy, typically in terms of \emph{expectation}.
The resulting objective function is then for instance the expected long-run average reward, or the probability to reach a given target state.

\paragraph{Risk-averse control} aims to overcome one of the main disadvantages of the expectation operator, namely its ignorance towards the incurred risks, intuitively phrased as a question \emph{\enquote{How bad are the bad cases?}}
While the standard deviation (or variance) quantifies the spread of the distribution, it does not focus on the bad cases and thus fails to capture the risk.
There are a number of quantities used to deal with this issue:
\begin{itemize}
	\item The \emph{worst-case} analysis (in the financial context known as discounted maximum loss) looks at the payoff of the worst possible run.
	While this makes sense in a fully non-deterministic environment and lies at the heart of verification, in the probabilistic setting it is typically unreasonably pessimistic, taking into account events happening with probability $0$, e.g., never tossing head on a fair coin.
	\item The \emph{value-at-risk} ($\VaR$) denotes the worst $p$-quantile for some $p \in [0,1]$.
	For instance, the value at the $0.5$-quantile is the median, the $0.05$-quantile (the \emph{vigintile} or \emph{ventile}) is the value of the best run among the $5\%$ worst ones.
	As such it captures the \enquote{reasonably possible} worst-case.
	See Fig.~\ref{fig:cvar_explain} for an example of VaR for two given probability density functions.
	There has been an extensive effort spent recently on the analysis of MDP with respect to VaR and the re-formulated notions of quantiles, percentiles, thresholds, satisfaction view etc., see below.
	Although VaR is more realistic, it tends to ignore outliers too much, as seen in Fig.~\ref{fig:cvar_explain} on the right.
	VaR has been characterized as \emph{\enquote{seductive, but dangerous}} and \emph{\enquote{not sufficient to control risk}} \cite{seductive}.
	\item The \emph{conditional value-at-risk} (average value-at-risk, expected shortfall, expected tail loss) answers the question \emph{\enquote{What to expect in the bad cases?}}
	It is defined as the expectation over all events worse than the value-at-risk, see Fig.~\ref{fig:cvar_explain}.
	As such it describes the lossy tail, taking outliers into account, weighted respectively.
	In the degenerate cases, CVaR for $p = 1$ is the expectation and for $p = 0$ the (probabilistic) worst case.
	It is an established risk metric in finance, optimization and operations research, e.g. \cite{MAFI:MAFI068,Rockafellar00optimizationof}, and \emph{\enquote{is considered to be a more consistent measure of risk}} \cite{Rockafellar00optimizationof}.
	Recently, it started permeating to areas closer to verification, e.g. robotics \cite{DBLP:conf/icra/CarpinCP16}.
\end{itemize}

\begin{figure}
	\begin{subfigure}[!t]{0.49\columnwidth}
		\begin{tikzpicture}
			\begin{axis}[axis x line=middle, axis y line=middle,
					axis line style={->},
					x label style={at={(axis description cs:0,0)},anchor=north west},
					y label style={at={(axis description cs:0,0)},rotate=90,anchor=south west},
					ymajorticks=false,
					height=2.8cm,width={1.2\textwidth},
					xmin=0,xmax=2.2,xtick={0,1,2},xlabel={value},
					ymin=0,ymax=0.5,ytick style={draw=none},ylabel={density},
					samples=50]
				\addplot[domain=0:0.55, smooth, fill=lightgray, draw=none]{(e^(1 - (x - 1)^2) - 1)^4 / (20)} \closedcycle;
				\addplot[domain=0:2, smooth]{(e^(1 - (x - 1)^2) - 1)^4 / (20)};

				\draw [dashed] (axis cs:0.45,0) -- (axis cs:0.45,0.32) node[rotate=90,anchor=south] {$\CVaR$} -- (axis cs:0.45,0.47);
				\draw [dashed] (axis cs:0.55,0) -- (axis cs:0.55,0.38) node[rotate=90,anchor=north] {$\VaR$} -- (axis cs:0.55,0.47);

				\draw (axis cs:0,0) -- (axis cs:1,0);
			\end{axis}
		\end{tikzpicture}
	\end{subfigure} %
	\begin{subfigure}[!t]{0.49\columnwidth}
		\begin{tikzpicture}
			\begin{axis}[axis x line=middle, axis y line=middle,
					axis line style={->},
					x label style={at={(axis description cs:0,0)},anchor=north west},
					y label style={at={(axis description cs:0,0)},rotate=90,anchor=south west},
					ymajorticks=false,
					height=2.8cm,width={1.2\textwidth},
					xmin=0,xmax=2.2,xtick={0,1,2},xlabel={value},
					ymin=0,ymax=0.5,ytick style={draw=none},ylabel={density},
					samples=50,]
				\addplot[domain=0:0.25, smooth, fill=lightgray, draw]{sin(x*720)^2 / 4};

				\addplot[domain=1:1.1, smooth, fill=lightgray, draw=none]{(e^(-(x - 1.5)^2) - e^(-0.25))^2 * 7.5} \closedcycle;
				\addplot[domain=1:2, smooth]{(e^(-(x - 1.5)^2) - e^(-0.25))^2 * 7.5};

				\draw [dashed] (axis cs:0.3,0) -- (axis cs:0.3,0.32) node[rotate=90,anchor=north] {$\CVaR$} -- (axis cs:0.3,0.47);
				\draw [dashed] (axis cs:1.1,0) -- (axis cs:1.1,0.38) node[rotate=90,anchor=south] {$\VaR$} -- (axis cs:1.1,0.47);

				\draw (axis cs:0,0) -- (axis cs:2,0);
			\end{axis}
		\end{tikzpicture}
	\end{subfigure}
	\caption{Illustration of VaR and CVaR for some random variables.}
	\label{fig:cvar_explain}
\end{figure}
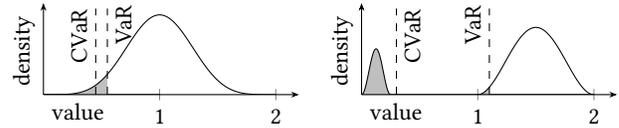

\paragraph{Our contribution}

In this paper, we investigate optimization of MDP with respect to CVaR as well as the respective trade-offs with expectation and VaR.
We study the VaR and CVaR operators for the first time with the payoff functions of weighted reachability and mean payoff, which are fundamental in verification.
Moreover, we cover both the single-dimensional and the multi-dimensional case.

Particularly, we define CVaR for MDP and show the peculiarities of the concept.
Then we study the computational complexity and the strategy complexity for various settings, proving the following:
\begin{itemize}
	\item
	The single dimensional case can be solved in polynomial time through linear programming, see Section~\ref{sec:single}.
	\item
	The multi-dimensional case is NP-hard, even for CVaR-only constraints.
	Weighted reachability is NP-complete and we give PSPACE and EXPSPACE upper bounds for mean payoff with CVaR and expectation constraints, and with additional VaR constraints, respectively, see Section~\ref{sec:multi}.
	(Note that already for the sole VaR constraints only an exponential algorithm is known; the complexity is an open question and not even NP-hardness is known~\cite{DBLP:journals/fmsd/RandourRS17,DBLP:conf/lics/ChatterjeeKK15}.)
	\item We characterize the strategy requirements, both in terms of memory, ranging from memoryless, over constant-size to infinite memory, and the required degree of randomization, ranging from fully deterministic strategies to randomizing strategies with stochastic memory update.
\end{itemize}

While dealing with the CVaR operator, we encountered surprising behaviour, preventing us to trivially adapt the solutions to the expectation and VaR problems:
\begin{itemize}
	\item Compared to, e.g., expectation and VaR, CVaR does not behave linearly w.r.t. stochastic combination of strategies.
	%\item Quite complex strategies are already needed in the single-dimensional case (memory and randomization).
	\item A conjunction of CVaR constraints already is NP-hard, since it can force a strategy to play deterministically.
\end{itemize}

%We present tight bounds on the computational complexity and structure of the strategies for the one-dimensional case with various objectives and give an outlook for the multi-dimensional case.
%We consider single dimensional queries in Sec.~\ref{sec:single}.
%There, we show PTIME solutions for all types of queries and give precise bounds on the class of optimal strategies.
%Afterwards, we deal with the harder multidimensional case in Sec.~\ref{sec:multi}.
%We prove NP-hardness for all queries as well as NP-completeness for weighted reachability and provide further insights in the mean payoff setting. \todo{Are we exactly doing this?}

%=====
%%?
%In verification , the typical objective are \emph{reachability}, to which e.g. linear temporal logic reduces to, and 
%In the case of reachability 
%
%recent interest
%=====================

\subsection{Related work}

\paragraph{Worst case}
Risk-averse approaches optimizing the worst case together with expectation have been considered in beyond-worst-case and beyond-almost-sure analysis investigated in both the single-dimensional~\cite{DBLP:journals/iandc/BruyereFRR17}
%[V. Bruye`re, E. Filiot, M. Randour, and J. Raskin. Meet your expectations with guarantees: Beyond worst-case synthesis in quantitative games.] 
and in the multi-dimensional~\cite{DBLP:conf/lics/ClementeR15} setup.
%[L. Clemente and J.-F. Raskin. Multidimensional beyond worst-case and almost-sure problems for mean-payoff objectives. In LICS. To appear, 2015.]

%\todo{It may still be appropriate in the non-deterministic setup where some knowledge of the probability distribution on the non-deterministic choices is known, as in the beyond-worst-case synthesis \cite{raskin}.}

\paragraph{Quantiles}
The decision problem related to VaR has been phrased in probabilistic verification mostly in the form \emph{\enquote{Is the probability that the payoff is higher than a given value threshold more than a given probability threshold?}}
The total reward gained attention both in the verification community~\cite{DBLP:conf/fossacs/UmmelsB13,DBLP:conf/icalp/HaaseK15,DBLP:conf/tacas/Baier0KW17} and recently in the AI community~\cite{DBLP:conf/aaai/GilbertWX17,DBLP:journals/corr/abs-1711-05788}.
%
%This question gained recent attention in the verification community in various settings.
%As for the total reward in MDP, \cite{DBLP:conf/fossacs/UmmelsB13} introduces the question and presents a polynomial solution in the qualitative case and an exponential algorithm for the quantitative case.
%This problem is also considered recently in AI community \cite{DBLP:conf/aaai/GilbertWX17,DBLP:journals/corr/abs-1711-05788}.
%The exact computational complexity is discussed in \cite{DBLP:conf/icalp/HaaseK15}.
%A related problem arises when absorption is not ensured almost surely, then the total reward conditioned on absorption is of interest \cite{DBLP:conf/tacas/Baier0KW17} 
%
Multi-dimensional percentile queries are considered for various objectives, such as mean-payoff, limsup, liminf, shortest path in~\cite{DBLP:journals/fmsd/RandourRS17}; for the specifics of two-dimensional case and their interplay, see~\cite{DBLP:conf/nfm/BaierDDKK14}.
Quantile queries for more complex constraints have also been considered, namely their conjunctions~\cite{FKR95,DBLP:journals/corr/abs-1104-3489}, conjunctions with expectations~\cite{DBLP:conf/lics/ChatterjeeKK15} or generally Boolean expressions~\cite{DBLP:conf/lics/HaaseKL17}.
Some of these approaches have already been practically applied and found useful by domain experts~\cite{DBLP:conf/fase/BaierDKDKMW14,DBLP:conf/csl/BaierDK14}.

\paragraph{CVaR}

There is a body of work that optimizes CVaR in MDP.
However, to the best of our knowledge, all the approaches (1) focus on the single-dimensional case, (2) disregard the expectation, and (3) treat neither reachability nor mean payoff.
They focus on the discounted~\cite{DBLP:journals/mmor/BauerleO11}, total~\cite{DBLP:conf/icra/CarpinCP16}, or immediate~\cite{DBLP:journals/ajcm/KageyamaFKT11} reward, as well as extend the results to continuous-time models~\cite{DBLP:journals/siamjo/HuangG16,DBLP:journals/siamco/MillerY17}.
This work comes from the area of optimization and operations research, with the notable exception of~\cite{DBLP:conf/icra/CarpinCP16}, which focuses on the total reward. 
Since the total reward generalizes weighted reachability, \cite{DBLP:conf/icra/CarpinCP16} is related to our work the most.
However, it provides only an approximation solution for the one-dimensional case, neglecting expectation and the respective trade-offs.

Further, CVaR is a topic of high interest in finance, e.g., \cite{Rockafellar00optimizationof,seductive}.
The central difference is that there variations of portfolios (i.e. the objective functions) are considered while leaving the underlying random process (the market) unchanged.
This is dual to our problem, since we fix the objective function and now search for an optimal random process (or the respective strategy).
%Consequently, although CVaR is convex in the former setting, it is not convex in our setting convex, i.e., with respect to the random process.

\paragraph{Multi-objective expectation}

In the last decade, MDP have been extensively studied generally in the setting of multiple objectives, which provides some of the necessary tools for our trade-off analysis.
Multiple objectives have been considered for both qualitative payoffs, such as reachability and LTL~\cite{DBLP:journals/lmcs/EtessamiKVY08}, as well as quantitative payoffs, such as mean payoff~\cite{DBLP:journals/corr/abs-1104-3489}, discounted sum~\cite{DBLP:conf/lpar/ChatterjeeFW13}, or total reward~\cite{DBLP:conf/tacas/ForejtKNPQ11}.
Variance has been introduced to the landscape in~ \cite{DBLP:conf/lics/BrazdilCFK13}.

%!TEX root = main.tex

\section{Preliminaries}

Due to space constraints, some proofs and explanations are shortened or omitted when clear and can be found in the appendix.

\subsection{Basic definitions}

We mostly follow the definitions of \cite{DBLP:conf/lics/ChatterjeeKK15,DBLP:journals/corr/abs-1104-3489}.
$\Naturals, \Rationals, \Reals$ are used to denote the sets of positive integers, rational and real numbers, respectively.
For $n \in \Naturals$, let $[n]=\{1, \ldots, n\}$.
Further, $k_j$ refers to $k \cdot e_j$, where $e_j$ is the unit vector in dimension $j$.

We assume familiarity with basic notions of probability theory, e.g., \emph{probability space} $(\ProbSpace, \Salgebra, \measure)$, \emph{random variable} $F$, or \emph{expected value} $\Expectation$.
The set of all distributions over a countable set $C$ is denoted by $\Distributions(C)$.
Further, $d \in \Distributions(C)$ is Dirac if $d(c) = 1$ for some $c \in C$.
To ease notation, for functions yielding a distribution over some set $C$, we may write $f(\cdot, c)$ instead of $f(\cdot)(c)$ for $c \in C$.

\paragraph{Markov chains}
A \emph{Markov chain} (MC) is a tuple $\MC = (\States, \trans, \mu_0)$, where
	$\States$ is a countable set of states\footnote{We allow the state set to be countable for the formal definition of strategies on MDP.
		When dealing with Markov Chains in queries, we only consider finite state sets.},
	$\trans : \States \to \Distributions(\States)$ is a probabilistic transition function, and
	$\mu_0 \in \Distributions(\States)$ is the initial probability distribution.
The SCCs and BSCCs of a MC are denoted by $\SCCs$ and $\BSCCs$, respectively~\cite{Puterman-book}.

A \emph{run} in $\MC$ is an infinite sequence $\run = s_1 s_2 \cdots$ of states, we write $\run_i$ to refer to the $i$-th state $s_i$.
A \emph{path} $\finitePath$ in $\MC$ is a finite prefix of a run $\run$.
Each path $\finitePath$ in $\MC$ determines the set $\Cone(\finitePath)$ consisting of all runs that start with $\finitePath$.
To $\MC$, we associate the usual probability space $(\Runs, \Salgebra, \Prob)$, where $\Runs$ is the set of all runs in $\MC$, $\Salgebra$ is the $\sigma$-field generated by all $\Cone(\finitePath)$, and $\Prob$ is the unique probability measure such that $\Prob(\Cone(s_1 \cdots s_k)) = \mu_0(s_1) \cdot \prod_{i=1}^{k-1} \trans(s_i, s_{i+1})$.
Furthermore, $\LtlEventually B$ ($\LtlEventually \LtlAlways B$) denotes the set of runs which eventually reach (eventually remain in) the set $B \subseteq \States$, i.e.\ all runs where $\run_i \in B$ for some $i$ (there exists an $i_0$ such that $\run_i \in B$ for all $i \geq i_0$).

\paragraph{Markov decision processes}
A \emph{Markov decision process} (MDP) is a tuple $\MDP = (\States, \Actions, \AvAct, \Trans, \initstate)$ where
	$\States$ is a finite set of states,
	$\Actions$ is a finite set of actions,
	$\AvAct : \States \rightarrow 2^\Actions \setminus \{\emptyset\}$ assigns to each state $s$ the set $\AvAct(s)$ of actions enabled in $s$ so that $\{\AvAct(s) \mid s \in \States\}$ is a partitioning of $\Actions$\footnote{In other words, each action is associated with exactly one state.},
	$\Trans : \Actions \rightarrow \Distributions(\States)$ is a probabilistic transition function that given an action $a$ yields a probability distribution over the successor states, and
	$\initstate$ is the initial state of the system.

A \emph{run} $\run$ of $\MDP$ is an infinite alternating sequence of states and actions $\run = s_1 a_1 s_2 a_2 \cdots$ such that for all $i \geq 1$, we have $a_i \in \AvAct(s_i)$ and $\Delta(a_i, s_{i+1}) > 0$.
Again, $\run_i$ refers to the $i$-th state visited by this particular run.
A \emph{path} of length $k$ in $\MDP$ is a finite prefix $\finitePath = s_1 a_1 \cdots a_{k-1} s_k$ of a run in $G$.

\paragraph{Strategies and plays.}
Intuitively, a strategy in an MDP $\MDP$ is a \enquote{recipe} to choose actions based on the observed events.
Usually, a strategy is defined as a function $\strategy : (\States \Actions)^*\States \to \Distributions(\Actions)$ that given a finite path $\finitePath$, representing the history of a play, gives a probability distribution over the actions enabled in the last state.
We adopt the slightly different, though equivalent~\cite[Sec.~6]{DBLP:journals/corr/abs-1104-3489} definition from~\cite{DBLP:conf/lics/ChatterjeeKK15}, which is more convenient for our setting.

Let $\Memory$ be a countable set of \emph{memory elements}.
A \emph{strategy} is a triple $\strategy = (\strategy_u, \strategy_n, \alpha)$, where
	$\strategy_u: \Actions \times \States \times \Memory \to \Distributions(\Memory)$ and
	$\strategy_n: \States \times \Memory \to \Distributions(\Actions)$ are \emph{memory update} and \emph{next move} functions, respectively, and
	$\alpha \in \Distributions(\Memory)$ is the initial memory distribution.
We require that, for all $(s, m) \in \States \times \Memory$, the distribution $\strategy_n(s,m)$ assigns positive values only to actions available at $s$, i.e.\ $\supp \strategy_n(s,m) \subseteq \AvAct(s)$.

A \emph{play} of $\MDP$ determined by a strategy $\strategy$ is a Markov chain $\MDP^\strategy = (\States^\strategy, \trans^\strategy, \mu_0^\strategy)$, where
	the set of states is $\States^\strategy = \States \times \Memory \times \Actions$,
	the initial distribution $\mu_0$ is zero except for $\mu_0^\strategy(\initstate, m, a) = \alpha(m) \cdot \strategy_n(\initstate, m, a)$, and
	the transition probability from $s^\strategy = (s, m, a)$ to ${s'}^\strategy = (s', m', a')$ is $\trans^\strategy(s^\strategy, {s'}^\strategy) = \Trans(a, s') \cdot \strategy_u(a, s', m, m') \cdot  \strategy_n(s',m', a')$.
Hence, $\MDP^\strategy$ starts in a location chosen randomly according to $\alpha$ and $\strategy_n$.
In state $(s, m, a)$ the next action to be performed is $a$, hence the probability of entering $s'$ is $\Trans(a, s')$.
The probability of updating the memory to $m'$ is $\strategy_u(a, s', m, m')$, and the probability of selecting $a'$ as the next action is $\strategy_n(s',m', a')$. Since these choices are independent, and thus we obtain the product above.

Technically, $\MDP^\strategy$ induces a probability measure $\Prob^\strategy$ on $\States^\strategy$.
Since we mostly work with the corresponding runs in the original MDP, we overload $\Prob^\strategy$ to also refer to the probability measure obtained by projecting onto $\States$.
Further, \enquote{almost surely} etc. refers to happening with probability 1 according to $\Prob^\strategy$.
The expected value of a random variable $X : \Runs \to \Reals$ is $\Expectation^\strategy[X] = \int_\Runs X \ d\Prob^\strategy$.

A convex combinations of two strategies $\strategy_1$ and $\strategy_2$, written as $\strategy_\lambda = \lambda \strategy_1 + (1-\lambda) \strategy_2$, can be obtained by defining the memory as $\Memory_\lambda = \{1\} \times \Memory_1 \union \{2\} \times \Memory_2$, randomly choosing one of the two strategies via the initial memory distribution $\alpha_\lambda$ and then following the chosen strategy.
Clearly, we have that $\Prob^{\strategy_\lambda} = \lambda \Prob^{\strategy_1} + (1-\lambda) \Prob^{\strategy_2}$.

\paragraph{Strategy types.}
A strategy $\strategy$ may use infinite memory $\Memory$, and both $\strategy_u$ and $\strategy_n$ may randomize.
The strategy $\strategy$ is
\begin{itemize}
	\item \emph{deterministic-update}, if $\alpha$ is Dirac and the memory update function $\strategy_u$ gives a Dirac distribution for every argument;
	\item \emph{deterministic}, if it is deterministic-update and the next move function $\strategy_n$ gives a Dirac distribution for every argument.
\end{itemize}
A \emph{stochastic-update} strategy is a strategy that is not necessarily deterministic-update and \emph{randomized} strategy is a strategy that is not necessarily deterministic.
We also classify the strategies according to the size of memory they use.
Important subclasses are
	\emph{memoryless} strategies, in which $\Memory$ is a singleton,
	\emph{$n$-memory} strategies, in which $\Memory$ has exactly $n$~elements, and
	\emph{finite-memory} strategies, in which $\Memory$ is finite.

\paragraph{End components.}
A tuple $(T, B)$ where $\emptyset \neq T \subseteq S$ and $\emptyset \neq B \subseteq \bigcup_{t\in T} \AvAct(t)$ is an \emph{end component} of the MDP $\MDP$ if
	(i)~for all actions $a \in B$, $\Trans(a, s') > 0$ implies $s'\in T$; and
	(ii)~for all states $s, t \in T$ there is a path $\finitePath = s_1 a_1\cdots a_{k-1} s_k \in (TB)^{k-1}T$ with $s_1 = s$, $s_k = t$.
An end component $(T, B)$ is a \emph{maximal end component (MEC)} if $T$ and $B$ are maximal with respect to subset ordering.
Given an MDP, the set of MECs is denoted by $\MECs$.
By abuse of notation, $s \in M$ refers to all states of a MEC $M$, while $a \in M$ refers to the actions.

\begin{remark} \label{rem:mec_decomposition_ptime}
	Computing the maximal end component (MEC) decomposition of an MDP, i.e.\ the computation of $\MECs$, is in P~\cite{DBLP:journals/jacm/CourcoubetisY95}.
\end{remark}

%\begin{remark} \label{rem:mec_quotient}
%	The \emph{MEC quotient} of some MDP $\MDP$ refers to another MDP $\widehat\MDP$ in which each MEC has been replaced by a single state, representing the whole MEC, and transitions adapted accordingly~\cite{dA97a}.
%	This construction preserves many properties, e.g., reachability~\cite{DBLP:conf/cav/AshokCDKM17,DBLP:conf/atva/KretinskyM17}.
%	See \cite{APPENDIX} for a formal definition and relevant properties.
%\end{remark}

\begin{remark} \label{rem:mec_almost_sure}
	For any MDP $\MDP$ and strategy $\strategy$, a run almost surely eventually stays in one MEC, i.e.\ $\Prob^\strategy[\Union_{M_i \in \MECs} \LtlEventually\LtlAlways M_i] = 1$~\cite{Puterman-book}.
\end{remark}

\subsection{Random variables on Runs}

We introduce two standard random variables, assigning a value to each run of a Markov Chain or Markov Decision Process.

\paragraph{Weighted reachability.}
Let $\reachSet \subseteq \States$ be a set of target states and $\reward : \reachSet \mapsto \Rationals$ be a reward function.
Define the random variable $\ObjFun^\QueryObjReach$ as $\ObjFun^\QueryObjReach(\run) = \reward(\min_i \{\run_i \mid \run_i \in \reachSet\})$, if such an $i$ exists, and $0$ otherwise.
Informally, $\ObjFun^\QueryObjReach$ assigns to each run the value of the first visited target state, or $0$ if none.
$\ObjFun^\QueryObjReach$ is measurable and discrete, as $S$ is finite~\cite{Puterman-book}.
Whenever we are dealing with weighted reachability, we assume w.l.o.g. that all target states are absorbing, i.e.\ for any $s \in \reachSet$ we have $\trans(s, s) = 1$ for MC and $\Trans(a, s) = 1$ for all $a \in \AvAct(s)$ for MDP.

%Note that apart from some technicalities this definition is equal to defining $\reachSet = \{s \in \States \mid \reward(s) \neq 0\}$.

\paragraph{Mean payoff} (also known as \emph{long-run average reward}, and \emph{limit average reward}).
Again, let $\reward : \States \mapsto \Rationals$ be a reward function.
The mean payoff of a run $\run$ is the average reward obtained per step, i.e. $\ObjFun^\QueryObjMean(\run) = \liminf_{n \to \infty} \frac{1}{n} \sum_{i = 1}^n \reward(\run_i).$
The $\liminf$ is necessary, since $\lim$ may not be defined in general.
Further, $\ObjFun^\QueryObjMean$ is measurable~\cite{Puterman-book}.

\begin{remark}
	There are several distinct definitions of \enquote{weighted reachability}.
	The one chosen here primarily serves as foundation for the more general mean payoff.
\end{remark}

\section{Introducing the Conditional Value-at-risk}

In order to define our problem, we first introduce the general concept of \emph{conditional value-at-risk} (CVaR), also known as \emph{average value-at-risk}, \emph{expected shortfall}, and \emph{expected tail loss}.
As already hinted, the CVaR of some real-valued random variable $X$ and probability $p \in [0, 1]$ intuitively is the expectation below the worst $p$-quantile of $X$.

Let $X : \ProbSpace \to \Reals$ be a random variable over the probability space $(\ProbSpace, \Salgebra, \Prob)$.
The associated \emph{cumulative density function} (CDF) $F_X : \Reals \to [0, 1]$ of $X$ yields the probability of $X$ being less than or equal to the given value $r$, i.e. $F_X(r) = \Prob(\{X(\event) \leq r\})$.
$F$ is non-decreasing and right continuous with left limits (\emph{càdlàg}).

The \emph{value-at-risk} $\VaR_p$ is the worst $p$-quantile, i.e.\ a value $v$ s.t. the probability of $X$ attaining a value less than or equal to $v$ is $p$:\footnote{An often used, mostly equivalent definition is $\inf \{r \in \Reals \mid F_X(r) \geq p\}$.
Unfortunately, this would lead to some complications later on.
See Sec.~\ref{sec:app:cvar_prop} for details.}
\begin{equation*}
	\VaR_p(X) := \sup \{r \in \Reals \mid F_X(r) \leq p\} \quad (\VaR_1(X) = \infty)
\end{equation*}
Then, with $v = \VaR_p(X)$, $\CVaR$ can be defined as~\cite{Rockafellar00optimizationof}
\begin{equation*}
	\CVaR_p(X) := \Expectation[X \mid X \leq v] = \frac{1}{p} \int_{(-\infty, v]} x \ dF_X,
\end{equation*}
with the corner cases $\CVaR_0 := \VaR_0$ and $\CVaR_1 = \Expectation$.

Unfortunately, this definition only works as intended for continuous $X$, as shown by the following example.

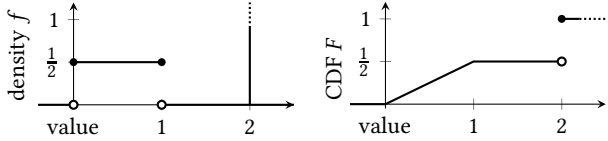
\begin{figure}
	\begin{subfigure}[!t]{0.49\columnwidth}
		\begin{tikzpicture}
			\begin{axis}[axis x line=middle, axis y line=middle,
					height=3cm, width=1.2\textwidth,
					x label style={at={(axis description cs:0,0)},anchor=north west},
					y label style={at={(axis description cs:0,0)},rotate=90,anchor=south west},
					xmin=-0.4,xmax=2.5,xtick={1,2},xlabel={value},
					ymin=-0.05,ymax=1.2,ytick={0.5,1},yticklabels={$\tfrac{1}{2}$, $1$},ylabel={density $f$}]

				\draw[thick] (axis cs:-1,0) -- (axis cs:0,0) node[discontinuity-limit] {};
				\draw[thick] (axis cs:0,0.5) node[discontinuity-value] {} -- (axis cs:1,0.5) node[discontinuity-value] {};
				\draw[thick] (axis cs:1,0) node[discontinuity-limit] {} -- (axis cs:3,0);
				\draw[thick] (axis cs:2,0) -- (axis cs:2,0.9);
				\draw[thick,densely dotted] (axis cs:2,0.9) -- (axis cs:2,1.2);
,			\end{axis}
		\end{tikzpicture}
	\end{subfigure}
	\begin{subfigure}[!t]{0.49\columnwidth}
		\begin{tikzpicture}
			\begin{axis}[axis x line=middle, axis y line=middle,
					height=3cm, width=1.2\textwidth,
					x label style={at={(axis description cs:0,0)},anchor=north west},
					y label style={at={(axis description cs:0,0)},rotate=90,anchor=south west},
					xmin=-0.4,xmax=2.5,xtick={1,2},xlabel={value},
					ymin=-0.05,ymax=1.2,ytick={0.5,1},yticklabels={$\tfrac{1}{2}$, $1$},ylabel={CDF $F$}]

				\draw[thick] (axis cs:-1,0) -- (axis cs:0,0) -- (axis cs:1,0.5) -- (axis cs:2,0.5) node[discontinuity-limit] {};
				\draw[thick] (axis cs:2,1) node[discontinuity-value] {} -- (axis cs:2.2,1);
				\draw[thick,densely dotted] (axis cs:2.2,1) -- (axis cs:2.5,1);

				%\draw[dashed] (axis cs:0,0.6) node[rotate=90,anchor=south] {$\tfrac{1}{2} + \varepsilon$} -- (axis cs:4,0.6);
			\end{axis}
		\end{tikzpicture}
	\end{subfigure}
	\caption{Distribution showing peculiarities of $\CVaR$}
	\label{fig:cvar_difficult}
\end{figure} %
\begin{example} \label{ex:cvar_intuition_wrong}
	Consider a random variable $X$ with a distribution as outlined in Fig.~\ref{fig:cvar_difficult}.
	For $p < \frac{1}{2}$, we certainly have $\VaR_p = 2p$.
	On the other hand, for any $p \in (\frac{1}{2}, 1)$, we get $\VaR_p = 2$.
	Consequently, the integral remains constant and $\CVaR_p$ would actually \emph{decrease} for increasing $p$, not matching the intuition. \QEE
\end{example}

%We present the general definition of CVaR, able to cope with any measurable random variable.
%These variables may exhibit both discrete and continuous behaviour, and thus we have to resort to a more measure-theoretic approach.
%
%\paragraph{Formal definition.}
%To avoid erratic behaviour, we assume that any random variables we deal with are bounded.
%This especially implies that the expectation of $X$ is defined.
%The \emph{density measure} of $X$ is given by $f : \Borel(\Reals) \to [0, 1]$,
%%
%\begin{equation*}
%	f(B) = \mu(X^{-1}(B)) = \measure(\{\event \in \ProbSpace \mid X(\event) \in B\}),
%\end{equation*}
%%
%where $\Borel(\Reals)$ is the Lebesgue-$\sigma$-field on $\Reals$.
%Essentially, $f$ describes how likely it is that $X$ attains some value in the given measurable set.
%From this measure we define the \emph{cumulative density function} (CDF) $F : \Reals \to [0, 1]$ which given $r$ yields the probability of $X$ being less than or equal to $r$, i.e.
%%
%\begin{equation*}
%	F(r) = f((-\infty, r]) = \measure(\{\event \in \ProbSpace \mid X(\event) \leq r\})
%\end{equation*}
%%
%$F$ is non-decreasing and right continuous with left limits (known as \emph{càdlàg}).
%

\paragraph{General definition.}
As seen in Ex.~\ref{ex:cvar_intuition_wrong}, the previous definition breaks down when $F_X$ is not continuous at the $p$-quantile and consequently $F_X(\VaR_p(X)) > p$.
Thus, we handle the values at the threshold separately, similar to~\cite{rockafellar2002conditional}.

\begin{definition}
	Let $X$ be some random variable and $p \in [0, 1]$.
	With $v = \VaR_p(X)$, the $\CVaR$ of $X$ is defined as
	\begin{equation*}
		\CVaR_p(X) := \frac{1}{p} \left( \int_{(-\infty, v)} x \, dF_X + (p - \Prob[X < v]) \cdot v \right),
	\end{equation*}
	which can be rewritten as
	\begin{equation*}
		\CVaR_p(X) = \tfrac{1}{p} \big( \Prob[X < v] \cdot \Expectation[X \mid X < v] + (p - \Prob[X < v]) \cdot v \big).
	\end{equation*}
	The corner cases again are $\CVaR_0 := \VaR_0$, and $\CVaR_1 = \Expectation$.
\end{definition}
Since the degenerate cases of $p = 0$ and $p = 1$ reduce to already known problems, we exclude them in the following.

We demonstrate this definition on the previous example.
\begin{example}
	Again, consider the random variable $X$ from Ex.~\ref{ex:cvar_intuition_wrong}.
	For $\frac{1}{2} < p < 1$ we have that $\Prob[X < \VaR_p(X)] = \Prob[X < 2] = \frac{1}{2}$.
	The right hand side of the definition $(p - \Prob[X < \VaR_p(X)]) = p - \frac{1}{2}$ captures the remaining discrete probability mass which we have to handle separately.
	Together with  $\int_{(-\infty, 2)} x \, dF_X = \frac{1}{4}$ we get $\CVaR_p(X) = \frac{1}{p} (\frac{1}{4} + (p - \frac{1}{2}) \cdot 2) = 2 - \frac{3}{4p}$.
	For example, with $p = \frac{3}{4}$, this yields the expected result $\CVaR_p(X) = 1$. \QEE
\end{example}
\begin{remark}
	Recall that $\Prob[X < r]$ can be expressed as the left limit of $F_X$, namely $\Prob[X < r] = \lim_{r' \to^{-} r} F_X(r')$.
	Hence, $\CVaR_p(X)$ solely depends on the CDF of $X$ and thus random variables with the same CDF also have the same CVaR.
\end{remark}

We say that \emph{$F_1$ stochastically dominates $F_2$} for two CDF $F_1$ and $F_2$, if $F_1(r) \leq F_2(r)$ for all $r$.
Intuitively, this means that a sample drawn from $F_2$ is likely to be larger or equal to a sample from $F_1$.
%Intuitively, this means that obtaining a larger value is more likely under $F_1$ than $F_2$.
All three investigated operators ($\Expectation$, $\CVaR$, and $\VaR$) are monotone w.r.t. stochastic dominance, see Sec.~\ref{sec:app:cvar_prop}.
\section{CVaR in MC and MDP: Problem statement}

Now, we are ready to define our problem framework.
First, we explain the types of building blocks for our queries, namely lower bounds on expectation, CVaR, and VaR.
Formally, we consider the following types of constraints.
\begin{align*}
	\QueryThreshExp & \leq \Expectation(X) & \QueryThreshCVaR & \leq \CVaR_\QueryProbCVaR(X) & \QueryThreshVaR & \leq \VaR_\QueryProbVaR(X)
\end{align*}
$X$ is some real-valued random variable, assigning a payoff to each run.
With these constraints, the classes of queries are denoted by
\begin{equation*}
	\QueryMDP^{\QueryCrit}_{\QueryObj, \QueryDim}
\end{equation*}
\begin{itemize}
	\item $\QueryCrit \subseteq \{\QueryCritExp, \QueryCritCVaR, \QueryCritVaR\}$ are the types of constraints,
	\item $\QueryObj \in \{\QueryObjReach, \QueryObjMean\}$ is the type of the objective function, either weighted reachability $\QueryObjReach$ or mean payoff $\QueryObjMean$, and
	\item $\QueryDim \in \{\QueryDimSingle, \QueryDimMulti\}$ is the dimensionality of the query.
\end{itemize}
We use $d$ to denote the dimensions of the problem, $d = 1$ iff $\QueryDim = \QueryDimSingle$.
As usual, we assume that all quantities of the input, e.g., probabilities of distributions, are rational.

An instance of these queries is specified by an MDP $\MDP$, a $d$-dimensional reward function $\reward : \States \to \Rationals^d$, and constraints from $\QueryCrit$, given by vectors $\QueryThreshExpVec, \QueryThreshCVaRVec, \QueryThreshVaRVec \in (\Rationals \union \{ \bot \})^d$ and $\QueryProbCVaRVec, \QueryProbVaRVec \in (0, 1)^d$.
This implies that in each dimension there is at most one constraint per type.
The presented methods can easily be extended to the more general setting of multiple constraints of a particular type in one dimension.
The decision problem is to determine whether there exists a strategy $\strategy$ such that \emph{all} constraints are met.

Technically, this is defined as follows.
Let $X$ be the $d$-dimensional random variable induced by the objective $\QueryObj$ and reward function $\reward$, operating on the probability space of $\MDP^\strategy$.
The strategy $\strategy$ is a witness to the query iff for each dimension $j \in [d]$ we have that $\Expectation[X_j] \geq \QueryVec{e}_j$, $\CVaR_{\QueryProbCVaRVec_j}(X_j) \geq \QueryThreshCVaRVec_j$, and $\VaR_{\QueryProbVaRVec_j}(X_j) \geq \QueryThreshVaRVec_j$. % where the three operators are computed with respect to the probability space of $\MDP^\strategy$.
Moreover, $\bot$ constraints are trivially satisfied.

For completeness sake, we also consider $\QueryMC^{\QueryCrit}_{\QueryObj, \QueryDim}$ queries, i.e. the corresponding problem on (finite state) Markov chains.

\paragraph{Notation.}
We introduce the following abbreviations.
When dealing with an MDP $\MDP$, $\CVaR^\strategy_p$ denotes $\CVaR_p$ relative to the probability space over runs induced by the strategy $\strategy$.
When additionally the random variable $X$ (e.g., mean payoff) is clear from the context, we may write $\CVaR_p$ and $\CVaR_p^\strategy$ instead of $\CVaR_p(X)$ and $\CVaR^\strategy_p(X)$, respectively.
We also define analogous abbreviations for $\VaR$.

\section{Single dimension} \label{sec:single}

We show that all queries in one dimension are in P.
Furthermore, our LP-based decision procedures directly yield a description of a witness strategy and allow for optimization objectives.
We refer to the input constraints by $\QueryThreshExp$ for expectation, $(\QueryProbCVaR, \QueryThreshCVaR)$ for CVaR, and $(\QueryProbVaR, \QueryThreshVaR)$ for VaR.
Further, we use $i$ for indices related to SCCs / MECs.

\subsection{Weighted reachability}

First, we show the simple result for Markov Chains, providing some insight in the techniques used in the MDP case.

\begin{theorem} \label{stm:mc_single_ptime}
	$\QueryMC^{\{\QueryCritExp, \QueryCritCVaR, \QueryCritVaR\}}_{\QueryObjReach, \QueryDimSingle}$ is in P.
\end{theorem}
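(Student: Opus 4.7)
The plan is to reduce the decision problem to computing three quantities from a polynomial-size explicit representation of the distribution of $\ObjFun^\QueryObjReach$ and then check them against the three input thresholds.

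First, I would exploit the assumption that every target state is absorbing: for each $s \in \reachSet$, the probability $p_s := \Prob[\LtlEventually \{s\}]$ of eventually reaching $s$ can be computed in polynomial time by solving the standard linear system for reachability in a Markov chain (e.g., via the usual system $x_s = 1$ for $s \in \reachSet$, $x_s = 0$ for states from which $\reachSet$ is unreachable, and $x_s = \sum_{s'} \trans(s,s') x_{s'}$ otherwise, together with an analogous system to obtain the initial-state values). Because $\States$ is finite and all target states are absorbing, the random variable $\ObjFun^\QueryObjReach$ is discrete and takes at most $|\reachSet|+1$ distinct values: namely $\reward(s)$ with probability $p_s$ for each $s \in \reachSet$, and $0$ with the residual probability $p_\bot := 1 - \sum_{s \in \reachSet} p_s$ (in case $0$ coincides with some $\reward(s)$ the masses are simply aggregated).

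Second, from this list of (value, probability) pairs I would build the CDF $F_{\ObjFun^\QueryObjReach}$ by sorting the values in ascending order, which takes $O(|\reachSet| \log |\reachSet|)$ time. All three operators can then be read off directly:
\begin{itemize}
\item $\Expectation[\ObjFun^\QueryObjReach] = p_\bot \cdot 0 + \sum_{s \in \reachSet} p_s \cdot \reward(s)$;
\item $\VaR_{\QueryProbVaR}(\ObjFun^\QueryObjReach) = \sup\{r \mid F_{\ObjFun^\QueryObjReach}(r) \leq \QueryProbVaR\}$, which on a sorted step CDF reduces to a single scan;
\item $\CVaR_{\QueryProbCVaR}(\ObjFun^\QueryObjReach)$ is evaluated by the formula from Section~3, taking $v := \VaR_{\QueryProbCVaR}$, summing $\reward(s) \cdot p_s$ over all values strictly less than $v$, and adding the correction term $(\QueryProbCVaR - \Prob[\ObjFun^\QueryObjReach < v]) \cdot v$ to account for the atom at $v$, finally divided by $\QueryProbCVaR$.
\end{itemize}
All three values are rational and have polynomial bit-length in the input, since they are rational combinations of the reachability probabilities and the given rewards.

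Finally, the procedure accepts iff $\Expectation[\ObjFun^\QueryObjReach] \geq \QueryThreshExp$, $\CVaR_{\QueryProbCVaR}(\ObjFun^\QueryObjReach) \geq \QueryThreshCVaR$, and $\VaR_{\QueryProbVaR}(\ObjFun^\QueryObjReach) \geq \QueryThreshVaR$ (with $\bot$-entries trivially satisfied). Each check is a comparison of rationals computed in polynomial time, yielding the claim. The only mildly subtle point is to use the paper's general $\CVaR$ definition rather than the simpler conditional-expectation form, because the discreteness of $\ObjFun^\QueryObjReach$ almost always places mass on the VaR-threshold; this is exactly the situation the split formula was designed for, and no genuine obstacle arises.
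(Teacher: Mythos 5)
Your proposal is correct and follows essentially the same route as the paper: solve a linear system to get the absorption probabilities of the (absorbing, hence single-state-BSCC) target states, read off the discrete CDF of $\ObjFun^\QueryObjReach$, and decide the three constraints directly, using the paper's general $\CVaR$ definition to handle the atom at the $\VaR$ threshold. The paper states this more tersely, but your elaboration (including the residual mass at $0$ and the correction term for the atom at $v$) fills in exactly the intended details.
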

\begin{proof}
	Let $\MC$ be a \emph{finite-state} Markov chain, $\reward$ a reward function, and $\reachSet = \{b_1, \dots, b_n\}$ the target set.
	Recall that all $b_i$ are absorbing, hence single-state BSCCs.
	We obtain the stationary distribution $p$ of $\MC$ in polynomial time by, e.g., solving a linear equation system~\cite{Puterman-book}.
	With $p$, we can directly compute the CDF of $\ObjFun^\QueryObjReach$ as $F_{\ObjFun^\QueryObjReach}(v) = \sum_{b_i : \reward(b_i) \leq v} p(b_i)$ and immediately decide the query.
	%This immediately gives us the exact expectation, VaR, and CVaR of $\MC$.
	%Since the transformation from $\MC$ to $\MC'$ preserves the distribution of $\ObjFun^\QueryObjReach$, and thus also the CDF, $\MC$ obtains the same values.
\end{proof}

Let us consider the more complex case of MDP.
We show a lower bound on the type of strategies necessary to realize $\QueryObj = \QueryObjReach$ queries with constraints on expectation and one of VaR or CVaR.
We then continue to prove that this class of strategies is optimal.
This characterization is used to derive a polynomial time decision procedure based on a linear program (LP) which immediately yields a witness strategy.
Finally, when we deal with the mean payoff case in Sec.~\ref{sec:single:mean}, we make use of the reasoning presented in this section.

\paragraph{Randomization is necessary for weighted reachability.}
In the following example, we present a simple MDP on which all deterministic strategies fail to satisfy specific constraints, while a straightforward randomizing one succeeds in doing so.

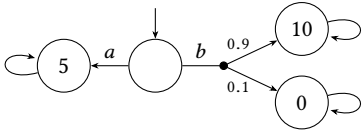
\begin{figure}
	\begin{tikzpicture}[auto,initial text={},node distance=0.5cm]
		\node[initial above,state] (s0) {};
		\node[state,left=0.5cm of s0] (s1) {$5$};
		\node[actionnode,right=of s0] (s0b) {};
		\node[state,above right=0.2cm and 0.75cm of s0b] (s2) {$10$};
		\node[state,below right=0.2cm and 0.75cm of s0b] (s3) {$0$};

		\path[->]
			(s0) edge[directedge,swap]    node[action] {$a$} (s1)
			(s0) edge[actionedge]         node[action] {$b$} (s0b)
			(s0b) edge[probedge]          node[prob] {$0.9$} (s2)
			(s0b) edge[probedge,swap]     node[prob] {$0.1$} (s3)
			(s1) edge[directedge,loop left] (s1)
			(s2) edge[directedge,loop right] (s2)
			(s3) edge[directedge,loop right] (s3);
	\end{tikzpicture}
	\caption{MDP used to show various difficulties of $\CVaR$}
	\label{fig:mdp_cvar_hard}
\end{figure}

\begin{example} \label{ex:weighted_reachability_randomization}
	Consider the MDP outlined in Fig.~\ref{fig:mdp_cvar_hard}.
	The only non-determinism is given by the choice in the initial state $s_0$.
	Hence, any strategy is characterised by the choice in that particular state.
	Let now $\strategy_a$ and $\strategy_b$ denote the deterministic strategies playing $a$ and $b$ in $s_0$, respectively.
	Clearly, $\strategy_a$ achieves an expectation, $\CVaR_{0.05}^{\strategy_a}$, and $\VaR_{0.05}^{\strategy_a}$ of $5$.
	On the other hand, $\strategy_b$ obtains an expectation of $9$ with $\CVaR_{0.05}^{\strategy_b}$ and $\VaR_{0.05}^{\strategy_b}$ equal to $0$.

	Thus, neither strategy satisfies the constraints $\QueryProbVaR = \QueryProbCVaR = 0.05$, $\QueryThreshExp = 6$, and $\QueryThreshCVaR = 2$ (or $\QueryThreshVaR = 5$).
	This is the case even when the strategy has arbitrary (deterministic) memory at its disposal, since in the first step there is nothing to remember.
	Yet, $\strategy = \frac{3}{4}\strategy_a + \frac{1}{4}\strategy_b$ achieves $\Expectation = \frac{3}{4}5 + \frac{1}{4}9 = 6 \geq \QueryThreshExp$, $\CVaR_\QueryProbCVaR = 2.5 \geq \QueryThreshCVaR$, and $\VaR_\QueryProbVaR = 5 \geq \QueryThreshVaR$. \QEE
\end{example}

Hence strategies satisfying an expectation constraint together with either a CVaR \emph{or} VaR constraint may necessarily involve randomization in general.
We prove that (i)~under mild assumptions randomization actually is sufficient, i.e. no memory is required, and (ii)~fixed memory may additionally be required in general.

\begin{definition} \label{def:attraction_assumption}
	Let $\MDP$ be an MDP with target set $\reachSet$ and reward function $\reward$.
	We say that $\MDP$ satisfies the \emph{attraction assumption} if \textbf{A1})~the target set $\reachSet$ is reached almost surely for any strategy, or \textbf{A2})~for all target state $s \in \reachSet$ we have $\reward(s) \geq 0$.
\end{definition}

Essentially, this definition implies that an optimal strategy never remains in a non-target MEC.
This allows us to design memoryless strategies for the weighted reachability problem.

\begin{theorem} \label{stm:mdp_exp_reach_strategies}
	Memoryless randomizing strategies are sufficient for $\QueryMDP_{\QueryObjReach, \QueryDimSingle}^{\{\QueryCritExp, \QueryCritVaR, \QueryCritCVaR\}}$ under the attraction assumption.
\end{theorem}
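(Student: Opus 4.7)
The plan is to reduce the problem to the classical fact that in an MDP, the distribution over the first-visited target state produced by \emph{any} strategy is also achievable by a memoryless randomized one. Given a witness strategy $\sigma$, I exhibit a memoryless randomized $\sigma'$ whose CDF of $\ObjFun^\QueryObjReach$ stochastically dominates that of $\sigma$. By the monotonicity of $\Expectation$, $\CVaR_\QueryProbCVaR$, and $\VaR_\QueryProbVaR$ with respect to stochastic dominance (noted earlier), the lower-bound constraints $\QueryThreshExp$, $\QueryThreshCVaR$, $\QueryThreshVaR$ remain satisfied under $\sigma'$.

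First, I would reduce to the case that $\reachSet$ is reached almost surely. Under \textbf{A1} this holds by assumption. Under \textbf{A2}, construct $\tilde\sigma$ from $\sigma$ by redirecting runs that $\sigma$ would leave in a non-target MEC $M$ towards $\reachSet$ whenever a target is reachable from $M$ (from bottom MECs with no target reachable, $\tilde\sigma$ plays arbitrarily, yielding value $0$ anyway). Since lost-run mass is shifted only to target states $s$ with $\reward(s) \geq 0$, the CDF of $\tilde\sigma$ stochastically dominates that of $\sigma$, so $\tilde\sigma$ also witnesses the query; WLOG, assume $\sigma$ itself reaches $\reachSet$ almost surely.

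Now, under almost-sure reachability, $\ObjFun^\QueryObjReach$ is a discrete random variable whose law depends only on the first-visit vector $p^\sigma = (p^\sigma_s)_{s \in \reachSet}$, with $p^\sigma_s = \Prob^\sigma[\text{first target visited is } s]$. Consequently, $\Expectation^\sigma$, $\CVaR^\sigma_\QueryProbCVaR$, and $\VaR^\sigma_\QueryProbVaR$ all depend only on $p^\sigma$. By the classical LP characterization of MDP reachability, there is a memoryless randomized $\sigma'$ with $p^{\sigma'} = p^\sigma$: letting $y_{s,a}$ denote the expected number of times action $a$ is played in state $s$ under $\sigma$ (finite, since targets are reached a.s.), the memoryless randomized strategy $\sigma'(s, a) = y_{s,a}/\sum_{a' \in \act{s}} y_{s,a'}$ reproduces $p^\sigma$ and thus induces the same CDF as $\sigma$, witnessing the query.

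The main obstacle is the \textbf{A2} reduction: one must carefully argue that the redirection of lost runs produces a CDF that stochastically dominates the original. This boils down to the non-negativity of target rewards in \textbf{A2} combined with the classical fact that from every state where a target is reachable, a memoryless strategy reaches it almost surely; with these, the shifted probability mass lands only on non-negative target values, yielding the required dominance.
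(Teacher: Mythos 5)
Your overall strategy is the same as the paper's: since targets are absorbing, the law of $\ObjFun^\QueryObjReach$ is determined by the first-visit distribution over $\reachSet$, a memoryless randomizing strategy can reproduce (or componentwise dominate) that distribution, and monotonicity of $\Expectation$, $\VaR$, and $\CVaR$ under stochastic dominance finishes the job. The paper obtains the memoryless strategy directly from the cited result of Etessami et al.\ (their Theorem~3.2), which yields $\Prob^{\sigma'}[\LtlEventually s] \geq \Prob^{\sigma}[\LtlEventually s]$ for every target $s$ \emph{without} assuming almost-sure reachability; under \textbf{A1} the inequalities are forced to be equalities because the probabilities sum to one, and under \textbf{A2} the surplus mass sits on non-negative rewards, giving stochastic dominance immediately.

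The one step in your write-up that does not go through as stated is the reduction to almost-sure reachability. You correctly note that there may be bottom MECs from which no target is reachable and let $\tilde\sigma$ ``play arbitrarily'' there, but you then assert ``WLOG, assume $\sigma$ itself reaches $\reachSet$ almost surely,'' which is false precisely in that situation. This matters for your second paragraph: the expected visit counts $y_{s,a}$ are finite only for states that are transient under $\sigma$, and a state inside a MEC that traps positive probability and cannot reach $\reachSet$ is not transient, so $y_{s,a}$ is infinite there, the normalization $\sigma'(s,a) = y_{s,a}/\sum_{a'} y_{s,a'}$ is ill-defined, and the flow argument breaks. The standard patch --- which the paper applies explicitly when setting up its LP --- is to collapse every MEC from which $\reachSet$ is unreachable into an absorbing pseudo-target of reward $0$; this is sound under \textbf{A1} (vacuously) and under \textbf{A2} (runs trapped there have value $0$ anyway, and $0$ is consistent with the non-negativity of target rewards), and after this normalization your argument is correct. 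Alternatively, quoting the $\geq$-form of the memoryless-sufficiency theorem, as the paper does, sidesteps the issue entirely.
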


\begin{proof}
	Fix an MDP $\MDP$ and reward function $\reward$.
	We prove that for any strategy $\strategy$ there exists a memoryless, randomizing strategy $\strategy'$ achieving at least the expectation, VaR, and CVaR of $\strategy$.

	All target states $t_i \in \reachSet$ form single-state MECs, as we assumed that all target states are absorbing.
	Consequently, $\strategy$ naturally induces a distribution over these $s_i$.
	Now, we apply \cite[Theorem 3.2]{DBLP:journals/lmcs/EtessamiKVY08} to obtain a strategy $\strategy'$ with $\Prob^{\strategy'}[\LtlEventually s_i] \geq \Prob^\strategy[\LtlEventually s_i]$ for all $i$.

	With \textbf{A1}), we have $\sum p_i = 1$ and thus $\Prob^{\strategy'}[\LtlEventually t_i] = \Prob^\strategy[\LtlEventually t_i]$.
	Hence, $\strategy'$ obtains the same CDF for the weighted reachability objective.
	Under \textbf{A2}), the CDF $F'$ of strategy $\strategy'$ stochastically dominates the CDF $F$ of the original strategy $\strategy$, concluding the proof.
\end{proof}

\begin{theorem} \label{stm:mdp_exp_reach_strategies_general}
	Two-memory stochastic strategies (i.e.\ with both randomization and stochastic update) are sufficient for $\QueryMDP_{\QueryObjReach, \QueryDimSingle}^{\{\QueryCritExp, \QueryCritVaR, \QueryCritCVaR\}}$.
\end{theorem}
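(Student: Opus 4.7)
The plan is to extend Theorem~\ref{stm:mdp_exp_reach_strategies} to the general case by reducing to a modified MDP in which each non-target MEC can be explicitly \enquote{committed to}. The new phenomenon beyond the attraction setting is that an optimal strategy may want to \emph{settle} in a non-target MEC with positive probability---incurring payoff $0$---rather than reach a target with very negative reward. A memoryless randomized strategy cannot realize such behavior: at a state of a non-target MEC, any positive probability assigned to a leaving action causes the run to leave almost surely. One extra memory bit, flipped stochastically upon entering the MEC, precisely captures the binary \enquote{stay forever vs.\ continue towards a target} decision, and an additional layer of randomization inside each MEC handles the distribution over targets.

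Formally, given any witness strategy $\strategy$, set $p_j = \Pr{\strategy}{\LtlEventually t_j}$ for each target $t_j \in \reachSet$ and $q_i = \Pr{\strategy}{\LtlEventually\LtlAlways M_i}$ for each non-target MEC $M_i$; by Remark~\ref{rem:mec_almost_sure}, $\sum_j p_j + \sum_i q_i = 1$. Construct $\MDP^*$ by adjoining, for each non-target MEC $M_i$, a fresh absorbing target $\hat{t}_i$ with reward $0$, and at every state $s \in M_i$ a new action $\mathrm{commit}_{i,s}$ that transitions deterministically to $\hat{t}_i$. The strategy that follows $\strategy$ and invokes $\mathrm{commit}_{i,s}$ upon settling in $M_i$ realizes the probability vector $(p_j, q_i)$ on the extended target set $\reachSet \cup \{\hat{t}_1, \dots\}$. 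By the multi-objective reachability result \cite[Theorem~3.2]{DBLP:journals/lmcs/EtessamiKVY08}, a \emph{memoryless} randomized strategy $\strategy_r$ in $\MDP^*$ attains probabilities at least as large; since the events are disjoint and sum to at most $1$, while our target sum equals $1$, each inequality must be tight.

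It remains to translate $\strategy_r$ back to a two-memory stochastic strategy $\strategy'$ on $\MDP$ with memory $\{m_1, m_2\}$. State $m_1$ mimics $\strategy_r$ on original actions; state $m_2$ plays a fixed memoryless \emph{stay} strategy $\pi_i$---e.g.\ uniform over the internal actions $B_i$ of the surrounding MEC $M_i$, well-defined because MECs are state-disjoint---which keeps the run inside $M_i$ forever. Writing $c(s) := \strategy_r(s, \mathrm{commit}_{i,s})$ for $s \in M_i$, we set $\strategy_n(s, m_1, a) := \strategy_r(s, a) + c(s) \cdot \pi_i(s, a)$ for each $a \in \AvAct(s)$ and let $\strategy_u(a, s', m_1, m_2)$ be the Bayesian posterior $c(s) \pi_i(s, a) / \strategy_n(s, m_1, a)$ that $a$ was sampled from the stay component. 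The initial distribution $\alpha$ mirrors this for $\initstate$, and memory $m_2$ is absorbing. By construction, the joint distribution of terminal outcomes under $\strategy'$ on $\MDP$ coincides with that under $\strategy_r$ on $\MDP^*$ (identifying each $\hat{t}_i$ with \enquote{trapped in $M_i$ with payoff $0$}), so the CDF of $\ObjFun^\QueryObjReach$---and hence each $\Expectation$, $\VaR_{\QueryProbVaR}$, and $\CVaR_{\QueryProbCVaR}$ value---is preserved. The main obstacle is precisely this translation: in the framework memory is updated \emph{after} an action, yet the commit decision in $\MDP^*$ is logically made \emph{before} one; the mixture-plus-posterior construction reconciles the mismatch, with degenerate cases (e.g.\ $c(s) = 1$ or $s$ outside any MEC) handled as limits of the formula.
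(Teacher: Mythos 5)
Your proof is correct in substance but takes a genuinely different route from the paper's. The paper disposes of this theorem in one line: since target states are absorbing, setting $\reward(s)=0$ on $\States\setminus\reachSet$ turns the weighted-reachability payoff into a mean payoff, and the result then follows from Thm.~\ref{stm:mdp_exp_mean_strategies} (whose two-memory ``search, then stochastically switch to remain'' construction is imported from the multi-mean-payoff literature). You instead stay entirely within reachability: you make the ``settle in a non-target MEC for payoff $0$'' option explicit via commit actions to fresh zero-reward targets, invoke the multi-objective reachability result of Etessami et al.\ on the augmented MDP, and then simulate the commit action on the original MDP with a latent coin flip realized through a mixture next-move function and a Bayesian-posterior stochastic memory update. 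This buys an exactly matching CDF (rather than stochastic dominance) and a fully explicit witness strategy, at the cost of re-deriving by hand the special case of the machinery the paper reuses; your tightness argument (disjoint events with target probabilities summing to $1$) is exactly the right way to make the Etessami et al.\ inequality usable in the presence of negative rewards. The one step you should patch is the claim that ``the strategy that follows $\strategy$ and invokes commit upon settling in $M_i$'' realizes the vector $(p_j,q_i)$: settling is a tail event, so no strategy can trigger on it at a finite time. What you actually need is only that $(p_j,q_i)$ is \emph{achievable} in $\MDP^*$, which follows either from the closedness of the achievable set together with finite-horizon approximations of $\Prob^\strategy[\LtlEventually\LtlAlways M_i]$, or directly from the flow-LP characterization of achievable reachability vectors (redirect the recurrent flow entering $M_i$ into $\hat t_i$). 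With that sentence replaced, the argument is sound.
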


The proof is a simple application of the following Thm.~\ref{stm:mdp_exp_mean_strategies}, as weighted reachability is a special case of mean payoff.
Together with an example for the lower bound it can be found in Sec~\ref{sec:app:cvar:weighted_reach}.

\begin{figure}
	\begin{enumerate}
		\renewcommand{\labelenumi}{(\arabic{enumi}) }
		\item \label{fig:reach_lp:non-neg}
		All variables $y_a$, $x_s$, $\underline{x}_s$ are non-negative.
		\item \label{fig:reach_lp:flow}
		Transient flow for $s \in \States$:
		\begin{equation*}
			\mathbbm{1}_{s_0}(s) + {\sum}_{a \in \Actions} y_a \Trans(a, s) = {\sum}_{a \in \AvAct(s)} y_a + x_s
		\end{equation*}
		\item \label{fig:reach_lp:recurrent_switch}
		Switching to recurrent behaviour:
		\begin{equation*}
			{\sum}_{s \in \reachSet} x_s = 1
		\end{equation*}
		\item \label{fig:reach_lp:var_split}
		$\VaR$-consistent split:
		\begin{align*}
			\underline{x}_s & = x_s \text{ for $s \in \reachSet_<$} & \underline{x}_s & \leq x_s \text{ for $s \in \reachSet_=$}
		\end{align*}
		\item \label{fig:reach_lp:probability_split}
		Probability-consistent split:
		\begin{equation*}
			{\sum}_{s \in \reachSet_\leq} \underline{x}_s = \QueryProbCVaR
		\end{equation*}
		\item \label{fig:reach_lp:satisfaction}
		CVaR and expectation satisfaction:
		\begin{align*}
			{\sum}_{s \in \reachSet_\leq} \underline{x}_s \cdot \reward(s) & \geq \QueryProbCVaR \cdot \QueryThreshCVaR &
			{\sum}_{s \in \reachSet} x_s \cdot \reward(s) & \geq \QueryThreshExp
		\end{align*}
	\end{enumerate} %
	\caption{LP used to decide weighted reachability queries given a guess $t$ of $\VaR_\QueryProbCVaR$.
		$\reachSet_\sim := \{s \in \reachSet \mid \reward(s) \sim t\}$, $\sim \in \{<, =, \leq\}$.} %
	\label{fig:reach_lp} %
\end{figure} %
Inspired by~\cite[Fig.~3]{DBLP:conf/lics/ChatterjeeKK15}, we use the optimality result from Thm.~\ref{stm:mdp_exp_reach_strategies} to derive a decision procedure for weighted reachability queries under the attraction assumptions based on the LP in Fig.~\ref{fig:reach_lp}.
%There, given a value $t$, we use $\reachSet_\sim$ to refer to the set $\reachSet_\sim = \{s \in \reachSet \mid \reward(s) \sim t\}$ for $\sim \in \{<, =, \leq\}$.

To simplify the LP, we make further assumptions -- see Sec~\ref{sec:app:reach_lp_assumptions} for details.
First, all MECs, including non-target ones, consist of a single state. % and we identify each MEC with the corresponding state.
Second, all MECs from which $\reachSet$ is not reachable are considered part of $\reachSet$ and have $\reward = 0$ (similar to the \enquote{cleaned-up MDP} from \cite{DBLP:journals/lmcs/EtessamiKVY08}).
Finally, we assume that the quantile-probabilities are equal, i.e.\ $\QueryProbCVaR = \QueryProbVaR$.
The LP can easily be extended to account for different values by duplicating the $\underline{x}_s$ variables and adding according constraints.

The central idea is to characterize randomizing strategies by the \enquote{flow} they achieve.
To this end, Equality~\eqref{fig:reach_lp:flow} essentially models Kirchhoff's law, i.e.\ inflow and outflow of a state have to be equal.
In particular, $y_a$ expresses the transient flow of the strategy as the expected total number of uses of action $a$.
Similarly, $x_s$ models the recurrent flow, which under our absorption assumption equals the probability of reaching $s$.
Equality~\eqref{fig:reach_lp:recurrent_switch} ensures that all transient behaviour eventually changes into recurrent one.

In order to deal with our query constraints, Constraints~\eqref{fig:reach_lp:var_split} and \eqref{fig:reach_lp:probability_split} extract the worst $\QueryProbCVaR$ fraction of the recurrent flow, ensuring that the $\VaR_\QueryProbCVaR$ is at least $t$.
Note that equality is not guaranteed by the LP; if $\underline{x}_s = x_s$ for all $s \in \reachSet_\leq$, we have $\VaR_\QueryProbCVaR > t$.
Finally, Inequality~\eqref{fig:reach_lp:satisfaction} enforces satisfaction of the constraints.

\begin{theorem} \label{stm:mdp_exp_reach_lp_correct}
	Let $\MDP$ be an MDP with target states $\reachSet$ and reward function $\reward$, satisfying the attraction assumption.
	Fix the constraint probability $\QueryProbCVaR \in (0, 1)$ and thresholds $\QueryThreshExp, \QueryThreshCVaR \in \Rationals$.
	Then, we have that
	\begin{enumerate}
		\item
		for any strategy $\strategy$ satisfying the constraints, there is a $t \in \reward(\States)$ such that the LP in Fig.~\ref{fig:reach_lp} is feasible, and
		\item \label{stm:mdp_exp_reach_lp_correct:strategy}
		for any threshold $t \in \reward(\States)$, a solution of the LP in Fig.~\ref{fig:reach_lp} induces a memoryless, randomizing strategy $\strategy$ satisfying the constraints and $\VaR^\strategy_\QueryProbCVaR \geq t$.
	\end{enumerate}
\end{theorem}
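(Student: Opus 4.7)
My plan is to exhibit a correspondence between memoryless randomizing strategies and feasible LP solutions via the standard expected-flow encoding, then verify that the auxiliary CVaR-specific variables and constraints behave as intended. By Theorem~\ref{stm:mdp_exp_reach_strategies}, it suffices to restrict to memoryless randomizing strategies. Throughout I use that the weighted reachability distribution is discrete with finite support, so $F_X$ is a step function whose jumps lie in $\reward(\States)$, and hence $\VaR^\strategy_\QueryProbCVaR \in \reward(\States)$.

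For direction~(1), given a witnessing memoryless strategy $\strategy$, set $t := \VaR^\strategy_\QueryProbCVaR$. Define $y_a$ as the expected number of uses of action $a$ and $x_s := \Prob^\strategy[\LtlEventually s]$ for $s \in \reachSet$. Equation~\eqref{fig:reach_lp:flow} is Kirchhoff's law, and Equation~\eqref{fig:reach_lp:recurrent_switch} follows from the attraction assumption ensuring $\reachSet$ is reached almost surely. For the auxiliary variables, set $\underline{x}_s = x_s$ on $\reachSet_<$ and distribute the residual mass $\QueryProbCVaR - \Prob^\strategy[X < t]$ over $\reachSet_=$ in any feasible way; this is possible since $\sum_{s \in \reachSet_=} x_s = F(t) - F(t{-}) \geq \QueryProbCVaR - F(t{-})$, an inequality implied by $t = \VaR^\strategy_\QueryProbCVaR$ together with right-continuity of $F$. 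Equation~\eqref{fig:reach_lp:probability_split} then holds by construction, and the CVaR side of Constraint~\eqref{fig:reach_lp:satisfaction} follows from $\sum_{s \in \reachSet_\leq} \underline{x}_s \reward(s) = \Prob^\strategy[X < t] \cdot \Expectation^\strategy[X \mid X < t] + (\QueryProbCVaR - \Prob^\strategy[X < t]) \cdot t = \QueryProbCVaR \cdot \CVaR^\strategy_\QueryProbCVaR$, matching the definition of CVaR exactly when $t$ is the VaR. The expectation inequality is immediate.

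For direction~(2), from a feasible LP solution build the memoryless strategy $\strategy_n(s)(a) := y_a / \sum_{a' \in \AvAct(s)} y_{a'}$, with arbitrary choice on states where the denominator vanishes (which are unreachable). A standard flow argument (cf.~\cite{DBLP:conf/lics/ChatterjeeKK15}) combined with Equation~\eqref{fig:reach_lp:flow} and the attraction assumption shows that $\Prob^\strategy[\LtlEventually s] = x_s$ for $s \in \reachSet$. The expectation constraint is then immediate from Constraint~\eqref{fig:reach_lp:satisfaction}, and $\VaR^\strategy_\QueryProbCVaR \geq t$ follows from $\Prob^\strategy[X < t] = \sum_{s \in \reachSet_<} x_s = \sum_{s \in \reachSet_<} \underline{x}_s \leq \QueryProbCVaR$.

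The main obstacle is showing that the LP's weighted sum is actually a lower bound on the true $\CVaR^\strategy_\QueryProbCVaR$: the LP places the residual $\QueryProbCVaR - \Prob^\strategy[X<t]$ mass at reward level $t$, whereas the actual CVaR spreads mass over states with rewards in $[t, v]$ where $v := \VaR^\strategy_\QueryProbCVaR \geq t$. Writing both formulae out and using $\reward(s) \geq t$ on $\{s : t \leq \reward(s) < v\}$ yields
\begin{equation*}
	\CVaR^\strategy_\QueryProbCVaR - \tfrac{1}{\QueryProbCVaR} \sum_{s \in \reachSet_\leq} \underline{x}_s \reward(s) \geq \tfrac{v - t}{\QueryProbCVaR}\bigl(\QueryProbCVaR - \Prob^\strategy[X < v]\bigr) \geq 0,
\end{equation*}
where the last inequality is $\Prob^\strategy[X<v] \leq \QueryProbCVaR$ from the definition of $v$. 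Thus $\CVaR^\strategy_\QueryProbCVaR \geq \QueryThreshCVaR$, completing the proof.
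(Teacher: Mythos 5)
Your proof is correct and follows essentially the same route as the paper's: reduce to memoryless strategies via Thm.~\ref{stm:mdp_exp_reach_strategies}, use the standard flow correspondence of \cite{DBLP:journals/lmcs/EtessamiKVY08} for the $y_a, x_s$ variables, and realize the split variables $\underline{x}_s$ by placing the residual mass $\QueryProbCVaR - \Prob^\strategy[X<t]$ at the guessed threshold. The only (harmless) deviations are that in direction~(1) you always take $t = \VaR^\strategy_\QueryProbCVaR$ where the paper sometimes shifts $t$ down to the next smaller reward, and that in direction~(2) you establish the sufficient inequality $\CVaR^\strategy_\QueryProbCVaR \geq \tfrac{1}{\QueryProbCVaR}\sum_{s\in\reachSet_\leq}\underline{x}_s\cdot\reward(s)$ by a direct estimate, whereas the paper derives exact equality via a case distinction on whether $\underline{x}_s = x_s$ holds on all of $\reachSet_=$.
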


\begin{proof}
	First, we prove for a strategy $\strategy$ satisfying the constraints that there exists a $t \in \reward(S)$ such that the LP is feasible. 
	By Thm.~\ref{stm:mdp_exp_reach_strategies}, we may assume that $\strategy$ is a memoryless randomizing strategy.
	From \cite[Theorem 3.2]{DBLP:journals/lmcs/EtessamiKVY08}, we get an assignment to the $y_a$'s and $x_s$'s satisfying Equalities~\eqref{fig:reach_lp:non-neg}, \eqref{fig:reach_lp:flow}, and~\eqref{fig:reach_lp:recurrent_switch} such that $\Prob^\strategy[\LtlEventually s] = x_s$ for all target states $s \in \reachSet$.
	Further, let $v = \VaR^\strategy_\QueryProbCVaR$ be the value-at-risk of the strategy.
	By definition of $\VaR$, we have that $\Prob^\strategy[X < v] \leq p$.

	Assume for now that $\Prob^\strategy[X < v] = p$, i.e. the probability of obtaining a value strictly smaller than $v$ is exactly $p$.
	In this case, choose $t$ to be the next smaller reward, i.e.\ $t = \max \{\reward(s) < v\}$.
	We set $\underline{x}_s = x_s$ for all $s \in \reachSet_\leq$, satisfying Constraints~\eqref{fig:reach_lp:var_split} and \eqref{fig:reach_lp:probability_split}.

	Otherwise, we have $\Prob^\strategy[X < v] < p$.
	Now, some non-zero fraction of the probability mass at $v$ contributes to the $\CVaR$.
	Again, we set the values for $\underline{x}_s$ according to Constraint~\eqref{fig:reach_lp:var_split}.
	The only degree of freedom are the values of $\underline{x}_s$ where $\reward(s) = t$.
	There, we assign the values so that $\sum_{s \in \reachSet_=} \underline{x}_s = p - \sum_{s \in \reachSet_<} \underline{x}_s$, satisfying Equality~\eqref{fig:reach_lp:probability_split}.

	It remains to check Inequality~\eqref{fig:reach_lp:satisfaction}.
	For expectation, we have $\sum_{s \in \reachSet} x_s \cdot \reward(s) = \sum_{s \in \reachSet} \Prob^\strategy[\LtlEventually s] \cdot \reward(s) = \Expectation^\strategy[\ObjFun^\QueryObjReach] \geq \QueryThreshExp$.
	For CVaR, notice that, due to the already proven Constraints~\eqref{fig:reach_lp:var_split} and \eqref{fig:reach_lp:probability_split}, the side of Inequality~\eqref{fig:reach_lp:satisfaction} is equal to $\CVaR_\QueryProbCVaR^\strategy$ and thus at least $\QueryThreshCVaR$.

	Second, we prove that a solution to the LP induces the desired strategy $\strategy$.
	Again by \cite[Theorem 3.2]{DBLP:journals/lmcs/EtessamiKVY08}, we get a memoryless randomizing strategy $\strategy$ such that $\Prob^\strategy[\LtlEventually s] = x_s$ for all states $s \in \reachSet$.
	Then $\Expectation^\strategy[\ObjFun^\QueryObjReach] = \sum_{s \in \reachSet} \Prob^\strategy[\LtlEventually s] \cdot \reward(s) = \sum_{s \in \reachSet} x_s \cdot \reward(s) \geq \QueryThreshExp$.
	Further,
	\begin{equation*}
		\CVaR_p(\ObjFun^\QueryObjReach) = \frac{1}{\QueryProbCVaR} \left( {\sum}_{s : \reward(s) < v} x_s \cdot \reward(s) + (\QueryProbCVaR - {\sum}_{s : \reward(s) < v} x_s) \cdot v \right)
	\end{equation*}
	by definition.
	Now, we make a case distinction on $\underline{x}_s = x_s$ for all $s \in \reachSet_=$.
	If this is true, we have $v = \VaR_\QueryProbCVaR^\strategy = \min\{r \in \reward(S) \mid r > t\}$, but $\Prob^\strategy[X < v] = p$.
	Consequently, $\reachSet_\leq = \{s \in \reachSet : \reward(s) < v\}$ and ${\sum}_{s : \reward(s) < v} x_s = p$.
	Otherwise, we have $v = t$ and consequently $\reachSet_< = \{s \mid \reward(s) < v\}$.
	Inserting in the above equation immediately gives the result $\CVaR_p(\ObjFun^\QueryObjReach) = \tfrac{1}{\QueryProbCVaR} {\sum}_{s \in \reachSet_\leq} \reward(s) \cdot \underline{x}_s$.
%		& = \tfrac{1}{\QueryProbCVaR} \left( {\sum}_{s \in \reachSet_<} \reward(s) \cdot \underline{x}_s + t \cdot {\sum}_{s \in \reachSet_=} \underline{x}_s \right) \\
%		& = \tfrac{1}{\QueryProbCVaR} {\sum}_{s \in \reachSet_<} \reward(s) \cdot \underline{x}_s \qedhere
\end{proof}

The linear program requires to know the $\VaR_p^\strategy$ beforehand, which in turn clearly depends on the chosen strategy.
Yet, there are only linearly many values the random variable $\ObjFun^\QueryObjReach$ attains. %and its distribution is discrete for any strategy.
Thus we can simply try to find a solution for all potential values of $\VaR_p^\strategy$, i.e.\ $\{r \in \reward(\States)\}$, yielding a polynomial time solution.

\begin{corollary} \label{stm:mdp_exp_reach_ptime}
	$\QueryMDP^{\{\QueryCritExp, \QueryCritVaR, \QueryCritCVaR\}}_{\QueryObjReach,\QueryDimSingle}$ is in P.
\end{corollary}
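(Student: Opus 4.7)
The plan is to combine Theorem \ref{stm:mdp_exp_reach_lp_correct} with an enumeration over candidate values of the $\VaR$-threshold $t$. Because $\ObjFun^\QueryObjReach$ takes only finitely many distinct values — those in $\reward(\reachSet) \cup \{0\}$, of size at most $|\States|+1$ — any achievable $\VaR^\strategy_{\QueryProbCVaR}$ must lie in this finite set: the step function $F_{\ObjFun^\QueryObjReach}$ can only exceed $\QueryProbCVaR$ at one of its jump points, hence $\sup\{r : F_{\ObjFun^\QueryObjReach}(r) \leq \QueryProbCVaR\}$ coincides with an actual value of the variable. The decision procedure therefore iterates over every $t$ in this set, constructs the LP of Figure \ref{fig:reach_lp}, and answers ``yes'' iff some instance is feasible. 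Each LP has polynomially many variables and constraints and is solvable in polynomial time by, e.g., the ellipsoid method; correctness in both directions is exactly the content of Theorem \ref{stm:mdp_exp_reach_lp_correct}.

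Two technical points remain. First, Theorem \ref{stm:mdp_exp_reach_lp_correct} assumes the attraction assumption, which we establish by preprocessing $\MDP$: compute the MEC decomposition (polynomial, Remark \ref{rem:mec_decomposition_ptime}) and collapse every MEC from which $\reachSet$ is unreachable into a single absorbing pseudo-target of reward $0$. A run in the modified MDP has value $0$ precisely when the corresponding run in the original MDP fails to reach $\reachSet$, so the CDF of $\ObjFun^\QueryObjReach$ is preserved under every strategy, while assumption A1 now holds. Second, a VaR-constraint $(\QueryProbVaR, \QueryThreshVaR)$ must be enforced in addition to the LP's implicit guarantee $\VaR^\strategy_{\QueryProbCVaR} \geq t$; when $\QueryProbVaR = \QueryProbCVaR$ this is immediate by restricting the enumeration to $t \geq \QueryThreshVaR$, and otherwise one duplicates the variables $\underline{x}_s$ with a separate probability-split for $\QueryProbVaR$ (as indicated below Figure \ref{fig:reach_lp}) and enumerates a second threshold alongside $t$, keeping the total number of LPs polynomial.

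I do not expect a deep obstacle here; the substance already sits in Theorem \ref{stm:mdp_exp_reach_lp_correct}, and this corollary only packages it by observing that exhaustive enumeration over the finite support of $\ObjFun^\QueryObjReach$ is sufficient. The point that deserves the most care is verifying that the attraction-assumption preprocessing preserves the CDF under every strategy; this is immediate from Remark \ref{rem:mec_almost_sure}, since almost every run of any strategy eventually remains in a single MEC, so collapsing MECs that cannot reach $\reachSet$ leaves the reachability behaviour of the genuine target states untouched.
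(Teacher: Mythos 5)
Your first part --- enumerating the polynomially many candidate thresholds $t \in \reward(\States)$, solving the LP of Fig.~\ref{fig:reach_lp} for each, and handling a separate $\QueryProbVaR \neq \QueryProbCVaR$ by duplicating the $\underline{x}_s$ variables --- is exactly the paper's argument for the case where the attraction assumption holds, and it is correct.

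The gap is in your treatment of the general case. Collapsing only those MECs from which $\reachSet$ is unreachable does \emph{not} establish assumption \textbf{A1}: a strategy may remain forever, with positive probability, inside a non-target MEC from which $\reachSet$ \emph{is} reachable, and such a MEC is untouched by your preprocessing. When target rewards can be negative this is not a pathology but a necessity --- in the MDP of Fig.~\ref{fig:reach_general_memory} with $\QueryThreshExp = 1$, $\QueryThreshCVaR = -3$, $\QueryProbCVaR = 0.05$, every satisfying strategy must stay in the self-loop MEC with positive probability, memoryless randomizing strategies do not suffice, and the LP of Fig.~\ref{fig:reach_lp} is infeasible for every guess of $t$ because Equality~\eqref{fig:reach_lp:recurrent_switch} forces all recurrent flow into $\reachSet$ and the flow equations then pin the reachability distribution to $(0.9, 0.1)$. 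Your procedure would therefore answer \enquote{no} on a satisfiable instance; Thm.~\ref{stm:mdp_exp_reach_lp_correct} genuinely needs the attraction assumption and cannot be invoked after your preprocessing. The paper closes this case differently: since target states are absorbing, weighted reachability is a special case of mean payoff (assign reward $0$ to all non-target states), so the general case follows from the two-memory sufficiency of Thm.~\ref{stm:mdp_exp_reach_strategies_general} and the polynomial-time result for single-dimensional mean payoff, Cor.~\ref{stm:mdp_mean_ptime}. To repair your route directly you would have to let each non-target MEC absorb recurrent flow with reward $0$ (relaxing Equality~\eqref{fig:reach_lp:recurrent_switch} accordingly) and re-establish strategy sufficiency for that relaxed LP, which is essentially the mean-payoff machinery anyway.
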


\begin{proof}
	Under the attraction assumption, this follows directly from Thm.~\ref{stm:mdp_exp_reach_lp_correct}.
	In general, the reduction to mean payoff used by Thm.~\ref{stm:mdp_exp_reach_strategies_general} and the respective result from Cor.~\ref{stm:mdp_mean_ptime} show the result.
	% and that there are $\cardinality{\reward(\States)} \leq \cardinality{\States}$ possible values for $\VaR_p^\strategy(\ObjFun^\QueryObjReach)$.
	%If there are multiple constraints, we modify the LP by adding separate \enquote{below} and \enquote{above} variables for each~\cite{APPENDIX}.
\end{proof}

\subsection{Mean payoff} \label{sec:single:mean}

In this section, we investigate the case of $\QueryObj = \QueryObjMean$.
Again, the construction for MC is considerably simple, yet instructive for the following MDP case.
\begin{theorem}
	$\QueryMC^{\{\QueryCritExp, \QueryCritVaR, \QueryCritCVaR\}}_{\QueryObjMean,\QueryDimSingle}$ is in P.
\end{theorem}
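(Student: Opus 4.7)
The plan is to exploit the fact that on a finite Markov chain the mean payoff random variable $\ObjFun^\QueryObjMean$ is almost surely constant on each BSCC, hence in fact a \emph{discrete} random variable with a CDF supported on at most $|\BSCCs|$ values. Once we have that discrete CDF in hand, the query constraints can be checked by a straightforward arithmetic evaluation, following the same pattern as the weighted-reachability proof of Thm.~\ref{stm:mc_single_ptime}.

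In more detail, I would proceed as follows. First, compute the BSCC decomposition $\BSCCs = \{B_1, \dots, B_n\}$ in polynomial time (standard Tarjan-style analysis). Second, for each $B_i$ compute its (unique) stationary distribution $\pi_i$ by solving the linear system $\pi_i = \pi_i \cdot \trans\!\restriction_{B_i}$ with $\sum_{s \in B_i} \pi_i(s) = 1$, which is polynomial. Let $r_i := \sum_{s \in B_i} \pi_i(s) \cdot \reward(s)$; by the ergodic theorem for finite Markov chains, conditioned on $\LtlEventually \LtlAlways B_i$ the value $\ObjFun^\QueryObjMean$ equals $r_i$ almost surely. Third, compute $p_i := \Prob[\LtlEventually \LtlAlways B_i] = \Prob[\LtlEventually B_i]$ (the latter by another polynomial-size linear system, as in~\cite{Puterman-book}), exploiting Remark~\ref{rem:mec_almost_sure}, which in the MC case gives $\sum_i p_i = 1$.

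Fourth, assemble the CDF: $F_{\ObjFun^\QueryObjMean}(v) = \sum_{i : r_i \leq v} p_i$, which has at most $n$ jumps. From this discrete CDF, each constraint is immediate to check: $\Expectation[\ObjFun^\QueryObjMean] = \sum_i p_i r_i$; to evaluate $\VaR_{\QueryProbVaR}$, sort the $r_i$'s in increasing order and take $\VaR_{\QueryProbVaR} = \sup\{r_i \mid F_{\ObjFun^\QueryObjMean}(r_i) \leq \QueryProbVaR\}$; for $\CVaR_{\QueryProbCVaR}$ plug $v = \VaR_{\QueryProbCVaR}$ into the general definition from Section~3, taking care of the possibly discontinuous mass at $v$ via the correction term $(\QueryProbCVaR - \Prob[X < v]) \cdot v$. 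All three evaluations are polynomial in the number of BSCCs and thus in $|\MC|$, and each constraint is then a single numerical comparison.

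There is essentially no combinatorial obstacle here: the only subtle point is the CVaR computation at a discontinuity of $F_{\ObjFun^\QueryObjMean}$, but since we have already given the general definition that correctly handles this case, it is a direct substitution. The crucial structural fact making everything work is that, unlike weighted reachability where the distribution of the payoff was already discrete by fiat (the target states were absorbing), mean payoff on a finite MC is discrete because all long-run behaviour is concentrated in the BSCCs, each contributing a single deterministic value to $\ObjFun^\QueryObjMean$.
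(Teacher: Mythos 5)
Your approach is essentially the paper's: the paper likewise computes the per-BSCC mean payoffs $r_i$ and the absorption probabilities, observes that $\ObjFun^\QueryObjMean$ is therefore discrete with $\Prob[\ObjFun^\QueryObjMean = c] = \sum_{B_i : r_i = c}\Prob[\LtlEventually B_i]$, and then decides the query on the resulting discrete CDF (packaged there as a reduction to the weighted-reachability case of Thm.~\ref{stm:mc_single_ptime} by collapsing each BSCC into a representative target state $b_i$ with $\reward'(b_i)=r_i$). Your inline presentation of the same computation is equivalent, and the ergodic-theorem justification of almost-sure constancy per BSCC is the right structural fact.

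One concrete slip in the evaluation step: the formula $\VaR_{\QueryProbVaR} = \sup\{r_i \mid F_{\ObjFun^\QueryObjMean}(r_i)\le\QueryProbVaR\}$, with the supremum restricted to the atoms $r_i$, does not agree with the paper's definition $\sup\{r\in\Reals\mid F_X(r)\le p\}$. For a step CDF the set $\{r\in\Reals \mid F(r)\le p\}$ is an interval extending up to (but not including) the smallest atom whose CDF value exceeds $p$, so the correct value is $\min\{r_i \mid F(r_i) > p\}$, i.e.\ one atom \emph{above} what your formula returns. For instance, for the two-point distribution taking value $0$ with probability $0.1$ and $10$ with probability $0.9$, and $p=0.05$, your formula yields $\sup\emptyset=-\infty$, whereas $\VaR_{0.05}=0$ (as used in Ex.~\ref{ex:weighted_reachability_randomization}). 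Since you then substitute this $v$ into the CVaR definition, the error would propagate to the CVaR check as well. This is a local fix and does not affect the overall structure or the polynomial-time claim.
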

\begin{proof}[Proof sketch]
	For each BSCC $B_i$, we obtain its expected mean payoff $r_i = \Expectation[\ObjFun^\QueryObjMean \mid B_i]$ through, e.g., a linear equation system~\cite{Puterman-book}.
	Almost all runs in $B_i$ achieve this mean payoff and thus the corresponding random variable is discrete.
	We reduce the problem to weighted reachability by using the known reformulation
	\begin{equation*}
		\Prob[\ObjFun^\QueryObjMean = c] = {\sum}_{B_i : r_i = c} P[\LtlEventually B_i].
	\end{equation*}

	We replace each of these BSCCs by a representative $b_i$ to obtain $\MC'$.
	Define the set of target states $\reachSet = \{b_i\}$ and the reachability reward function $\reward'(b_i) = r_i$.
	By applying the approach of Thm.~\ref{stm:mc_single_ptime}, we obtain the expectation, $\VaR$, and $\CVaR$ for reachability in $\MC'$ which by construction coincides with the respective values for mean payoff in $\MC$.
\end{proof}

For the MDP case, recall that simple expectation maximization of mean payoff can be reduced to weighted reachability~\cite{DBLP:conf/cav/AshokCDKM17} and deterministic, memoryless strategies are optimal~\cite{Puterman-book}.
Yet, solving a conjunctive query involving either VaR or CVaR needs more powerful strategies than in the weighted reachability case of Thm.~\ref{stm:mdp_exp_reach_strategies}.
Nevertheless, we show how to decide these queries in P.

\paragraph{Randomization and memory is necessary for mean payoff.}
A simple modification of the MDP in Fig.~\ref{fig:mdp_cvar_hard} yields an MDP where both randomization and memory is required to satisfy the constraints of the following example.

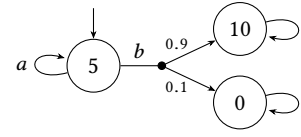
\begin{figure}
	\begin{tikzpicture}[auto,initial text={},node distance=0.5cm]
		\node[initial above,state] (s0) {$5$};
		\node[actionnode,right=of s0] (s0b) {};
		\node[state,above right=0.2cm and 0.75cm of s0b] (s2) {$10$};
		\node[state,below right=0.2cm and 0.75cm of s0b] (s3) {$0$};

		\path[->]
			(s0) edge[directedge,loop left] node[action] {$a$} (s0)
			(s0) edge[actionedge]         node[action] {$b$} (s0b)
			(s0b) edge[probedge]          node[prob] {$0.9$} (s2)
			(s0b) edge[probedge,swap]     node[prob] {$0.1$} (s3)
			(s2) edge[directedge,loop right] (s2)
			(s3) edge[directedge,loop right] (s3);
	\end{tikzpicture}
	\caption{Memory is necessary for mean payoff queries}
	\label{fig:mean_payoff_memory}
\end{figure} %

\begin{example} \label{ex:mdp_exp_reach_memory}
	Consider the MDP presented in Fig.~\ref{fig:mean_payoff_memory}.
	There, the same constraints as before, i.e.\ $\QueryProbVaR = \QueryProbCVaR = 0.05$, $\QueryThreshExp = 6$, and $\QueryThreshCVaR = 2$ (or $\QueryThreshVaR = 5$), can only be satisfied by strategies with both memory and randomization.
	Clearly, a pure strategy can only satisfy either of the two constraints again.
	But now a memoryless randomizing strategy also is insufficient, too, since any non-zero probability on action $b$ leads to almost all runs ending up on the right side of the MDP, hence yielding a $\CVaR_\QueryProbCVaR$ and $\VaR_\QueryProbVaR$ of $0$.
	Instead, a stochastic strategy with $\Memory = \{a, b\}$ can simply choose $\alpha = \{a \mapsto \frac{3}{4}, b \mapsto \frac{1}{4}\}$ and play the corresponding action indefinitely, satisfying the constraints. \QEE
\end{example}

We prove that this bound actually is tight, i.e.\ that, given stochastic memory update, two memory elements %and a single update
are sufficient.

\begin{theorem} \label{stm:mdp_exp_mean_strategies}
	Two-memory stochastic strategies (i.e.\ with both randomization and stochastic update) are sufficient for $\QueryMDP_{\QueryObjMean,\QueryDimSingle}^{\{\QueryCritExp, \QueryCritVaR, \QueryCritCVaR\}}$.
\end{theorem}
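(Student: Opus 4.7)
The plan is to reduce the mean-payoff query to a weighted-reachability query by contracting each MEC into a small gadget of absorbing target states, invoke the memoryless witness result of Thm.~\ref{stm:mdp_exp_reach_strategies} on the auxiliary MDP, and argue that the resulting witness can be chosen with support of size two so that it transports back to a 2-memory stochastic strategy on $\MDP$. Concretely, I would compute the MEC decomposition of $\MDP$ and, for each MEC $M$, enumerate the finite set $\mathcal{R}_M=\{r_M^1,\dots,r_M^{k_M}\}$ of mean-payoff values attained almost surely by some memoryless deterministic strategy $\strategy_{M,i}$ confined to $M$. Then I would construct $\MDP'$ by keeping the transient portion of $\MDP$ intact and replacing each MEC $M$ by a gateway state $g_M$ with $k_M$ deterministic commit actions, each leading to a fresh absorbing target $t_M^i$ of reward $r_M^i$; actions of $\MDP$ that previously entered $M$ are redirected to $g_M$. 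Because every target is absorbing and reached almost surely, $\MDP'$ satisfies attraction assumption \textbf{A1}.

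A standard simulation argument then shows that the family of CDFs of mean-payoff achievable in $\MDP$ coincides with the family of CDFs of weighted reachability achievable in $\MDP'$: a strategy in $\MDP$ induces inside each MEC a distribution over $\mathcal{R}_M$ that can be mimicked by the matching commit actions in $\MDP'$, and conversely a strategy on $\MDP'$ can be simulated on $\MDP$ by committing to $\strategy_{M,i}$ once MEC $M$ is entered. By Thm.~\ref{stm:mdp_exp_reach_strategies}, a memoryless randomizing strategy $\strategy'$ on $\MDP'$ witnesses the reachability version of the query. I would then view the feasible strategies on $\MDP'$ as a polyhedron via the LP of Fig.~\ref{fig:reach_lp}; within the slice where the $\VaR$ position $t$ is fixed, expectation and $\CVaR_\QueryProbCVaR$ become linear functionals of the flow, so any feasible point is a convex combination of at most two vertices of that slice, each corresponding to a memoryless deterministic strategy in $\MDP'$. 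Call these $\strategy'_1,\strategy'_2$ and transport them back to memoryless strategies $\strategy_1,\strategy_2$ on $\MDP$ using the matching $\strategy_{M,i}$'s inside MECs; set $\Memory=\{1,2\}$, initial distribution $\alpha=(\lambda,1-\lambda)$, and Dirac memory update so that mode $i$ executes $\strategy_i$ forever. The induced CDF equals $\lambda F_{\strategy_1}+(1-\lambda)F_{\strategy_2}$ and hence meets all three constraints, giving a witness 2-memory stochastic-update strategy.

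The critical and delicate step is the compression of the memoryless randomizing witness on $\MDP'$ to a 2-atom convex combination of memoryless deterministic ones that still satisfies all three constraints. The subtlety stems from $\VaR$ and $\CVaR$ being nonlinear in the strategy and therefore nonlinear in the LP flow in general; overcoming this requires fixing the $\VaR$ position (as already done by the guess $t$ in Fig.~\ref{fig:reach_lp}) so that $\CVaR_\QueryProbCVaR$ becomes linear on the relevant slice, and then applying a Carathéodory-style argument in the resulting $(\Expectation,\CVaR)$ plane, with careful handling of the atomic boundary case recalled in Ex.~\ref{ex:cvar_intuition_wrong} where the CDF jumps across the quantile. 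A complementary small example analogous to Ex.~\ref{ex:mdp_exp_reach_memory} is then needed to confirm that a single memory element cannot suffice in general, establishing that the bound of two is tight.
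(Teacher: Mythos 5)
There is a genuine gap, concentrated in your compression step. You reduce to weighted reachability on an auxiliary MDP $\MDP'$, obtain a memoryless randomizing witness via Thm.~\ref{stm:mdp_exp_reach_strategies}, and then claim that, on the slice of the LP of Fig.~\ref{fig:reach_lp} with the $\VaR$ position $t$ fixed, the feasible flow decomposes into a convex combination of \emph{two} vertices, \emph{each corresponding to a memoryless deterministic strategy}. Neither half of that claim is justified: once constraints (4)--(6) are added, the vertices of the resulting polytope are no longer the deterministic flows of the underlying reachability polytope, and a Carath\'eodory argument in the $(\Expectation,\CVaR)$ plane yields three atoms, not two, with no guarantee that the atoms are deterministic. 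Moreover, the strategy class you end up with --- an initial coin flip between two modes, each running a memoryless deterministic strategy with \emph{Dirac} memory update forever --- is strictly weaker than the two-memory stochastic-update strategies the theorem asserts, and you give no argument that this weaker class suffices. A memoryless randomizing strategy on $\MDP'$ randomizes at each gateway $g_M$ over $k_M$ commit actions; simulating that on $\MDP$ requires remembering which $\strategy_{M,i}$ was committed to (and committing at a well-defined moment, even though runs may re-enter and leave MECs), which naively costs $k_M+1$ memory elements, not two. There is also a smaller flaw: the claimed coincidence of the families of achievable CDFs between $\MDP$ and $\MDP'$ is false, since a strategy inside a MEC can realize mean-payoff values outside the finite set $\mathcal{R}_M$ attained by memoryless deterministic strategies (e.g.\ by stochastically mixing action frequencies); only one direction (domination) is needed and true.

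The paper's proof avoids all of this by exploiting single-dimensionality directly: inside each MEC it suffices to play the single MEC-optimal memoryless deterministic strategy $\strategy_{\text{opt}}$, since its almost-sure value dominates whatever the original strategy $\strategy$ achieves there. One then invokes the construction of \cite[Proposition 4.2]{DBLP:journals/corr/abs-1104-3489} to build a two-memory strategy with a transient ``search'' mode reproducing $\Prob^{\strategy}[\LtlEventually\LtlAlways M_i]$ for every MEC, a \emph{stochastic} memory switch to a ``remain'' mode, and $\strategy_{\text{opt}}$ thereafter; stochastic dominance of the resulting CDF then settles all three operators at once. If you want to salvage your route, you must (i) replace the enumeration of $\mathcal{R}_M$ by the single maximal value per MEC, so the gateway has one commit action and the back-translation needs only one ``remain'' mode, and (ii) accept that the switch into that mode must be a stochastic memory update rather than an initial coin flip --- at which point you have essentially reconstructed the paper's argument and the LP vertex decomposition becomes unnecessary. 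Your closing remark about a tightness example is not needed for this theorem (sufficiency only); the lower bound is Ex.~\ref{ex:mdp_exp_reach_memory}.
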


\begin{proof}
	%This construction is inspired by that of \cite[Lemma 21]{DBLP:conf/lics/ChatterjeeKK15}.
	Let $\strategy$ be a strategy on an MDP $\MDP$ with reward function $\reward$.
	We construct a two-memory stochastic strategy $\strategy'$ achieving at least the expectation, VaR, and CVaR of $\strategy$.

	First, we obtain a memoryless deterministic strategy $\strategy_\text{opt}$ which obtains the maximal possible mean payoff in each MEC~\cite{Puterman-book}.
	We then apply the construction of \cite[Proposition 4.2]{DBLP:journals/corr/abs-1104-3489} (see also \cite[Lemma 5.7]{DBLP:conf/lics/ChatterjeeKK15}), where the $\xi$ is our $\strategy_\text{opt}$.
	(Technically, this can be ensured by choosing the constraints of the LP $L$ according to $\strategy_\text{opt}$.)

	Intuitively, this constructs a two-memory strategy $\strategy'$ on $\MDP$ behaving as follows.
	Initially, $\strategy'$ remains in each MEC with the same probability as $\strategy$, i.e. $\Prob^{\strategy'}[\LtlEventually \LtlAlways M_i] = \Prob^{\strategy}[\LtlEventually \LtlAlways M_i]$ by following a memoryless \enquote{searching} strategy and stochastically switching its memory state to \enquote{remain}.
	Once in the \enquote{remain} state, the behaviour of the optimal strategy $\strategy_\text{opt}$ is implemented.

	Clearly, (i)~both strategies remain in a particular MEC with the same probability, and (ii)~$\strategy'$ obtains as least as much value in each MEC as $\strategy$.
	Hence the CDF induced by $\strategy'$ stochastically dominates the one of $\strategy$, concluding the proof.
\end{proof}

This immediately gives us a polynomial time decision procedure.

\begin{corollary} \label{stm:mdp_mean_ptime}
	$\QueryMDP^{\{\QueryCritExp, \QueryCritVaR, \QueryCritCVaR\}}_{\QueryObjMean, \QueryDimSingle}$ is in P.
\end{corollary}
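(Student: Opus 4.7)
The plan is to reduce mean-payoff queries on $\MDP$ to weighted-reachability queries on a modified MDP $\MDP'$, mirroring the MC reduction above but at the level of MECs rather than BSCCs, and then invoke Corollary \ref{stm:mdp_exp_reach_ptime}.

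First, compute the MEC decomposition $\{M_1, \dots, M_k\}$ of $\MDP$ in polynomial time (Remark \ref{rem:mec_decomposition_ptime}). For each MEC $M_i$, compute the maximal expected mean payoff $r_i$ attainable when confined to $M_i$; this is the standard single-objective mean-payoff value and is obtained in polynomial time via the standard LP, with $\strategy_\text{opt}$ being a memoryless deterministic witness. Construct $\MDP'$ from $\MDP$ by replacing each MEC $M_i$ with a single absorbing state $t_i$: any transient action that previously entered $M_i$ now enters $t_i$ with the corresponding probability, and the reachability reward function is set to $\reward'(t_i) = r_i$. The target set is $\reachSet = \{t_1, \dots, t_k\}$. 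Finally, run the weighted-reachability decision procedure of Corollary \ref{stm:mdp_exp_reach_ptime} on $\MDP'$ with the same thresholds $\QueryThreshExp$, $(\QueryProbCVaR, \QueryThreshCVaR)$, $(\QueryProbVaR, \QueryThreshVaR)$ and return its answer.

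Correctness is argued by showing that the achievable CDFs of the mean payoff in $\MDP$ under two-memory stochastic strategies coincide with the achievable CDFs of $\ObjFun^\QueryObjReach$ in $\MDP'$ under memoryless randomizing strategies. For the forward direction, Theorem \ref{stm:mdp_exp_mean_strategies} lets us assume a satisfying strategy $\strategy$ has the form \emph{search-then-remain}, so $\Prob^\strategy[\ObjFun^\QueryObjMean = r_i] = \Prob^\strategy[\LtlEventually\LtlAlways M_i]$, which equals the hitting probability of $t_i$ in $\MDP'$ under the induced memoryless strategy. Conversely, any memoryless randomizing strategy in $\MDP'$ reaching $t_i$ with probability $p_i$ lifts to a two-memory stochastic strategy in $\MDP$ that follows the same transient choices, then switches (stochastically) to $\strategy_\text{opt}$ inside $M_i$, yielding mean payoff $r_i$ almost surely on that MEC by the ergodic theorem. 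Hence the random variables have identical distributions, and all three operators $\Expectation$, $\VaR_{\QueryProbVaR}$, $\CVaR_{\QueryProbCVaR}$ agree. Note also that in $\MDP'$ the target set is reached almost surely under every strategy (since $\{t_1, \dots, t_k\}$ absorbs all recurrent behaviour by Remark \ref{rem:mec_almost_sure}), so the attraction assumption A1 of Definition \ref{def:attraction_assumption} holds and Corollary \ref{stm:mdp_exp_reach_ptime} applies.

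The main subtlety I anticipate is handling the search-then-remain simulation cleanly: we must verify that the probabilities of eventually settling in each MEC in $\MDP$ can be freely realized as reachability probabilities of the $t_i$ in $\MDP'$ and vice versa, which is exactly the content of the transient-flow characterization underlying the LP of Figure \ref{fig:reach_lp} together with \cite[Proposition 4.2]{DBLP:journals/corr/abs-1104-3489} invoked in the proof of Theorem \ref{stm:mdp_exp_mean_strategies}. Polynomial-time complexity is immediate: MEC decomposition, the $k$ mean-payoff LPs for the $r_i$, construction of $\MDP'$, and the reachability LP of Figure \ref{fig:reach_lp} (tried over at most $|\reachSet|$ guesses of $\VaR_{\QueryProbCVaR}$) are each polynomial in the size of $\MDP$.
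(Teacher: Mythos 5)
Your overall route is exactly the one the paper intends: it derives the corollary as an immediate consequence of Thm.~\ref{stm:mdp_exp_mean_strategies} (search-then-remain strategies with the per-MEC optimal value $r_i$ suffice) combined with the weighted-reachability machinery of Fig.~\ref{fig:reach_lp}, and your write-up fleshes out precisely that reduction. However, there is one concrete flaw in your construction of $\MDP'$: you replace each MEC $M_i$ by a single \emph{absorbing} state $t_i$. Unlike a BSCC of a Markov chain, a MEC of an MDP can be left --- a state $s \in M_i$ may have actions outside $M_i$'s action set leading to other parts of the MDP. Making $t_i$ absorbing deletes these outgoing actions, so the achievable reachability distributions over $\{t_1,\dots,t_k\}$ in $\MDP'$ form in general a strict subset of the achievable vectors $\bigl(\Prob^\strategy[\LtlEventually\LtlAlways M_i]\bigr)_i$ in $\MDP$. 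A minimal counterexample: $\initstate$ carries a self-loop action $a$ (so $(\{\initstate\},\{a\})$ is a MEC) and an action $b$ leading to a second MEC with a higher value; your $\MDP'$ starts in the absorbing state $t_1$ and can never realize the distribution that settles in the second MEC, so the reduction would wrongly reject satisfiable instances. Your phrase \enquote{mirroring the MC reduction \dots\ at the level of MECs rather than BSCCs} is exactly where the mirror breaks.

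The fix is standard and is what the paper does (see the MEC quotient discussion in Sec.~\ref{sec:app:reach_lp_assumptions} and the flow variables $x_s$ versus $y_a$ in Fig.~\ref{fig:reach_lp}): collapse $M_i$ to a representative state that \emph{retains} all actions of states in $M_i$ leaving $M_i$, and add a fresh \enquote{commit} action leading to a fresh absorbing target $t_i$ with reward $r_i$. In that quotient the only end components are the $t_i$'s, so by Remark~\ref{rem:mec_almost_sure} the target set is reached almost surely under every strategy, assumption \textbf{A1} holds, and your appeal to the attraction-assumption case of Cor.~\ref{stm:mdp_exp_reach_ptime} (correctly avoiding the circularity with the general case of that corollary, which itself cites Cor.~\ref{stm:mdp_mean_ptime}) goes through. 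The rest of your argument --- computing the $r_i$ in polynomial time, the stochastic-dominance step justifying the point mass at $r_i$ per MEC, and the realizability of the settling probabilities via \cite[Proposition 4.2]{DBLP:journals/corr/abs-1104-3489} --- is correct as written.
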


Furthermore, we can use results of \cite[Lemma 16]{DBLP:conf/lics/ChatterjeeKK15} to trade the stochastic update for more memory.

\begin{corollary}
	Stochastic strategies with finite, deterministic memory are sufficient for $\QueryMDP_{\QueryObjMean,\QueryDimSingle}^{\{\QueryCritExp, \QueryCritVaR, \QueryCritCVaR\}}$.
\end{corollary}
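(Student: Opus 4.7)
The plan is to obtain the desired strategy by applying the generic construction of \cite[Lemma 16]{DBLP:conf/lics/ChatterjeeKK15}, which transforms any finite-memory stochastic-update strategy into an equivalent deterministic-update randomized strategy with finite (but possibly larger) memory. Specifically, starting from the two-memory stochastic strategy $\strategy'$ produced by Theorem~\ref{stm:mdp_exp_mean_strategies}, I would invoke this transformation to obtain a strategy $\strategy''$ whose memory-update function is Dirac on every input, at the price of enlarging $\Memory$ while keeping it finite.

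Recall that the structure of $\strategy'$ is quite simple: its two memory states $\mathsf{search}$ and $\mathsf{remain}$ correspond to the transient phase (during which it randomly chooses when to commit to staying in a MEC) and the recurrent phase (during which it plays the MEC-optimal memoryless deterministic strategy $\strategy_\text{opt}$). The only stochasticity in the memory update lies in the single switch $\mathsf{search} \to \mathsf{remain}$, which occurs at finitely many (state, action) pairs with finitely many outcomes, so the hypotheses of the cited lemma apply directly. The construction it provides unfolds each such stochastic memory update into a deterministic one by adding enough memory to track the distinct branches and by transferring the randomness into the next-move function, which is still allowed to randomize.

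To conclude, I would verify that the transformation preserves the query. The cited construction guarantees that $\Prob^{\strategy'}$ and $\Prob^{\strategy''}$ agree on all cylinder sets of runs of $\MDP$, hence induce the same probability measure on $\Runs$. Since the mean-payoff random variable $\ObjFun^\QueryObjMean$ is measurable with respect to this $\sigma$-algebra, its CDF under $\strategy''$ coincides with its CDF under $\strategy'$. Because $\Expectation$, $\VaR_\QueryProbVaR$, and $\CVaR_\QueryProbCVaR$ each depend only on the CDF (as observed in the paragraph preceding Section~4 of the excerpt), all three constraints are satisfied by $\strategy''$ exactly when they are satisfied by $\strategy'$.

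I do not anticipate any substantive obstacle: Theorem~\ref{stm:mdp_exp_mean_strategies} supplies a witness strategy of precisely the form to which \cite[Lemma 16]{DBLP:conf/lics/ChatterjeeKK15} applies, and the mean-payoff objective interacts with the transformation only through the run measure, which is preserved. The mild care required is simply to note that the two-memory witness has only one stochastic update site, so the blow-up in memory is negligible but still permitted by the statement of the corollary, which only demands finiteness.
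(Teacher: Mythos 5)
Your proposal takes exactly the paper's route: the corollary is obtained in one line by applying \cite[Lemma 16]{DBLP:conf/lics/ChatterjeeKK15} to the two-memory stochastic-update witness from Thm.~\ref{stm:mdp_exp_mean_strategies}, trading the single stochastic memory switch for additional (finite) deterministic memory. The only caution is your claim that the transformation preserves $\Prob^{\strategy'}$ on all cylinder sets: a finite deterministic-update strategy cannot in general replicate a stochastic update measure-exactly (cf.\ the blow-up in Fig.~\ref{fig:mean_payoff_deterministic_memory}), but this is harmless here since it suffices that the resulting mean-payoff CDF stochastically dominates the original one, and $\Expectation$, $\VaR$ and $\CVaR$ are all monotone under stochastic dominance.
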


\paragraph{Deterministic strategies may require exponential memory.}
As sources of randomness are not always available, one might ask what can be hoped for when only determinism is allowed.
As already shown in Ex.~\ref{ex:weighted_reachability_randomization}, randomization is required in general.
But even if some deterministic strategy is sufficient, it may require memory exponential in the size of the input, even in an MDP with only 3 states.
We show this in the following example. % by another variation of the MDP in Fig.~\ref{fig:mdp_cvar_hard}.

\begin{figure}
	\begin{tikzpicture}[auto,initial text={},node distance=0.5cm]
		\node[initial above,state] (s0) {$5$};
		\node[actionnode,right=1cm of s0] (s0b) {};
		\node[state,above right=0.2cm and 1cm of s0b] (s2) {$10$};
		\node[state,below right=0.2cm and 1cm of s0b] (s3) {$0$};

		\path[->]
			(s0) edge[directedge,loop left] node[action] {$a$} (s0)
			(s0) edge[actionedge]         node[action] {$b$} (s0b)
			(s0b) edge[probedge]          node[prob] {$0.9 \varepsilon$} (s2)
			(s0b) edge[probedge,swap]     node[prob] {$0.1 \varepsilon$} (s3)
			(s0b) edge[probedge,bend left] node[prob,pos=.6] {$1-\varepsilon$} (s0)
			(s2) edge[directedge,loop right] (s2)
			(s3) edge[directedge,loop right] (s3);
	\end{tikzpicture}
	\caption{Exponential memory is necessary for mean payoff when only deterministic update is allowed.}
	\label{fig:mean_payoff_deterministic_memory}
\end{figure}
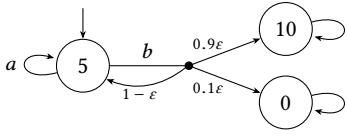

\begin{example}
	Consider the MDP outlined in Fig.~\ref{fig:mean_payoff_deterministic_memory} together with the constraints $\QueryProbVaR = \QueryProbCVaR = 0.05$, $\QueryThreshExp = 6$, and $\QueryThreshCVaR = 2$ (or $\QueryThreshVaR = 5$).
	Again, any optimal strategy needs a significant part of runs to go to the right side in order to satisfy the expectation constraint.
	Yet, any strategy can only \enquote{move} a small fraction of the runs there in each step.
	In particular, after $k$ steps, the right side is only reached with probability at most $1 - (1 - \varepsilon)^k$.
	When choosing $\varepsilon = 2^{-n}$, which needs $\Theta(n)$ bits to encode, a deterministic strategy requires $k \geq c / \log(1-2^{-n}) \in O(2^n)$ memory elements to count the number of steps.
	The same holds true for any deterministic-update strategy.

	On the other hand, a strategy with stochastic memory update can encode this counting by switching its state with a small probability after each step.
	For example, a strategy switching with probability $p = 3 \varepsilon$ from \enquote{play $b$} to \enquote{play $a$} satisfies the constraint. \QEE
\end{example}

\subsection{Single constraint queries}

In this section, we discuss an important sub-case of the single-dimensional case, namely queries with only a single constraint, i.e. $\cardinality{\QueryCrit} = 1$.
We show that deterministic memoryless strategies are sufficient in this case.

One might be tempted to use standard arguments and directly conclude this from the results of Thm.~\ref{stm:mdp_exp_reach_strategies} as follows.
Recall that this theorem shows that memoryless, randomizing strategies are sufficient; and that any such strategy can be written as finite convex combination of memoryless, deterministic strategies.
Most constraints, for example expectation or reachability, behave linearly under convex combination of strategies, e.g., $\Expectation^{\strategy_\lambda}(X) = \lambda \Expectation^{\strategy_1}[X] + (1-\lambda) \Expectation^{\strategy_2}[X]$.
%As a consequence, if two strategies $\strategy_1$ and $\strategy_2$ satisfy a constraint, any combination of them satisfies the constraints, too.
Consequently, for an optimal memoryless strategy, there is a deterministic witness, which in turn also is optimal.

Surprisingly, this assumption is not true for $\CVaR$.
On the contrary, the $\CVaR$ of a convex combination of strategies might be strictly worse than the $\CVaR$s of either strategy, as shown in the following example.
We prove a slightly weaker property of $\CVaR$ which eventually allows us to apply similar reasoning.

\begin{example} \label{ex:mdp_cvar_hard}
	% Note: Intentionally it's p not \QueryProbCVaR, since this is not decision context
	Recall the MDP in Fig.~\ref{fig:mdp_cvar_hard} and let $p = 0.05$.
	As previously shown, $\CVaR_p^{\strategy_a} = 5$ and $\CVaR_p^{\strategy_b} = 0$, but the mixed strategy $\strategy_\lambda = \frac{1}{2}\strategy_a + \frac{1}{2}\strategy_b$ achieves $\CVaR_p^{\strategy_\lambda} = 0$ instead of the convex combination $\frac{1}{2}5 + \frac{1}{2}0 = 2.5$.

	For $p = 0.2$, we have $\CVaR_p^{\strategy_a} = \CVaR_p^{\strategy_b} = 5$.
	Yet, \emph{any} non-trivial convex combination of the two strategies yields a $\CVaR_p$ less than $5$.
	See Sec.~\ref{sec:app:cvar:nonlinear} for more details.
	With according constraints, this effectively can force an optimal strategy to choose between $a$ or $b$.
	This observation is further exploited in the NP-hardness proof of the multi-dimensional case in Sec.~\ref{sec:multi}.
	\qee
\end{example}

Since CVaR considers the worst events, the CVaR of a combination intuitively cannot be better than the combination of the respective CVaRs.
We prove this intuition in the general setting, where instead of a convex combination of strategies we consider a mixture of two random variables.

\begin{lemma} \label{stm:cvar_quasiconvex}
	$\CVaR_p(X)$ is convex in $X$ for fixed $p \in (0, 1)$, i.e.\ for random variables $X_1, X_2$ and $\lambda \in [0, 1]$
	\begin{equation*}
		\CVaR_p(\lambda X_1 + (1-\lambda) X_2) \leq \lambda \CVaR_p(X_1) + (1-\lambda) \CVaR_p(X_2).
	\end{equation*}
\end{lemma}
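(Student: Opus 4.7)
In this paper's context, where $X_1, X_2$ arise as payoffs under potentially distinct strategies and live on the probability spaces induced by those strategies, the notation $\lambda X_1 + (1-\lambda)X_2$ is naturally read through the induced-measure perspective: it denotes a random variable $Y$ whose law is $\lambda F_{X_1} + (1-\lambda)F_{X_2}$, since this is exactly the distribution seen by the convex-combination strategy $\strategy_\lambda$ for which $\Prob^{\strategy_\lambda} = \lambda \Prob^{\strategy_1} + (1-\lambda)\Prob^{\strategy_2}$ (see the paragraph on convex combinations of strategies). Under this reading, I prove the inequality via a dual characterisation of $\CVaR$.

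The cornerstone of the proof is the Rockafellar--Uryasev-style representation
\[ \CVaR_p(X) \;=\; \sup_{v \in \Reals}\ \bigl\{\, v - \tfrac{1}{p}\,\Expectation[(v - X)^+] \,\bigr\}. \qquad (\dagger) \]
To establish $(\dagger)$, observe that the map $v \mapsto v - \tfrac{1}{p}\Expectation[(v - X)^+]$ is concave in $v$ with left and right slopes $1 - \tfrac{1}{p}F_X(v^-)$ and $1 - \tfrac{1}{p}F_X(v)$, so its maximiser is $v^\star = \sup\{r : F_X(r) \leq p\} = \VaR_p(X)$. Expanding the expectation at $v^\star$ via $\Expectation[(v^\star - X)^+] = v^\star\,\Prob[X < v^\star] - \Expectation[X\,\mathbbm{1}_{X<v^\star}]$ and substituting recovers exactly the paper's expression $\tfrac{1}{p}\bigl(\Expectation[X\,\mathbbm{1}_{X<v^\star}] + (p - \Prob[X<v^\star])\,v^\star\bigr)$; importantly, the additive correction $(p - \Prob[X<v^\star])\,v^\star$ appears without a case distinction and takes care of any atom at the quantile.

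The remaining observation is that for each fixed $v$ the functional $X \mapsto \Expectation[(v-X)^+]$ is an integral against $dF_X$ and is therefore linear in the law of $X$. Consequently, with $F_Y = \lambda F_{X_1} + (1-\lambda)F_{X_2}$,
\[ \Expectation[(v-Y)^+] \;=\; \lambda\,\Expectation[(v-X_1)^+] + (1-\lambda)\,\Expectation[(v-X_2)^+], \]
and hence for every $v$,
\begin{align*}
  v - \tfrac{1}{p}\Expectation[(v-Y)^+] &\;=\; \lambda\bigl(v - \tfrac{1}{p}\Expectation[(v-X_1)^+]\bigr) + (1-\lambda)\bigl(v - \tfrac{1}{p}\Expectation[(v-X_2)^+]\bigr) \\
  &\;\leq\; \lambda\,\CVaR_p(X_1) + (1-\lambda)\,\CVaR_p(X_2),
\end{align*}
where the inequality applies $(\dagger)$ to each summand on the right. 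Taking the supremum in $v$ on the left and invoking $(\dagger)$ for $Y$ yields the desired convexity inequality.

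The main obstacle is establishing the dual representation $(\dagger)$, specifically verifying it coincides with the paper's explicit definition when the $p$-quantile is an atom of $F_X$ (so that the $(p - \Prob[X<v^\star])\,v^\star$ term is non-trivial); this is where all the care needs to go. Once $(\dagger)$ is in place, the entire convexity claim reduces to one linearity identity followed by a routine supremum exchange.
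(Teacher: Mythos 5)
Your proof is correct, and it takes a genuinely different route from the paper's. The paper argues directly from its explicit definition: it sandwiches the mixture's quantile, $v_1 \leq v_\lambda \leq v_2$, uses $F_\lambda = \lambda F_1 + (1-\lambda)F_2$ to derive the balance identity $\lambda(F_1(v_\lambda)-p) = (1-\lambda)(p - F_2(v_\lambda))$, and then splits the integration domains $(-\infty,v_\lambda]$ against $(-\infty,v_1]$ and $(-\infty,v_2]$, bounding the integrand by $v_\lambda$ on the exchanged pieces. You instead establish the Rockafellar--Uryasev variational representation $\CVaR_p(X) = \sup_v \{v - \tfrac{1}{p}\Expectation[(v-X)^+]\}$ and reduce convexity to the linearity of $X \mapsto \Expectation[(v-X)^+]$ in the law of $X$ followed by a supremum exchange; your verification that the concave objective is maximised at $v^\star = \sup\{r : F_X(r)\leq p\}$ and that its value there reproduces the paper's atom-aware formula is exactly the care the argument needs. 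Notably, your reading of $\lambda X_1 + (1-\lambda)X_2$ as the mixture of laws is the same one the paper uses (its proof relies on $F_\lambda = \lambda F_1 + (1-\lambda)F_2$, which only holds for mixtures, matching the strategy-combination semantics). What each approach buys: the paper's argument is elementary and self-contained but, as written, is really a sketch for the atomless case (the identity $F_\lambda(v_\lambda)=p$ and the clean integral splitting both fail when $F_\lambda$ jumps across $p$), whereas your dual representation absorbs atoms uniformly through the correction term $(p - \Prob[X < v^\star])v^\star$ with no case distinction, and additionally yields a reusable tool (the dual form is what underlies CVaR's coherence and its linear-programming encodings). The only caveat worth recording is integrability of $(v-X)^+$, which is immediate here since the payoffs in the paper's setting are bounded.
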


The proof can be found in Sec.~\ref{sec:app:quasiconvex}.
This result allows us to apply the ideas outlined in the beginning of the section.

\begin{theorem}
	For any $\QueryObj\in\{\QueryObjReach,\QueryObjMean\}$, deterministic memoryless strategies are sufficient for $\QueryMDP^{\QueryCrit}_{\QueryObj, \QueryDimSingle}$ when $\cardinality{\QueryCrit} = 1$.
\end{theorem}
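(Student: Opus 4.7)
The plan is to case-split on which of the three kinds of constraint is present, reducing the two ``linear'' cases to classical MDP results and reserving the real work for CVaR.

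For the sole expectation constraint $\QueryThreshExp \leq \Expectation(X)$, the problem is just maximising expected weighted reachability or mean payoff, for which deterministic memoryless strategies are optimal~\cite{Puterman-book}. For a sole VaR constraint the inequality $\QueryThreshVaR \leq \VaR_{\QueryProbVaR}(X)$ is equivalent to $\Prob^\sigma[X < \QueryThreshVaR] \leq \QueryProbVaR$: for $\QueryObj = \QueryObjReach$ this is a reachability-probability query on the sub-target set $\{s \in \reachSet : \reward(s) < \QueryThreshVaR\}$, and for $\QueryObj = \QueryObjMean$ it is a reachability query on the set of MECs whose best deterministic memoryless mean payoff is below $\QueryThreshVaR$; either way the optimum is attained by a deterministic memoryless strategy.

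For $\QueryCrit = \{\QueryCritCVaR\}$, start with any satisfying $\sigma$ and pass to a memoryless randomizing strategy $\sigma^\star$ on a \enquote{cleaned-up} weighted reachability instance: for $\QueryObj = \QueryObjReach$ under the attraction assumption this is Thm~\ref{stm:mdp_exp_reach_strategies}, and for $\QueryObj = \QueryObjMean$ (or general reachability) first collapse each MEC $M$ to a single absorbing target whose reward is the maximum mean payoff inside $M$, attained by a fixed deterministic memoryless sub-strategy, and then apply Thm~\ref{stm:mdp_exp_reach_strategies} in the resulting MDP, which now satisfies the attraction assumption. The reach vector $(d_s^{\sigma^\star})_{s \in \reachSet} := (\Prob^{\sigma^\star}[\LtlEventually s])_{s \in \reachSet}$ then lies in the polytope cut out by constraints~(1)--(3) of Fig.~\ref{fig:reach_lp} (cf.~\cite[Thm 3.2]{DBLP:journals/lmcs/EtessamiKVY08}), whose vertices coincide with reach vectors of memoryless deterministic strategies. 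Writing $(d_s^{\sigma^\star}) = \sum_{i=1}^{k} \lambda_i (d_s^{\sigma_i})$ for memoryless deterministic $\sigma_1, \dots, \sigma_k$ and weights $\lambda_i \geq 0$ summing to $1$, and using that $F_{X_\sigma}(r) = \sum_{s \in \reachSet,\, \reward(s) \leq r} d_s^\sigma$ depends linearly on the reach vector, the CDF under $\sigma^\star$ equals $\sum_i \lambda_i F_{X_{\sigma_i}}$. Lemma~\ref{stm:cvar_quasiconvex} then yields
\[
\QueryThreshCVaR \leq \CVaR_{\QueryProbCVaR}(X_{\sigma^\star}) \leq \sum_{i=1}^{k} \lambda_i \CVaR_{\QueryProbCVaR}(X_{\sigma_i}),
\]
so some $\sigma_i$ already satisfies the CVaR constraint, and lifting it back to the original MDP (retaining the fixed within-MEC sub-strategy) gives the desired deterministic memoryless witness.

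The main obstacle is the subtlety highlighted by Ex.~\ref{ex:mdp_cvar_hard}: the paper's \enquote{mix strategies via initial memory} construction can strictly decrease CVaR, so one must be careful that the decomposition takes place at the level of reach-probability vectors (hence of CDFs), where the mixture really reproduces the distribution of $\sigma^\star$, rather than at the level of convex combinations of strategies on a fixed MDP. Only then does Lemma~\ref{stm:cvar_quasiconvex} apply in the favourable direction. For $\QueryObj = \QueryObjMean$ one additionally has to verify that the per-MEC optimum is attained by a deterministic memoryless sub-strategy, so that the collapse-to-reachability step faithfully preserves CDFs on both sides; this is the standard mean-payoff result.
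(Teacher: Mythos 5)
Your proposal is correct and follows essentially the same route as the paper: reduce the expectation and VaR cases to classical results, pass to a memoryless randomizing witness (via Thm.~\ref{stm:mdp_exp_reach_strategies}, after collapsing MECs to absorbing targets with their optimal deterministic memoryless value for mean payoff and general reachability), decompose it into memoryless deterministic strategies, and invoke the convexity of $\CVaR$ under mixtures (Lem.~\ref{stm:cvar_quasiconvex}) to extract a deterministic component that still meets the lower bound. Your added care in performing the decomposition at the level of reach vectors/CDFs—where the mixture genuinely reproduces the distribution of the randomizing strategy—is a correct and welcome elaboration of a step the paper leaves implicit.
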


\begin{proof}
	This is known for $\QueryCrit = \{\QueryCritExp\}$~\cite{Puterman-book} and $\QueryCrit = \{\QueryCritVaR\}$~\cite{CIS-125853}.

	For CVaR, observe that the convex combination of deterministic strategies cannot achieve a better CVaR than the best strategy involved in the combination (see Lem.~\ref{stm:cvar_quasiconvex}).
	This immediately yields the result for $\QueryObj = \QueryObjReach$ through Thm.~\ref{stm:mdp_exp_reach_strategies}.
	For $\QueryObj = \QueryObjMean$, we exploit the approach of Thm.~\ref{stm:mdp_exp_mean_strategies}.
	Recall that there we obtained a two-memory strategy $\strategy'$.
	Both randomization and stochastic update are used solely to distribute the runs over all MECs accordingly.
	By the above reasoning, for each MEC it is sufficient to either almost surely remain there or leave it.
	This behaviour can be implemented by a deterministic memoryless strategy on the original MDP.
\end{proof}

\section{Multiple Dimensions} \label{sec:multi}

In this section, we deal with multi-dimensional queries. % i.e. $\QueryMDP^{\QueryCrit}_{\QueryObj,\QueryDimMulti}$
We continue to use $i$ for indices related to MECs and further use $j$ for dimension indices.
%Specifically, we show complexity results for multiple types of queries.
First, we show that the Markov Chain case does not significantly change.

\begin{theorem}
	For any $\QueryObj\in\{\QueryObjReach,\QueryObjMean\}$, $\QueryMC^{\{\QueryCritExp, \QueryCritVaR, \QueryCritCVaR\}}_{\QueryObj, \QueryDimMulti}$ is in P.
\end{theorem}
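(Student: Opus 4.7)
The plan is to observe that, in a Markov chain, there is no strategy to resolve, so the probability measure over runs is fixed once $\MC$ is given. Since every constraint in our query involves only a single coordinate $X_j$ of the $d$-dimensional random variable $X$, the problem trivially decomposes into $d$ independent one-dimensional problems, each of which we can handle by directly adapting the proofs of Thm.~\ref{stm:mc_single_ptime} and its mean-payoff analogue.

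For $\QueryObj = \QueryObjReach$, I would first compute, in polynomial time, the reachability probability $p(b_i) = \Prob[\LtlEventually b_i]$ for every absorbing target state $b_i \in \reachSet$, using a linear equation system on the transient states. For each dimension $j \in [d]$, this yields the CDF of $X_j$ explicitly as
\begin{equation*}
    F_{X_j}(v) = {\sum}_{b_i \in \reachSet : \reward_j(b_i) \leq v} p(b_i),
\end{equation*}
which is a discrete distribution with at most $\cardinality{\reachSet}$ atoms. From this CDF one reads off $\Expectation[X_j]$, $\VaR_{\QueryProbVaRVec_j}(X_j)$, and $\CVaR_{\QueryProbCVaRVec_j}(X_j)$ (handling the discrete mass at $\VaR$ as in the definition of $\CVaR$) in polynomial time and checks the corresponding constraint. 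The query is accepted iff every non-$\bot$ constraint in every dimension is satisfied.

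For $\QueryObj = \QueryObjMean$, I would reduce to the reachability case dimension-by-dimension. Compute the BSCC decomposition in polynomial time and, for each BSCC $B_i$ and each dimension $j$, determine the expected per-step reward $r_{i,j} = \Expectation[X_j \mid \LtlEventually \LtlAlways B_i]$ via a linear equation system on $B_i$. By the standard convergence theorem for mean payoff in finite irreducible chains, almost every run that stays in $B_i$ attains $X_j = r_{i,j}$ simultaneously for all $j$, so $\Prob[X_j = r_{i,j} \mid \LtlEventually \LtlAlways B_i] = 1$. Then collapse each BSCC $B_i$ to a single absorbing representative $b_i$ carrying the reward vector $(r_{i,1},\dots,r_{i,d})$, obtaining a chain $\MC'$ whose weighted reachability distribution coincides with the mean-payoff distribution of $\MC$ in every dimension. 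Apply the reachability procedure above to $\MC'$.

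The only conceptual step, and the one I would be careful to state explicitly, is the separability observation: unlike the MDP setting, where different dimensions compete through the shared choice of strategy, in a Markov chain the coordinates of $X$ are evaluated under the same fixed measure $\Prob$, so conjunctions across dimensions reduce to checking each constraint in isolation. Once this is spelled out, correctness follows from the single-dimensional results and the total running time is $d$ times a polynomial, hence polynomial.
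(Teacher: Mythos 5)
Your proposal is correct and follows exactly the paper's argument: since the measure on runs is fixed in a Markov chain, each constraint concerns a single dimension and can be decided independently using the single-dimensional procedures (the paper's proof is precisely this one-line observation, which you have merely spelled out in more detail). No gaps.
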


\begin{proof}
	Similarly to the single-dimensional case, we decide each constraint in each dimension separately, using our previous results.
	The query is satisfied iff each of the constraints is satisfied.
\end{proof}

\subsection{NP-Hardness of reachability and mean payoff}

For the MDP on the other hand, multiple dimensions add significant complexity.
In the following, we show that already the weighted reachability problem with multiple dimensions and only $\CVaR$ constraints, i.e.\ $\QueryMDP_{\QueryObjReach, \QueryDimMulti}^{\{\QueryCritCVaR\}}$, is NP-hard.
This result directly transfers to mean payoff, i.e.\ $\QueryObj = \QueryObjMean$.
Recall that in contrast $\QueryMDP_{\QueryObjReach, \QueryDimMulti}^{\{\QueryCritExp\}}$ and even $\QueryMDP_{\QueryObjReach, \QueryDimMulti}^{\{\QueryCritExp,\QueryCritVaR_0\}}$, i.e. constraints on the expectation and ensuring that almost all runs achieve a given threshold, are in P~\cite{DBLP:conf/lics/ChatterjeeKK15}.

\begin{theorem} \label{stm:reach_multi_np_hard}
	For any $\QueryObj \in \{\QueryObjReach, \QueryObjMean\}$, $\QueryMDP_{\QueryObj, \QueryDimMulti}^{\{\QueryCritCVaR\}}$ is NP-hard (when the dimension $d$ is a part of the input).
\end{theorem}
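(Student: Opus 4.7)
My plan is to reduce from 3-SAT, leveraging the non-linearity of $\CVaR$ exhibited in Example~\ref{ex:mdp_cvar_hard}: with $p = 0.2$ on the gadget of Fig.~\ref{fig:mdp_cvar_hard}, the two pure strategies each attain $\CVaR_{0.2} = 5$ while every non-trivial mixture attains strictly less. A $\CVaR$ constraint at threshold $5$ therefore forces the strategy to be pure at the designated choice, which is precisely the mechanism I will use to encode Boolean variables.

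Given a 3-CNF $\varphi = C_1 \wedge \cdots \wedge C_m$ over $x_1, \dots, x_n$, I would build an MDP with initial state $s_0$ whose unique action branches uniformly, with probability $1/n$, into $n$ variable states $s_1, \dots, s_n$. At each $s_i$ two actions mirror Fig.~\ref{fig:mdp_cvar_hard}: one leads deterministically to an absorbing state $A_i$, and the other to $B_i^+$ with probability $0.9$ and $B_i^-$ with probability $0.1$. The first action encodes $x_i = \texttt{true}$, the second $x_i = \texttt{false}$.

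For each variable $i$ I would introduce a dedicated dimension $d_i^{\textsf{var}}$ with rewards $5, 10, 0$ on $A_i, B_i^+, B_i^-$ respectively, and a large constant $K$ on every target reachable through $s_j$ for $j \neq i$; the associated constraint is $\CVaR_{0.2/n} \geq 5$. Since the worst $1/n$-fraction of outcomes in this dimension lies entirely in $s_i$'s subgadget (the large $K$ keeps other $s_j$-outcomes out of the tail), the conditional tail distribution coincides with that of Ex.~\ref{ex:mdp_cvar_hard}, so the constraint is met iff the strategy is pure at $s_i$ (the marginal mixing probability at $s_i$ must lie in $\{0,1\}$ regardless of memory, exactly as in Ex.~\ref{ex:mdp_cvar_hard}). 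For each clause $C_k$ I would add a dimension $d_k^{\textsf{cl}}$ with reward $1$ on $A_i$ for every positive literal $x_i \in C_k$ and on both $B_i^+, B_i^-$ for every negative literal $\neg x_i \in C_k$, and $0$ elsewhere, together with the constraint $\CVaR_{1 - 1/(2n)} \geq 1/(2n-1)$; a direct piecewise-linear CDF computation shows that under a pure strategy encoding an assignment $\alpha$, this constraint holds iff at least one literal of $C_k$ is satisfied by $\alpha$.

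Correctness then reduces to: a satisfying $\alpha$ induces the pure strategy choosing $A_i$ iff $\alpha(x_i) = \texttt{true}$, which meets every constraint; and conversely, any witness strategy must be pure at every $s_i$ by the $n$ variable constraints, hence encodes an assignment satisfying every $C_k$ by the $m$ clause constraints. The construction is polynomial in $|\varphi|$. The main technical obstacle is calibrating the parameters $0.2/n$, $1 - 1/(2n)$, and $K$ so that (a) the variable-dimension tail is wholly local to one $s_i$, yielding the Ex.~\ref{ex:mdp_cvar_hard} non-linearity per variable, and (b) the clause constraint is tight enough to separate zero from at least one satisfied literal; a careful case distinction on the piecewise-linear CDFs handles both. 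The mean-payoff case follows by the standard device of replacing each absorbing target $t$ with a self-loop whose per-step reward in every dimension equals the reachability reward of $t$, so that $\ObjFun^\QueryObjMean = \ObjFun^\QueryObjReach$ on every run.
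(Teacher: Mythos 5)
Your proposal is correct and rests on the same core mechanism as the paper's proof: a reduction from 3-SAT in which each variable gets a dimension whose $\CVaR$ constraint exploits the non-linearity of Ex.~\ref{ex:mdp_cvar_hard} (both pure choices attain $\CVaR_{0.2}=5$, every proper mixture attains strictly less) to force a deterministic valuation, and each clause gets a dimension whose $\CVaR$ constraint checks that some literal is satisfied. The two constructions differ in how they isolate the per-variable tail. The paper's gadget (Fig.~\ref{fig:reach_np_hard_gadget}) routes half of the probability mass leaving $\texttt{?}_m$ onward to clause states, leaves reward $0$ in dimension $m$ on all runs outside the shaded area, and compensates by enlarging the quantile to $p=\tfrac{M-1}{M}+\tfrac{1}{M}\cdot 0.5\cdot 0.2$; this requires a delicate argument because the out-of-gadget and clause-bound runs sit at value $0$, i.e.\ at the \emph{bottom} of the distribution in dimension $m$, so they compete with the shaded values $0,5,10$ for membership in the lower tail. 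You instead (a) pad all targets outside gadget $i$ with a large reward $K>10$ in dimension $d_i^{\textsf{var}}$ so that the worst $\tfrac{0.2}{n}$-fraction is provably local to gadget $i$ and the scaling argument reduces verbatim to Ex.~\ref{ex:mdp_cvar_hard}, and (b) encode clause membership directly as rewards on the absorbing states $A_i, B_i^{\pm}$ rather than via a continuation, which removes the extra mass at $0$ inside the gadget. Your calibration checks out: for the variable dimension the padded distribution has $\CVaR_{0.2/n}=5$ exactly for $\lambda\in\{0,1\}$ and $<5$ otherwise, and for the clause dimension one computes $\CVaR_{1-1/(2n)}=\tfrac{2\ell_k-1}{2n-1}$ when $\ell_k\geq 1$ literals are satisfied and $0$ when $\ell_k=0$, so the threshold $\tfrac{1}{2n-1}$ separates the cases. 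A further small advantage of your version is that the clause constraint uses a genuine $p\in(0,1)$, whereas the paper's clause check uses $\CVaR_1=\Expectation$, which sits outside the range $\QueryProbCVaRVec\in(0,1)^d$ that the problem statement allows. The only point to make fully explicit in a final write-up is the observation you already flag: since each $s_i$ is visited at most once and immediately after the initial branch, every strategy, regardless of memory, induces a single marginal $\lambda_i$ at $s_i$, so the dimension-$i$ distribution depends on the strategy only through $\lambda_i$.
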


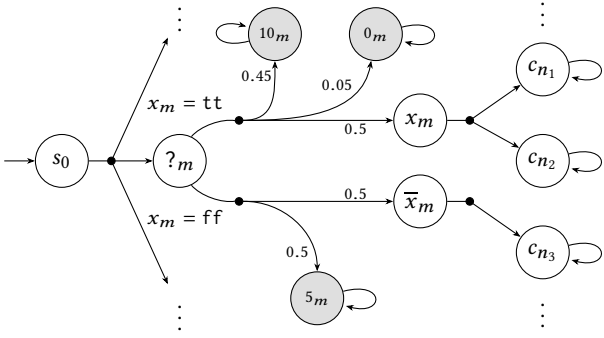
\begin{figure}
	\begin{tikzpicture}[auto,initial text=,node distance=0.5cm]
		\node[initial,state] (s0) {$\initstate$};

		\node[actionnode,right=0.25cm of s0] (s0a) {};

		\node[state,right=0.5cm of s0a] (x) {$\texttt{?}_m$};
		\node[draw=none,above=1.25cm of x] (xdotsabove) {$\vdots$};
		\node[draw=none,below=1.25cm of x] (xdotsbelow) {$\vdots$};

		\node[actionnode,above right=0.25cm and 0.5cm of x] (xt) {};
		\node[actionnode,below right=0.25cm and 0.5cm of x] (xf) {};

		\node[above right=1cm and 1cm of xt] (xt_commit_mid) {};
		\node[state,left=0.2cm of xt_commit_mid,fill=lightgray!50] (xt_commit_above) {\scriptsize $10_m$};
		\node[state,right=0.2cm of xt_commit_mid,fill=lightgray!50] (xt_commit_below) {\scriptsize $0_m$};
		\node[state,below right=1cm and 0.75cm of xf,fill=lightgray!50] (xf_commit) {\scriptsize $5_m$};

		\node[state,right=2cm of xt] (xt_commit_node) {$x_m$};
		\node[state,right=2cm of xf] (xf_commit_node) {$\overline{x}_m$};

		\node[actionnode,right=0.25cm of xt_commit_node] (xt_action) {};
		\node[actionnode,right=0.25cm of xf_commit_node] (xf_action) {};

		\node (clause_anchor) at ($(xt_action)!0.5!(xf_action)$) {};
		\node[right=0.5cm of clause_anchor,state] (c2) {$c_{n_2}$};
		\node[above=of c2,state]       (c1) {$c_{n_1}$};
		\node[below=of c2,state]       (c3) {$c_{n_3}$};

		\node[above=0.1cm of c1] {$\vdots$};
		\node[below=0cm of c3] {$\vdots$};
		
		\path[->]
			(s0) edge[actionedge]              node[action] {} (s0a)

			(s0a) edge[probedge,pos=0.75]      node[prob] {} (x)
			(s0a) edge[probedge,pos=0.75]      node[prob] {} (xdotsabove)
			(s0a) edge[probedge,pos=0.75,swap] node[prob] {} (xdotsbelow)

			(x) edge[actionedge,pos=0.9,looseness=0.75,bend left]       node[action,anchor=south east] {$x_m = \texttt{tt}$} (xt)
			(x) edge[actionedge,pos=0.9,looseness=0.75,bend right,swap] node[action,anchor=north east] {$x_m = \texttt{ff}$} (xf)

			(xt) edge[probedge,looseness=1.25,out=0,in=-90,pos=0.75]   node[prob] {$0.45$} (xt_commit_above)
			(xt) edge[probedge,looseness=1,out=0,in=-100,pos=0.75]     node[prob] {$0.05$} (xt_commit_below)
			(xt) edge[probedge,pos=0.75,swap]                          node[prob] {$0.5$}  (xt_commit_node)
			(xf) edge[probedge,looseness=1,out=1,in=90,swap,pos=0.75]  node[prob] {$0.5$}  (xf_commit)
			(xf) edge[probedge,pos=0.75]                               node[prob] {$0.5$}  (xf_commit_node)

			(xt_commit_above) edge[directedge,loop left] (xt_commit_above)
			(xt_commit_below) edge[directedge,loop right] (xt_commit_below)
			(xf_commit) edge[directedge,loop right] (xf_commit)

			(xf_commit_node) edge[actionedge] (xf_action)
			(xt_commit_node) edge[actionedge] (xt_action)

			(xt_action) edge[probedge] (c1)
			(xt_action) edge[probedge] (c2)
			(xf_action) edge[probedge] (c3)

			(c1) edge[directedge,loop right] (c1)
			(c2) edge[directedge,loop right] (c2)
			(c3) edge[directedge,loop right] (c3)
		;

%		\begin{scope}[on background layer]
%			\node[fill=lightgray!50,rectangle,rounded corners=5pt,fit=(xt_commit_above) (xt_commit_below) (xf_commit)] {};
%		\end{scope}
	\end{tikzpicture}
	\caption{Gadget for variable $x_m$.
	Uniform transition probabilities are omitted for readability.}
	\label{fig:reach_np_hard_gadget}
\end{figure}

\begin{proof}%[Proof sketch]
	We prove hardness by reduction from 3-SAT.
	The core idea is to utilize observations from Fig.~\ref{fig:mdp_cvar_hard} and Ex.~\ref{ex:mdp_cvar_hard}, namely that CVaR constraints can be used to enforce a deterministic choice.

	Let $\{C_n\}$ be a set of $N$ clauses with $M$ variables $x_m$ and set the dimensions $d = N + M$.
	By abuse of notation, $n$ refers to the dimension of clause $C_n$ and $m$ to the one of variable $x_m$, respectively.

	The gadget for the reduction is outlined in Fig.~\ref{fig:reach_np_hard_gadget}.
	Observe that, due to the structure of the MDP, we have that $\ObjFun^\QueryObjReach = \ObjFun^\QueryObjMean$.

	Overall, the reduction works as follows.
	Initially, a state $\texttt{?}_m$, representing the variable $x_m$, is chosen uniformly.
	In this state, the strategy is asked to give the valuation of $x_m$ through the actions \enquote{$x_m = \texttt{tt}$} or \enquote{$x_m = \texttt{ff}$}.
	As seen in Ex.~\ref{ex:mdp_cvar_hard}, the structure of the shaded states can be used to enforces a deterministic choice between the two actions.
	Particularly, in dimension $m$ we require $\CVaR_p \geq 5$ for $p = \frac{M-1}{M} + \frac{1}{M} \cdot 0.5 \cdot 0.2$.
	Since all other gadgets yield 0 in dimension $m$ and only half of the runs going through $\texttt{?}_m$ end up in the shaded area, this corresponds to Ex.~\ref{ex:mdp_cvar_hard}, where $p = 0.2$.

	Once in either state $x_m$ or $\overline{x}_m$, a state $c_n$ corresponding to a clause $C_n$ satisfied by this assignment is chosen uniformly.
	In the example gadget, we would have $x_m \in C_{n_1} \intersection C_{n_2}$, and $\overline{x}_m \in C_{n_3}$.
	We set the reward of $c_n$ to $1_n$.
	Then a clause $c_n$ is satisfied under the assignment if the state $c_n$ is visited with positive probability, e.g. if $\CVaR_1 \geq \frac{1}{M} \cdot 0.5 \cdot \frac{1}{N}$.
	Clearly, a satisfying assignment exists iff a strategy satisfying these constraints exists.
\end{proof}

\subsection{NP-completeness and strategies for reachability}

For weighted reachability, we prove that the previously presented bound is tight, i.e.\ that the weighted reachability problem with multiple dimensions and $\CVaR$ constraints is NP-complete when $d$ is part of the input and $P$ otherwise.
First, we show that the strategy bounds of the single dimensional case directly transfer.
Intuitively, this is the case since only the steady state distribution over the target set $\reachSet$ is relevant, independent of the dimensionality.

\begin{theorem}
	Two-memory stochastic strategies (i.e.\ with both randomization and stochastic update) are sufficient for $\QueryMDP_{\QueryObjReach, \QueryDimMulti}^{\{\QueryCritExp, \QueryCritVaR, \QueryCritCVaR\}}$.
	Moreover, if $\reward_j(s) \geq 0$ for all $s \in \reachSet$ and $j \in [d]$, then memoryless randomizing strategies are sufficient.
\end{theorem}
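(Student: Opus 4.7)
The plan is to lift the single-dimensional results of Thm.~\ref{stm:mdp_exp_reach_strategies} and Thm.~\ref{stm:mdp_exp_reach_strategies_general} coordinate-wise, exploiting a key structural observation: since every target state is absorbing, the value $X_j$ of dimension $j$ on a run is a deterministic function of which target (if any) is reached first. Consequently, the joint CDF of $(X_1,\dots,X_d)$ under a strategy $\strategy$ is fully determined by the distribution on $\reachSet \cup \{\bot\}$, where $\bot$ captures runs trapped in non-target MECs. Matching (or improving) this distribution therefore suffices to preserve every $\Expectation$, $\VaR$, and $\CVaR$ constraint in every dimension.

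For the non-negative case (memoryless randomizing strategies), I would start from an arbitrary witness $\strategy$ and invoke \cite[Theorem~3.2]{DBLP:journals/lmcs/EtessamiKVY08} to extract a memoryless randomizing $\strategy'$ with $\Prob^{\strategy'}[\LtlEventually s] \geq \Prob^{\strategy}[\LtlEventually s]$ for every $s \in \reachSet$. Since $\reward_j(s)\geq 0$ for all targets, the value $X_j$ is supported on $[0,\infty)$, so for every $r\geq 0$
\begin{equation*}
\Prob^{\strategy'}[X_j \leq r] = 1 - \sum_{s\in \reachSet:\, \reward_j(s)>r} \Prob^{\strategy'}[\LtlEventually s] \leq \Prob^{\strategy}[X_j \leq r],
\end{equation*}
while both sides equal $0$ for $r<0$. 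Thus in every dimension the CDF of $X_j$ under $\strategy'$ stochastically dominates that under $\strategy$, and the monotonicity of all three operators under stochastic dominance (Sec.~\ref{sec:app:cvar_prop}) preserves every constraint of the query simultaneously.

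For the general case (two-memory stochastic strategies), the idea is to re-use the construction from the proof of Thm.~\ref{stm:mdp_exp_mean_strategies} (ultimately \cite[Prop.~4.2]{DBLP:journals/corr/abs-1104-3489}) since it depends only on matching the MEC-occupation probabilities $q_i = \Prob^\strategy[\LtlEventually\LtlAlways M_i]$, and not on the particular payoff. Concretely, I build $\strategy'$ with memory $\{\mathit{search}, \mathit{remain}\}$: in $\mathit{search}$ mode, a memoryless randomizing \enquote{reaching} policy redistributes the runs so that each MEC $M_i$ is hit with probability $q_i$, while a stochastic memory update switches to $\mathit{remain}$ upon entering $M_i$, after which any memoryless strategy keeping the run inside $M_i$ is played (trivial for target MECs, which are absorbing singletons; arbitrary for non-target MECs, which contribute $0$ in every dimension anyway). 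The induced distribution on $\reachSet \cup \{\bot\}$ coincides with that of $\strategy$, so the joint CDF is identical and every constraint is preserved exactly.

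The main obstacle is resisting the temptation to invoke a multi-dimensional mean-payoff counterpart, which has not yet been established at this point of the paper. This is circumvented by the observation above: because targets are absorbing and non-target MECs yield $0$ in every dimension, the reduction to mean payoff is not required, and the MEC-distribution-matching construction of Thm.~\ref{stm:mdp_exp_mean_strategies} carries over without modification to the multi-dimensional reachability setting.
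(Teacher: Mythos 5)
Your proposal is correct and follows essentially the same route as the paper, whose proof simply points back to the reasoning of Thm.~\ref{stm:mdp_exp_reach_strategies} (via \cite[Theorem~3.2]{DBLP:journals/lmcs/EtessamiKVY08} and stochastic dominance for the non-negative case) and of Thm.~\ref{stm:mdp_exp_mean_strategies} (the two-memory MEC-occupation-matching construction for the general case). You merely make explicit the observation the paper states informally before the theorem --- that only the induced distribution over $\reachSet$ (plus the non-target MECs, which yield $\mathbf{0}$) matters, independently of the dimension --- so there is no gap and no genuinely different idea.
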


\begin{proof}
	Follows directly from the reasoning used in the proofs of Thm.~\ref{stm:mdp_exp_mean_strategies} and Thm.~\ref{stm:mdp_exp_reach_strategies}.
\end{proof}

\begin{figure}
	\begin{enumerate}
		\renewcommand{\labelenumi}{(\arabic{enumi}) }
		\item \label{fig:reach_lp_multi:non-neg}
		All variables $y_a$, $x_s$, $\underline{x}_s^j$ are non-negative.
		\setcounter{enumi}{3}
		\item \label{fig:reach_lp_multi:var_split}
		$\VaR$-consistent split for $j \in [d]$:
		\begin{align*}
			\underline{x}_s^j & = x_s \text{ for $s \in \reachSet_<^j$} & \underline{x}_s^j & \leq x_s \text{ for $s \in \reachSet_=^j$}
		\end{align*}
		\item \label{fig:reach_lp_multi:probability_split}
		Probability-consistent split for $j \in [d]$:
		\begin{equation*}
			{\sum}_{s \in \reachSet_\leq^j} \underline{x}_s^j = \QueryProbCVaRVec_j
		\end{equation*}
		\item \label{fig:reach_lp_multi:satisfaction}
		CVaR and expectation satisfaction for $j \in [d]$:
		\begin{align*}
			{\sum}_{s \in \reachSet_\leq^j} \underline{x}_s \cdot \reward(s) & \geq \QueryProbCVaR_j \cdot \QueryThreshCVaRVec_j &
			{\sum}_{s \in \reachSet} x_s \cdot \reward_j(s) & \geq \QueryThreshExpVec_j
		\end{align*}
	\end{enumerate}
	\caption{LP used to decide multi-dimensional weighted reachability queries given a guess $\textbf{t}$ of $\VaR_{\QueryProbCVaR_j}$.
		Equalities~\eqref{fig:reach_lp:flow} and \eqref{fig:reach_lp:recurrent_switch} are as in Fig.~\ref{fig:reach_lp}, $\reachSet_\sim^j := \{s \in \reachSet \mid \reward_j(s) \sim \textbf{t}_j\}$, $\sim \in \{<, =, \leq\}$.}
	\label{fig:reach_lp_multi}
\end{figure}

\begin{theorem}
	$\QueryMDP_{\QueryObjReach, \QueryDimMulti}^{ \{\QueryCritExp, \QueryCritVaR, \QueryCritCVaR\}}$ is in NP if $d$ is a part of the input; moreover, it is in P for any fixed $d$.
\end{theorem}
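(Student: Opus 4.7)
My plan is to combine two ingredients already established in the paper: the reduction to memoryless randomizing strategies (under the attraction assumption, which we may assume w.l.o.g. as in Cor.~\ref{stm:mdp_exp_reach_ptime}), and the multi-dimensional LP sketched in Fig.~\ref{fig:reach_lp_multi}. The LP replaces the single scalar guess $t$ of the single-dimensional version by a vector $\textbf{t} \in \Rationals^d$, one guess per dimension; once $\textbf{t}$ is fixed, the LP is still of polynomial size in $|\MDP|$ and $d$, and hence solvable in polynomial time by standard algorithms. The whole question therefore reduces to bounding how many candidates for $\textbf{t}$ we must try.

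First I would establish the analogue of Thm.~\ref{stm:mdp_exp_reach_lp_correct} for the multi-dimensional LP: namely that (a) for any memoryless randomizing witness strategy $\strategy$, the vector $\textbf{t}$ defined by $\textbf{t}_j := \max\{r \in \reward_j(\States) \mid r \leq \VaR^\strategy_{\QueryProbCVaRVec_j}\}$ (with a small case split when $\Prob^\strategy[X_j < \VaR^\strategy_{\QueryProbCVaRVec_j}] < \QueryProbCVaRVec_j$, exactly as in the single-dimensional proof) makes the LP feasible, with the flow variables $y_a, x_s$ coming from \cite[Thm.~3.2]{DBLP:journals/lmcs/EtessamiKVY08} and the split variables $\underline{x}_s^j$ assigned dimension by dimension; and that (b) any feasible solution for a given $\textbf{t}$ again yields a memoryless randomizing strategy $\strategy$ via \cite[Thm.~3.2]{DBLP:journals/lmcs/EtessamiKVY08}, whose induced reachability distribution over $\reachSet$ is exactly $(x_s)_{s\in\reachSet}$, so that the per-dimension constraints translate into CVaR, VaR, and expectation satisfaction by the very same computation as in the single-dimensional case (the dimensions are only coupled through the shared flow $(x_s)$ and decouple once $\textbf{t}$ is fixed). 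Soundness and completeness of the LP are thus essentially a dimension-wise repetition of the argument already done for Fig.~\ref{fig:reach_lp}.

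Second, I would observe that in dimension $j$ the random variable $\ObjFun^{\QueryObjReach}_j$ is discrete and takes values in $\reward_j(\States)$, so the relevant value of $\VaR^\strategy_{\QueryProbCVaRVec_j}$ always lies in $\reward_j(\States) \cup \{-\infty\}$, giving at most $|\States|+1$ candidates per dimension. From this the two complexity bounds are immediate. For NP membership when $d$ is part of the input: nondeterministically guess $\textbf{t} \in \prod_{j=1}^d \reward_j(\States)$ (a bit string of polynomial size), then solve the resulting polynomial-size LP deterministically in polynomial time, accepting iff the LP is feasible. For fixed $d$: enumerate the at most $(|\States|+1)^d$ candidate vectors $\textbf{t}$, which is polynomial in $|\MDP|$, and solve the corresponding LP for each; accept if any is feasible.

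The main obstacle, and the only place I expect nontrivial bookkeeping, is the case analysis in part (a) when $\Prob^\strategy[X_j < \VaR^\strategy_{\QueryProbCVaRVec_j}] < \QueryProbCVaRVec_j$ in some dimension, where an atom at the VaR contributes partial mass and one must exhibit values $\underline{x}^j_s \leq x_s$ on $\reachSet_=^j$ that jointly satisfy Equality~\eqref{fig:reach_lp_multi:probability_split} and Inequality~\eqref{fig:reach_lp_multi:satisfaction}; this works dimension-wise because the $\underline{x}^j_s$ for different $j$ are independent variables only tied to the shared $x_s$ through the componentwise cap, so no cross-dimensional conflict can arise. Everything else is a routine lift of the single-dimensional construction.
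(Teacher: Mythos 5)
Your proposal matches the paper's own argument: guess the VaR vector $\textbf{t}$ componentwise from the (polynomially many per dimension) attainable reward values, verify feasibility with the LP of Fig.~\ref{fig:reach_lp_multi} whose correctness is a dimension-wise repetition of Thm.~\ref{stm:mdp_exp_reach_lp_correct}, and note that the guess has polynomial length (giving NP) while the total number of candidate vectors, at most $\cardinality{\reachSet}^d$, is polynomial for fixed $d$ (giving P). This is essentially the same proof, so no further comparison is needed.
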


\begin{proof}[Proof sketch]
	To prove containment, we guess the VaR threshold vector $\mathbf{t}$ out of the set of potential ones, namely $\{r \mid \exists i \in [d], s \in \reachSet. \reward_i(s) = r\}^d$ and use an LP to verify the solution.
	We again assume that each MEC can reach the target set and is single-state, as we did for Fig.~\ref{fig:reach_lp}.
	The arguments used to resolve this assumption are still applicable in the multi-dimensional setting.
	The LP consists of the flow Equalities~\eqref{fig:reach_lp:flow} and \eqref{fig:reach_lp:recurrent_switch} from the LP in Fig.~\ref{fig:reach_lp} together with the modified (In)Equalities~\eqref{fig:reach_lp_multi:var_split}-\eqref{fig:reach_lp_multi:satisfaction} as shown in Fig.~\ref{fig:reach_lp_multi}.

	The difference is that we extract the worst fraction of the flow \emph{in each dimension}.
	Consequently, we have $d$ instances of each $\underline{x}_s$ variable, namely $\underline{x}_s^j$.
	The number of possible guesses $\mathbf{t}$ is bounded by $\cardinality{\reachSet}^d$ and thus the guess is of polynomial length.
	For a fixed $d$ the bound itself is polynomial and hence, as previously, we can try out all vectors.
\end{proof}

\subsection{Upper bounds of mean payoff}

In this section, we provide an upper bound on the complexity of mean-payoff queries.
Strategies in this context are known to have higher complexity.

\begin{proposition}[\cite{DBLP:journals/corr/abs-1104-3489}]
	Infinite memory is necessary for $\QueryMDP^{\{\QueryCritExp\}}_{\QueryObjMean, \QueryDimMulti}$.
\end{proposition}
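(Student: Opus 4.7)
The plan is to establish this by exhibiting a concrete MDP and a multi-dimensional expectation query that cannot be met by any finite-memory strategy, while some infinite-memory strategy satisfies it, following the construction of~\cite{DBLP:journals/corr/abs-1104-3489}. Since the statement is a separability result, the whole argument reduces to designing the right witness instance and bounding what finite-memory strategies can achieve on it.

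First, I would set up an MDP whose end-component structure and reward vectors make the set of expected mean-payoff vectors achievable by general strategies strictly larger than the set attainable by finite-memory strategies. The key leverage is that any finite-memory strategy $\strategy$ induces a finite Markov chain $\MDP^\strategy$ on state-memory pairs, which by Remark~\ref{rem:mec_almost_sure} almost surely reaches one of its BSCCs; inside each BSCC a single deterministic mean-payoff vector is attained almost surely. Hence $\Expectation^\strategy[\ObjFun^\QueryObjMean]$ lies in the convex hull of finitely many per-BSCC vectors, all of which are constrained to be realizable by a memoryless behaviour once the strategy's finite memory has stabilised.

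Second, I would pick an expectation threshold $\QueryThreshExpVec$ lying outside every such convex hull that arises from a finite-memory strategy, yet attainable by an infinite-memory strategy. The witness strategy uses increasingly long, carefully scheduled phases -- for instance doubling-length segments that alternate between behaviours realising different extremal mean-payoff vectors of the MDP -- so that the $\liminf$ of the running average in expectation is the desired interior point of the full achievable polytope, rather than any convex combination accessible to a bounded-memory strategy.

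The main obstacle is showing that the threshold genuinely separates the two classes; it is not enough to observe that finite-memory strategies can only use finitely many recurrent behaviours, one also has to rule out that arbitrary memoryless randomisation recovers the same expected vector. This is the quantitative core of~\cite{DBLP:journals/corr/abs-1104-3489}: the MDP is engineered so that transient reachability of the different end components couples to the randomisation used inside them in a way that prevents finite-memory strategies from realising the target convex combination, whereas an infinite-memory schedule can. I would appeal directly to that construction to close the argument.
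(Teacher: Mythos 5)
The paper does not actually prove this proposition: it is imported verbatim as a known result of \cite{DBLP:journals/corr/abs-1104-3489}, so there is no in-paper argument to match your attempt against. Your first step is sound and is indeed the standard scaffolding: a finite-memory strategy induces a finite Markov chain on state--memory pairs, almost every run settles in a BSCC, the mean-payoff vector is almost surely constant in each BSCC, and hence the expected vector lies in the convex hull of finitely many per-BSCC vectors. That part is fine.

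The gap is in everything after that. First, the entire content of the proposition is the existence of a concrete witness instance --- an MDP and a threshold vector that is achievable yet lies outside every such finite-memory convex hull --- and you never exhibit one; you explicitly defer it to \cite{DBLP:journals/corr/abs-1104-3489}. A proof that says ``appeal directly to that construction'' is a citation, not a proof, and in particular it does not discharge the hard direction you yourself identify (ruling out that memoryless or bounded-memory randomisation, including stochastic memory update and initial randomisation over which end component to commit to, already realises the target vector). Second, the infinite-memory witness you sketch is the wrong tool for this paper's semantics: $\ObjFun^{\QueryObjMean}$ is defined via $\liminf$ in each dimension, and ``doubling-length segments that alternate between behaviours realising different extremal mean-payoff vectors'' is the classical trick for $\limsup$-type objectives, where long phases push each coordinate's $\limsup$ towards its own optimum. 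Under $\liminf$ the same schedule does the opposite --- during a long phase devoted to one behaviour, the running frequency of the other collapses, so each coordinate's $\liminf$ drops below the intended convex combination rather than attaining it. The infinite-memory strategies that do work here (cf.\ the Markov strategies invoked via \cite{DBLP:journals/lmcs/ChatterjeeKK17} in the proof of Lem.~\ref{stm:mean_multi_mec_constant_strategies}) instead schedule switches with vanishing frequency so that the empirical action frequencies \emph{converge} almost surely to a prescribed invariant measure; they do not oscillate between extremal behaviours. As written, your witness strategy would not achieve the claimed vector, so the separation does not go through.
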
 %
Note that this directly transfers to $\QueryMDP^{\{\QueryCritCVaR\}}_{\QueryObjMean, \QueryDimMulti}$, as $\CVaR_1 = \Expectation$.
However, closing gaps between lower and upper bounds for the mean payoff objective is notoriously more difficult.
For instance, $\QueryMDP_{\QueryObjMean,\QueryDimMulti}^{\{\QueryCritVaR\}}$ is known to be in EXP, but not even known to be NP-hard, neither is $\QueryMDP_{\QueryObjMean,\QueryDimMulti}^{\{\QueryCritExp,\QueryCritVaR\}}$.
Since we have proven that $\QueryMDP_{\QueryObjMean,\QueryDimMulti}^{\{\QueryCritCVaR\}}$ is NP-hard, we can expect that obtaining the matching NP upper bound will be yet more difficult.
The fundamental difference of the multi-dimensional mean-payoff case is that the solutions within MECs cannot be pre-computed, rather non-trivial trade-offs must be considered.
Moreover, the trade-offs are not \enquote{local} and must be synchronized over all the MECs, see~\cite{DBLP:conf/lics/ChatterjeeKK15} for details.

We now observe that, as opposed to quantile queries, i.e. $\VaR$ constraints, the behaviour inside each MEC can be assumed to be quite simple.
Our results primarily rely on \cite{DBLP:journals/lmcs/ChatterjeeKK17} and use a similar notation.
In particular, given a run $\run$, $\vfreq_a(\run)$ yields the average frequency of action $a$, i.e. $\vfreq_a(\run) := \liminf_{n \to \infty} \frac{1}{n} \sum_{t = 1}^n \mathbbm{1}_a(a_t)$, where $a_t$ refers to the action taken by $\run$ in step $t$.

\begin{definition}
	A strategy $\strategy$ is \emph{MEC-constant} if for all $M_i \in \MECs$ with $\Prob^\strategy[\LtlEventually\LtlAlways M_i] > 0$ and all $j \in [d]$ there is a $v \in \Reals$ such that
%	$\Probability[\ObjFun^\QueryObjMean_j(\rho)=v\mid \rho \text{ ends in }C]=1$.
	$\Prob^\strategy[\ObjFun^\QueryObjMean_j = v \mid \LtlEventually\LtlAlways M_i] = 1$.
\end{definition}

\begin{lemma} \label{stm:mean_multi_mec_constant_strategies}
	MEC-constant strategies are sufficient for $\QueryMDP^{\{\QueryCritExp, \QueryCritCVaR\}}_{\QueryObjMean, \QueryDimMulti}$.
\end{lemma}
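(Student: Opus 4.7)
The plan is to start from an arbitrary witness strategy $\strategy$ and construct an MEC-constant strategy $\strategy'$ by collapsing, inside each MEC, the conditional distribution of every mean-payoff coordinate to a Dirac mass at its conditional mean. By Remark~\ref{rem:mec_almost_sure} every run under $\strategy$ almost surely eventually remains in some MEC, so setting $p_i := \Prob^\strategy[\LtlEventually\LtlAlways M_i]$ and, whenever $p_i > 0$, $\vec{r}_i := \Expectation^\strategy[\ObjFun^\QueryObjMean \mid \LtlEventually\LtlAlways M_i] \in \Reals^d$, partitions the probability space in a useful way. Since $\vec{r}_i$ lies in the achievable mean-payoff region of the sub-MDP induced by $M_i$, a standard frequency-LP argument produces a memoryless randomised strategy $\strategy_i$ on $M_i$ whose induced chain yields $\ObjFun^\QueryObjMean = \vec{r}_i$ almost surely (Birkhoff's ergodic theorem applied on each recurrent class).

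Next I invoke the two-memory stochastic-update construction of Proposition~4.2 of~\cite{DBLP:journals/corr/abs-1104-3489}, used already in the proof of Thm.~\ref{stm:mdp_exp_mean_strategies}: in the \enquote{searching} memory mode the strategy emulates $\strategy$ just long enough to enter and commit to each MEC $M_i$ with probability exactly $p_i$, and in the \enquote{remain} mode it plays $\strategy_i$. The resulting $\strategy'$ is MEC-constant by construction; moreover, conditioned on $\LtlEventually\LtlAlways M_i$, its $j$-th mean-payoff coordinate is almost surely equal to $\vec{r}_i^{\,j}$. Consequently $\Expectation^{\strategy'}[\ObjFun^\QueryObjMean] = \sum_i p_i \vec{r}_i = \Expectation^\strategy[\ObjFun^\QueryObjMean]$ componentwise, so all expectation constraints are preserved.

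For the CVaR constraints, fix a dimension $j$, write $X, X'$ for the $j$-th mean-payoff coordinate under $\strategy, \strategy'$, and use the dual representation $\CVaR_p(Y) = \sup_{c \in \Reals}\bigl(c - \tfrac{1}{p}\Expectation[(c-Y)^+]\bigr)$, which under the atom-aware formula from the previous section is verified by a direct calculation at $c = \VaR_p(Y)$. Since the map $y \mapsto (c-y)^+$ is convex for every fixed $c$, Jensen's inequality applied conditionally on each MEC gives $\Expectation^\strategy[(c-X)^+ \mid \LtlEventually\LtlAlways M_i] \geq (c - \vec{r}_i^{\,j})^+ = \Expectation^{\strategy'}[(c-X')^+ \mid \LtlEventually\LtlAlways M_i]$; summing with weights $p_i$ yields $\Expectation^\strategy[(c-X)^+] \geq \Expectation^{\strategy'}[(c-X')^+]$ for every $c$, and taking the supremum concludes $\CVaR_{\QueryProbCVaRVec_j}^{\strategy'} \geq \CVaR_{\QueryProbCVaRVec_j}^\strategy$. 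The main obstacle is realising the stationary replacements $\strategy_i$ globally while still hitting the original MEC-reaching probabilities $p_i$ — which is exactly what the cited two-memory construction provides — while verifying the dual CVaR representation for distributions with atoms is the other point requiring some (but only routine) care; note in particular that the convexity lemma \ref{stm:cvar_quasiconvex} alone is insufficient, as it yields only an upper bound on $\CVaR_p^\strategy$ and not the desired comparison with $\CVaR_p^{\strategy'}$.
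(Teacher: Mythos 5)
Your overall skeleton is the same as the paper's: keep the MEC-absorption probabilities $p_i$, replace the behaviour inside each MEC by a strategy that makes the mean payoff almost surely constant, and glue these together with the two-memory stochastic-update construction of Thm.~\ref{stm:mdp_exp_mean_strategies}. Your CVaR step, however, is genuinely different and valid: instead of the paper's partitioning theorem (Thm.~\ref{stm:cvar_partitioning}) combined with $\CVaR \leq \Expectation$, you use the Rockafellar--Uryasev dual representation together with conditional Jensen's inequality for the convex map $y \mapsto (c-y)^+$. This is arguably cleaner — it avoids the auxiliary probabilities and the subset $W'$ of the partitioning theorem — at the cost of verifying the dual representation against the paper's atom-aware definition of $\CVaR_p$; that verification does go through, since evaluating the dual objective at $c = \VaR_p$ reproduces the paper's formula and the supremum is attained there because the paper's $\VaR_p$ is a $p$-quantile.

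The one step that fails as written is the claim that a \emph{memoryless} randomised strategy $\strategy_i$ on $M_i$ yields $\ObjFun^\QueryObjMean = \vec{r}_i$ almost surely. A memoryless strategy on a MEC induces a Markov chain that may have several recurrent classes; Birkhoff's theorem — as your own parenthetical concedes — gives an almost-surely constant mean-payoff vector \emph{per class}, and these per-class vectors generally differ. In more than one dimension the target $\vec{r}_i$ is typically a strict convex combination of vectors achievable only by distinct ergodic classes (e.g.\ a MEC containing two cycles with reward vectors $(1,0)$ and $(0,1)$ and target $(\tfrac{1}{2},\tfrac{1}{2})$), so the runs split among the classes, the conditional distribution of $\ObjFun^\QueryObjMean$ is not Dirac, and $\strategy_i$ is \emph{not} MEC-constant. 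Realising such a point as an almost-sure frequency vector on every run requires the switching Markov strategies of \cite[Cor.~5.5]{DBLP:journals/lmcs/ChatterjeeKK17}, which use infinite deterministic memory — exactly what the paper invokes, and consistent with the $\mathrm{R}/\infty$-DMEM entry in Table~\ref{tbl:results}. With that substitution the rest of your argument is sound.
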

\begin{proof}
	Fix an MDP $\MDP$ with MECs $\MECs = \{M_1, \dots, M_n\}$, reward function $\reward$ and a strategy $\strategy$.
	Further, define $p_i = \Prob^\strategy[\LtlEventually\LtlAlways M_i]$.
	We construct a strategy $\strategy'$ so that (i)~$\Prob^{\strategy'}[\LtlEventually\LtlAlways M_i] = p_i$ for all $M_i$, and (ii)~all behaviours of $\strategy$ on a MEC $M_i$ are \enquote{mixed} into each run on $M_i$, making it MEC-constant.

	We first define the mixing strategies $\strategy_i$, achieving point (ii).
	By \cite[Sec.~4.1]{DBLP:journals/lmcs/ChatterjeeKK17}, there are frequencies $(x_a)_{a \in \Actions}$ which
	\begin{itemize}
		\item satisfy $\sum_{a \in A} x_a \cdot \Trans(a, s) = \sum_{a \in \AvAct(s)} x_a$ for all $s \in \States$,
		\item for each action $a$ we have $\Expectation^\strategy[\vfreq_a] \leq x_a$, and
		\item $\sum_{a \in \Actions \intersection M_i} x_a = p_i$.
	\end{itemize}
	By \cite[Cor.~5.5]{DBLP:journals/lmcs/ChatterjeeKK17}, there is a (Markov) strategy $\strategy_i$ on $M_i$ where
	\begin{equation*}
		\Prob^{\strategy_i}\left[\vfreq_a = x_a / p_i \right] = 1.
	\end{equation*}
	Consequently, $\strategy_i$ is almost surely constant on $M_i$ w.r.t. $\ObjFun^\QueryObjMean$.
	We apply the reasoning used in the proof of Thm.~\ref{stm:mdp_exp_mean_strategies} to obtain the combined strategy $\strategy'$ which achieves point (i) and switches to $\strategy_i$ upon remaining in $M_i$.

	Now, fix any  $j \in [d]$, $M_i \in \MECs$, and $p, q \in (0, 1)$.
	We have that $\Expectation^{\strategy_i}[\vfreq_a \mid \LtlEventually\LtlAlways M_i] \geq \Expectation^\strategy[\vfreq_a \mid \LtlEventually\LtlAlways M_i]$ by construction.
	Consequently, $\Expectation^{\strategy'}(\ObjFun^\QueryObjMean_j) \geq \Expectation^{\strategy}(\ObjFun^\QueryObjMean_j)$.

	Since $\strategy'$ is MEC-constant, we have $\CVaR_p^{\strategy'}(\ObjFun^\QueryObjMean_j \mid \LtlEventually\LtlAlways M_i) = \Expectation^{\strategy'}[\ObjFun^\QueryObjMean_j \mid \LtlEventually\LtlAlways M_i]$.
	Further, by $\Expectation^\strategy[\vfreq_a \mid \LtlEventually\LtlAlways M_i] \cdot p_i \leq \Expectation^{\strategy_i}[\vfreq_a]$ for all $a$, we get $\Expectation^\strategy[\ObjFun^\QueryObjMean_j \mid \LtlEventually\LtlAlways M_i] \leq \Expectation^{\strategy_i}[\ObjFun^\QueryObjMean_j]$.
	So, $\CVaR_p^{\strategy_i}(\ObjFun^\QueryObjMean_j) = \Expectation^{\strategy_i}[\ObjFun^\QueryObjMean_j] \geq \Expectation^\strategy[\ObjFun^\QueryObjMean_j \mid \LtlEventually\LtlAlways M_i] \geq \CVaR_q^{\strategy}(\ObjFun^\QueryObjMean_j \mid \LtlEventually\LtlAlways M_i)$, as $\CVaR \leq \Expectation$.

	Finally, we apply this inequality together with property (i), obtaining $\CVaR_p^{\strategy}(\ObjFun^\QueryObjMean_j) \leq \CVaR_p^{\strategy'}(\ObjFun^\QueryObjMean_j)$ by Thm.~\ref{stm:cvar_partitioning}.
\end{proof}

We utilize this structural property to design a linear program for these constraints.
However, similarly to the previously considered LPs, it relies on knowing the $\VaR$ for each $\QueryCritCVaR_\QueryProbCVaR$ constraint.
Due to the non-linear behaviour of $\CVaR$, the classical techniques do not allow us to conclude that $\VaR$ is polynomially sized and thus we do not present the \enquote{matching} NP upper bound, but a PSPACE upper bound, which we achieve as follows.

\begin{figure}
	\renewcommand{\labelenumi}{(\arabic{enumi}) }
	\begin{enumerate}
		\item \label{fig:mean_lp_multi:non-neg}
		All variables $y_a$, $y_s$, $x_a$, $x_s$ are non-negative.
		\item \label{fig:mean_lp_multi:transient}
		Transient flow for $s \in \States$:
		\begin{equation*}
			\mathbbm{1}_{\initstate}(s) + {\sum}_{a \in \Actions} y_a \cdot \Trans(a, s) = {\sum}_{a \in \AvAct(s)} y_a + y_s
		\end{equation*}
		\item \label{fig:mean_lp_multi:switching}
		Probability of switching in a MEC is the frequency of using its actions for $M_i \in \MECs$:
		\begin{equation*}
			{\sum}_{s \in M_i} y_s = {\sum}_{a \in M_i} x_a
		\end{equation*}
		\item \label{fig:mean_lp_multi:recurrent}
		Recurrent flow for $s \in \States$:
		\begin{equation*}
			x_s = {\sum}_{a \in \Actions} x_a \cdot \Trans(a, s) = {\sum}_{a \in \AvAct(s)} x_a
		\end{equation*}
		\item \label{fig:mean_lp_multi:satisfaction}
		$\CVaR$ and expectation satisfaction for $j \in [d]$:
		\begin{gather*}
			{\sum}_{s \in \States^j_\leq} x_s \cdot \reward_j(s) + \left(\QueryProbCVaRVec_j  - {\sum}_{s \in \States^j_\leq} x_s\right) \cdot \textbf{t}_j \geq \QueryProbCVaRVec_j \cdot \QueryThreshCVaRVec_j \\
			{\sum}_{s \in \States} x_s \cdot \reward_j(s) \geq \QueryThreshExpVec_j
		\end{gather*}
		\item \label{fig:mean_lp_multi:verify_mec}
		Verify MEC classification guess for $j \in [d]$:
		\begin{align*}
			{\sum}_{s \in M^j_\leq} x_s \cdot \reward_j(s) & \leq \textbf{t}_j \quad \text{ for $M^j_\leq \in \MECs^j_\leq \union \{M^j_=\}$} \\
			{\sum}_{s \in M^j_\geq} x_s \cdot \reward_j(s) & \geq \textbf{t}_j \quad \text{ for $M^j_\geq \in \MECs^j_> \union \{M^j_=\}$}
		\end{align*}
		\item \label{fig:mean_lp_multi:verify_var}
		Verify VaR guess for $j \in [d]$:
		\begin{align*}
			{\sum}_{s \in \States^j_\leq} x_s & \leq \QueryProbCVaRVec_j &
			{\sum}_{s \in \States^j_\leq \union M^j_=} x_s & \geq \QueryProbCVaRVec_j
		\end{align*}
	\end{enumerate}

	\caption{LP used to decide multi-dimensional mean-payoff queries given a guess \textbf{t} of $\VaR_{\QueryProbCVaR_j}$ and MEC classification $\MECs_\leq^j$, $M_=^j$, and $\MECs_>^j$.
		$\States_\sim^j := \{s \in \States \mid s \in M \text{ and } M \in \MECs_\sim^j\}$, $\sim \in \{\leq, >\}$.} \label{fig:mean_lp_multi}
\end{figure}

\begin{theorem}
	$\QueryMDP^{\{\QueryCritExp, \QueryCritCVaR\}}_{\QueryObjMean, \QueryDimMulti}$ is in PSPACE.
\end{theorem}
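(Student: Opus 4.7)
My plan is to reduce to MEC-constant strategies, enumerate MEC classifications in polynomial space, and decide feasibility of the resulting polynomial system via the existential theory of the reals.

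First, by Lemma~\ref{stm:mean_multi_mec_constant_strategies} it suffices to look for a MEC-constant witness strategy $\strategy$. For such a strategy, each marginal $\ObjFun^\QueryObjMean_j$ is discretely distributed, with an atom at the value $v_i^j$ achieved in each MEC $M_i$ that has positive settling probability; in particular $\textbf{t}_j := \VaR_{\QueryProbCVaRVec_j}(\ObjFun^\QueryObjMean_j)$ equals one such atom. This is exactly the structural property the LP in Fig.~\ref{fig:mean_lp_multi} exploits.

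Second, I would non-deterministically guess a MEC classification: for each dimension $j \in [d]$ a partition of $\MECs$ into $\MECs^j_\leq$, an optional singleton $M^j_=$, and $\MECs^j_>$, flagging the MECs whose mean payoff in dimension $j$ lies below, equal to, or above $\textbf{t}_j$ respectively. There are at most $3^{d \cdot |\MECs|}$ such classifications, each encodable in polynomial space, so a PSPACE machine can iterate through all of them by incrementing a polynomially sized counter.

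Third, for each fixed classification I would decide feasibility of the system in Fig.~\ref{fig:mean_lp_multi}, but now treating $\textbf{t}$ as a vector of \emph{free} real unknowns rather than constants. All constraints except the CVaR satisfaction in~(\ref{fig:mean_lp_multi:satisfaction}) are linear in $(y, x, \textbf{t})$; the CVaR satisfaction contributes the bilinear term $(\QueryProbCVaRVec_j - \sum_{s \in \States^j_\leq} x_s)\cdot \textbf{t}_j$. The result is a polynomial-size existentially quantified sentence of degree at most two over the reals, whose truth value can be decided in PSPACE by Canny's theorem on the existential theory of the reals.

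The hard part is precisely this bilinearity. When $M^j_= \neq \emptyset$, the \enquote{correct} value of $\textbf{t}_j$ is the conditional mean payoff of $M^j_=$ under the chosen frequencies, which may be an arbitrary rational in the achievability interval of $M^j_=$; hence there is no polynomial-size discrete set of candidate thresholds one could enumerate in advance to linearise the system. Routing the feasibility check through an ETR oracle sidesteps this obstacle, and combined with the outer PSPACE enumeration over MEC classifications keeps the overall procedure in PSPACE.
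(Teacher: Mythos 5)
Your proposal matches the paper's own argument essentially step for step: restrict to MEC-constant witnesses via Lemma~\ref{stm:mean_multi_mec_constant_strategies}, enumerate (or guess) the per-dimension MEC classifications in polynomial space, and discharge the resulting system of Fig.~\ref{fig:mean_lp_multi} with the thresholds $\textbf{t}$ existentially quantified through Canny's PSPACE procedure for the existential theory of the reals, precisely because the bilinear coupling between $\textbf{t}_j$ and the flow variables blocks a purely linear-programming route. This is the same approach as the paper, correctly executed.
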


\begin{proof}[Proof sketch]
	We use the existential theory of the reals, which is NP-hard and in PSPACE \cite{DBLP:conf/stoc/Canny88}, to encode our problem.
	The VaR vector \textbf{t} is existentially quantified and the formula is a polynomially sized program with constraints linear in VaR's and linear in the remaining variables.
	This shows the complexity result.
	
	The details of the procedure are as follows.
	For each $j \in [d]$, we use the existential theory of reals to guess the achieved VaR $\textbf{t} = \VaR_{\QueryProbCVaRVec_j}$.
	Further, we non-deterministically obtain the following polynomially-sized information (or deterministically try out all options in PSPACE).
	For each $j \in [d]$ and for each MEC $M_i$, we guess if the value achieved in $M_i$ is at most (denoted $M_i \in \MECs_\leq^j$) or above (denoted $M_i \in \MECs_>^j$) the respective $\textbf{t}_j$, and exactly one MEC $M^j_=$, which achieves a value equal to it.
	Given these guesses, we check whether the LP in Fig.~\ref{fig:mean_lp_multi} has a solution.
	%Note that the program is linear when $\textbf{t}$ is replaced by a constant.

	Equations~\eqref{fig:mean_lp_multi:non-neg}-\eqref{fig:mean_lp_multi:recurrent} describe the transient flow like the previous LP's and, additionally, the recurrent flow like in \cite[Sec.~9.3]{Puterman-book} or \cite{DBLP:journals/lmcs/EtessamiKVY08,DBLP:journals/corr/abs-1104-3489,DBLP:journals/lmcs/ChatterjeeKK17}.
	This addition is needed, since now our MECs are not trivial, i.e. single state.
	Again, Inequalities~\eqref{fig:mean_lp_multi:satisfaction} verify that the CVaR and expectation constraints are satisfied.
	Finally, Inequalities~\eqref{fig:mean_lp_multi:verify_mec} and \eqref{fig:mean_lp_multi:verify_var} verify the previously guessed information, i.e. the VaR vector and the MEC classification.

	Using the very same techniques, it is easy to prove that solutions to the LP correspond to satisfying strategies and vice versa.
	In particular, Inequalities~\eqref{fig:mean_lp_multi:verify_mec} and \eqref{fig:mean_lp_multi:verify_var} directly make use of the MEC-constant property of Lem.~\ref{stm:mean_multi_mec_constant_strategies}.
\end{proof}

While MEC-constant strategies are sufficient for $\QueryCritExp$ with $\QueryCritCVaR$, in contrast, they are not even for just $\QueryMDP^{\{\QueryCritVaR\}}_{\QueryObjMean, \QueryDimMulti}$ \cite[Ex.22]{DBLP:conf/lics/ChatterjeeKK15}.
Consequently, only an exponentially large LP is known for $\QueryMDP^{\{\QueryCritVaR\}}_{\QueryObjMean, \QueryDimMulti}$.
We can combine all the objective functions together as follows:

\begin{theorem}
	$\QueryMDP^{\{\QueryCritExp, \QueryCritVaR, \QueryCritCVaR\}}_{\QueryObjMean, \QueryDimMulti}$ is in EXPSPACE.
\end{theorem}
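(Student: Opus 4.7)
The plan is to combine the exponential-sized LP construction known for $\QueryMDP^{\{\QueryCritVaR\}}_{\QueryObjMean, \QueryDimMulti}$ from \cite{DBLP:conf/lics/ChatterjeeKK15} with the existential-theory-of-reals encoding used in the preceding PSPACE argument for $\{\QueryCritExp, \QueryCritCVaR\}$. Since MEC-constant strategies no longer suffice when $\VaR$ constraints are present, the construction must reason about non-trivial within-MEC trade-offs. The known $\VaR$ approach handles this by enumerating, for every MEC and every $\VaR$ dimension, whether the value achieved inside that MEC is above or below the $\VaR$ threshold; this yields up to $2^{\mathrm{poly}(|\MDP|, d)}$ MEC-classification combinations, producing an LP of exponential size whose recurrent-flow variables are indexed by these combinations.

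First, I would existentially quantify, as in the PSPACE case, the guessed vectors $\textbf{t}^{\CVaR}$ and $\textbf{t}^{\VaR}$ representing the $\VaR$ levels associated with every $\CVaR$ constraint and every $\VaR$ constraint, respectively. Second, I would non-deterministically (or by exhaustive enumeration within EXPSPACE) fix the MEC classification for every dimension, precisely as needed for the $\VaR$ construction, and additionally the distinguished MEC achieving the value equal to the respective $\VaR$ threshold. Third, I would build the LP of exponential size obtained from the $\VaR$-only construction, augmented with the expectation inequalities and with the $\CVaR$ inequalities in the form used in Fig.~\ref{fig:mean_lp_multi} (with $x_s$ replaced by the appropriate recurrent-flow variables of the exponential LP). The guessed $\VaR$ values appear only as additive constants in the right-hand sides, so for any fixed guess the resulting system is linear in the LP variables and of size exponential in the input.

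Finally, I would feed the combined formula, namely the existential quantification over $\textbf{t}^{\CVaR}, \textbf{t}^{\VaR}$ together with the exponential-sized system of linear (in)equalities, into the decision procedure for the existential theory of the reals \cite{DBLP:conf/stoc/Canny88}. That decision procedure runs in polynomial space in the size of its input; since the input is of exponential size, we obtain the EXPSPACE upper bound. The correctness of the reduction follows by the same two-direction argument as in the PSPACE case: any witness strategy induces, on the exponentially refined state space, transient and recurrent flows satisfying the LP, and conversely any LP solution can be realised by a stochastic-update strategy combining the $\VaR$-style within-MEC schedulers with the MEC-selection step from the proof of Theorem~\ref{stm:mdp_exp_mean_strategies}.

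The main obstacle is making the combination coherent: the $\VaR$-only construction relies on non-MEC-constant schedulers inside a MEC, while the $\CVaR$ argument used MEC-constancy to rewrite conditional $\CVaR$ as conditional expectation. I would resolve this by passing, inside each MEC, to the refined sub-MECs produced by the $\VaR$ construction; on every such refined component the behaviour is again effectively constant with respect to the relevant reward projection, so the CVaR-satisfaction inequality from Fig.~\ref{fig:mean_lp_multi} extends verbatim when the sums range over the exponentially many sub-MEC variables instead of over states.
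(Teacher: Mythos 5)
Your proposal matches the paper's argument: both split the recurrent flow into exponentially many components indexed by the subsets of dimensions in which the $\VaR$ threshold is met (as in LP $L$ of \cite{DBLP:conf/lics/ChatterjeeKK15}), add the expectation and $\CVaR$ inequalities from Fig.~\ref{fig:mean_lp_multi} over these refined flow variables, existentially quantify the guessed $\VaR$ levels, and invoke the PSPACE decision procedure for the existential theory of the reals on the exponentially sized formula to obtain EXPSPACE. This is essentially the same approach as the paper, with your discussion of reconciling MEC-constancy with the $\VaR$ sub-flows being a reasonable elaboration of details the paper leaves implicit.
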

\begin{proof}[Proof sketch]
	We proceed exactly as in the previous case, but now the flows in Equality~\eqref{fig:mean_lp_multi:recurrent} are split into exponentially many flows, depending on the set of dimensions where they achieve the given VaR threshold, see LP $L$ in \cite[Fig.~4]{DBLP:conf/lics/ChatterjeeKK15}.
	The resulting size of the program is polynomial in the size of the system and exponential in $d$.
	Hence the call to the decision procedure of the existential theory of reals results in the EXPSPACE upper bound.
\end{proof}

\begin{table*}
	\caption{
	Schematic summary of known and new results.
	Strategies are abbreviated by \enquote{C/$n$-M}, where C is either \emph{D}eterministic or \emph{R}andomizing, $n$ is the size of the memory, and M is either \emph{D}etereministic or \emph{S}tochastic \emph{MEM}ory.
	} %
	\small %
	\begin{tabular}{c|cc||c|cccc}
		$\QueryDim$  &              \multicolumn{2}{c||}{$\QueryDimSingle$}               &                                                            \multicolumn{5}{c}{$\QueryDimMulti$}                                                             \\
		$\QueryObj$  &                 \multicolumn{2}{c||}{\textsf{any}}                 &    $\QueryObjReach$    &                                                \multicolumn{4}{c}{$\QueryObjMean$}                                                 \\
		$\QueryCrit$ & $\cardinality{\QueryCrit} = 1$ & $\cardinality{\QueryCrit} \geq 2$ & $\CVaR \in \QueryCrit$ & $\{\QueryCritExp,\QueryCritVaR_0\}$ & $\{\QueryCritVaR\}$ & $\{\CVaR\}, \{\CVaR, \Expectation\}$ & $\{\Expectation, \CVaR, \VaR\}$ \\ \midrule
		  Complex.   &                      \multicolumn{2}{c||}{P}                       & NP-c., P for fixed $d$ &                  P                  &         EXP         &            NP-h., PSPACE             &         NP-h., EXPSPACE         \\
		   Strat.    &            D/1-MEM             &             R/2-SMEM              &        R/2-SMEM        &                                            \multicolumn{4}{c}{ \rule[0.5\parskip]{2cm}{0.1pt} \quad R/$\infty$-DMEM \quad \rule[0.5\parskip]{2cm}{0.1pt}}
			%\multicolumn{4}{c}{$\underbrace{\hspace{6cm}}_{\text{\small R/$\infty$-DMEM}}$}
	\end{tabular} \label{tbl:results} %
\end{table*}

\section{Conclusion}

We introduced the conditional value-at-risk for Markov decision processes in the setting of classical verification objectives of reachability and mean payoff.
We observed that in the single dimensional case the additional CVaR constraints do not increase the computational complexity of the problems.
As such they provide a useful means for designing risk-averse strategies, at no additional cost.
In the multidimensional case, the problems become NP-hard.
Nevertheless, this may not necessarily hinder the practical usability.
Our results are summarized in Table~\ref{tbl:results}.

We conjecture that the VaR's for given CVaR constraints are polynomially large numbers.
In that case, the provided algorithms would yield NP-completeness for $\QueryMDP^{\{\QueryCritCVaR\}}_{\QueryObjMean, \QueryDimMulti}$ and EXPTIME containment for $\QueryMDP^{\{\QueryCritExp,\QueryCritVaR, \QueryCritCVaR\}}_{\QueryObjMean, \QueryDimMulti}$, where the exponential dependency is only on the dimension. %, not the size of the system. %, where the dependency is polynomial

%% Acknowledgments
\begin{acks}
This research has been partially supported by the Czech Science Foundation grant No.~\mbox{18-11193S} and the German Research Foundation (DFG) project KR 4890/2 ``Statistical Unbounded Verification'' (383882557).
We thank Vojt{\v e}ch Forejt for bringing up the topic of CVaR and the initial discussions with Jan Kr{\v c}\'al and wish them both happy life in industry.
We also thank Michael Luttenberger and the anonymous reviewers for insightful comments and valuable suggestions.
\end{acks}

%% Commands \grantsponsor{<sponsorID>}{<name>}{<url>} and
%% \grantnum[<url>]{<sponsorID>}{<number>} should be used to
%% acknowledge financial support and will be used by metadata
%% extraction tools.
%This material is based upon work supported by the \grantsponsor{GS100000001}{National Science Foundation}{http://dx.doi.org/10.13039/100000001} under Grant No.~\grantnum{GS100000001}{nnnnnnn} and Grant No.~\grantnum{GS100000001}{mmmmmmm}.
%Any opinions, findings, and conclusions or recommendations expressed in this material are those of the author and do not necessarily reflect the views of the National Science Foundation.

%% Bibliography
\bibliography{ref}%
\newpage
%% Appendix
\appendix
\section{Appendix}

\subsection{Properties of the CVaR operator} \label{sec:app:cvar_prop}

\subsubsection{Details of the VaR definition}

In the main body, we mentioned a possible alternative definition of VaR, namely $\VaR_p := \inf \{r \in \Reals \mid F_X(r) \geq p\}$, which is quite symmetric our chosen definition $\VaR_p :=  \sup \{r \in \Reals \mid F_X(r) \leq p\}$.
Indeed, these definitions mostly are equivalent, especially on continuous random variables.
A subtle difference arises for discrete random variables, which we explain in the following.

Recall that in the proof of correctness for the LP in Fig.~\ref{fig:reach_lp}, we mentioned that under particular circumstances, namely $\underline{x}_s = x_s$ for all $s \in \reachSet_\leq$, we have $\VaR_p > t$, $\VaR_p = \min \{r \in \reward(\States) \mid r > t\}$ to be precise.
As we are only dealing with lower bound constraints, this doesn't matter in our case.

Essentially, the problem arises if there is an interval $I$ such that $F(r) = p$ for all $r \in I$.
The $\sup$-based definition of $\VaR_p$ now yields the upper bound of the interval, i.e. $\sup I$, while the $\inf$-based definition yields the lower bound $\inf I$.
This in turn complicates the decision procedure slightly, since we have to exclude the case $\underline{x}_s = 0$ for all $s \in \reachSet_=$ for the guessed VaR bound $t = \QueryThreshVaR$.

\subsubsection{Event-based definition}

Intuitively, one might think of defining $\CVaR$ in terms of the \enquote{worst $p$-quantile events}, i.e., in this setting, the set of worst runs in the system.
Formally, we would be interested in some set $\Omega_p \in \Salgebra$ such that $\measure(\Omega_p) = p$ and $X(\event) \leq \VaR_p(X)$ for all events $\event \in \Omega_p$.
Clearly, this set may not be uniquely defined.
Consider, for example, a system and random variable where all runs attains the same value $v$.
Then, we could choose \emph{any} appropriately sized set of runs as $\Omega_p$.

But, this interpretation has another, more immediate problem.
The amount of distinct runs in a system may easily be finite or even a singleton, for example any single-state system.
Then, no set of runs actually satisfies this condition.
Yet, when interpreting $\CVaR$ as a measure of risk, it may still make sense to include a \enquote{fraction} of some run into this set, or equivalently, give less weight to it.

To fix this problem, one can consider a slight modification of the probability space of the system.
Informally, we attach some uniformly chosen value between $0$ and $1$ to any run of the system.
This ensures that no run has positive weight, yet the probability of any particular sequence of visited states remains unchanged.

\subsubsection{Further insight into the non-linearity} \label{sec:app:cvar:nonlinear}

\begin{example}
	Recall Ex.~\ref{ex:mdp_cvar_hard} and again set $p = 0.2$.
	We claim that any mixed strategy $\strategy_\lambda = \lambda \strategy_a + (1-\lambda) \strategy_b$ for $\lambda \in (0, 1)$ yields a CVaR strictly less than $5$.
	Fix any such lambda.
	Under the mixed strategy $\strategy_\lambda$, a value of $0$ is obtained with probability $(1-\lambda) 0.1$, $5$ with $\lambda$, and $10$ with $(1-\lambda) 0.9$, respectively.
	Together, we have that $\VaR_p = 5$ iff $p < 0.1 (1-\lambda) + \lambda$, i.e. $\lambda > \frac{1}{9}$, and $\VaR_p = 10$ otherwise.
	We handle these two cases separately.
	In the former case, the $\CVaR_p$ is
	\begin{multline*}
		\tfrac{1}{p} \left((1-\lambda) 0.1 \cdot 0 + (p - 0.1 (1-\lambda)) \cdot 5\right) = 5 - 2.5 (1-\lambda) = \\
		= 2.5 (1 + \lambda).
	\end{multline*}
	On the other hand, in the latter we have
	\begin{multline*}
		\tfrac{1}{p} \left((1-\lambda) 0.1 \cdot 0 + \lambda \cdot 5 + (p - ((1-\lambda) 0.1 + \lambda)) \cdot 10 \right) = \\
		= \ldots = 5 (1 - 4 \lambda).
	\end{multline*}
	Together, we get the graph in Fig.~\ref{fig:cvar_nonlinear_graph}.
	
	\begin{figure}[h]
		\begin{tikzpicture}[auto]
		\begin{axis}[axis x line=bottom,
		x axis line style={-},
		x label style={at={(axis description cs:0.5,0)},anchor=north},
		y axis line style={->},
		y label style={at={(axis description cs:0,0)},anchor=south west},
		height=3cm,width={0.5\columnwidth},
		xmin=0,xmax=1,xtick={0,1},xmajorticks=true,xlabel={$\lambda$},
		ymin=0,ymax=6,ytick={0,2.5,5},ylabel={}]
		\draw (axis cs:0,5) -- (axis cs:1/9,2.5) -- (axis cs:1,5);
		\end{axis}
		\end{tikzpicture}
		\caption{Graph of $\CVaR_p(X_\lambda)$ for $X_\lambda$ as in Ex.~\ref{ex:mdp_cvar_hard}.}
		\label{fig:cvar_nonlinear_graph}
	\end{figure}
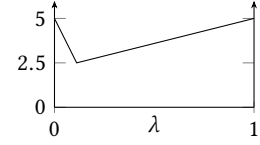
	
\end{example}

\subsubsection{Monotonicity}
%To provide some insight in the $\CVaR$ operator, we prove a general monotonicity property used later on.

\begin{proposition} \label{stm:cvar_monotone}
	Let $p \in (0, 1)$ and $X_1, X_2$ two random variables with CDF $F_1$ and $F_2$, respectively.
	Assume that $F_1$ is point-wise larger than $F_2$ up until $t = \VaR_p(X_1)$, i.e.\ the CDFs satisfy $F_1(r) \geq F_2(r)$ for all $r \in (-\infty, t]$.
	Then we have that $\CVaR_p(X_1) \leq \CVaR_p(X_2)$.
\end{proposition}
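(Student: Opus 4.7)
The plan is to reduce the monotonicity claim to the Rockafellar--Uryasev variational representation of $\CVaR$, which converts a pointwise CDF inequality into the desired $\CVaR$ inequality. First, I would verify from the paper's definition the identity
\[
p \cdot \CVaR_p(X) \;=\; pv \,-\, \Expectation[(v-X)^+], \qquad v = \VaR_p(X).
\]
Since $(v-X)^+$ vanishes on the atom of $X$ at $v$, one has $\Expectation[(v-X)^+] = v \cdot F_X(v^-) - \int_{(-\infty, v)} x \, dF_X$, and substituting this into the right-hand side reproduces precisely the definition of $\CVaR_p(X)$ given in Section~3. I would then extend this to the variational inequality
\[
p \cdot \CVaR_p(X) \;\geq\; pc \,-\, \Expectation[(c-X)^+] \qquad \text{for every } c \in \Reals,
\]
with equality at $c = v$; expressing the defect as $\int_v^c (p - F_X(u^-))\, du$ and using $F_X(r) \leq p$ for $r < v$ and $F_X(r) > p$ for $r > v$ makes this a short case split on $c > v$ versus $c < v$.

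The second ingredient is the layer-cake identity
\[
\Expectation[(c-X)^+] \;=\; \int_0^{\infty} \Probability[X < c - s]\, ds \;=\; \int_{-\infty}^c F_X(u^-)\, du,
\]
which is what transfers the pointwise CDF hypothesis into an inequality on expectations of positive parts.

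Putting the pieces together, I would apply the identity to $X_1$ with $c = t$ (obtaining equality) and the variational inequality to $X_2$ with the same $c$, then subtract:
\[
p\bigl(\CVaR_p(X_2) - \CVaR_p(X_1)\bigr) \;\geq\; \Expectation[(t-X_1)^+] - \Expectation[(t-X_2)^+] \;=\; \int_{-\infty}^{t} \bigl(F_1(u^-) - F_2(u^-)\bigr)\, du.
\]
The integrand is non-negative because the hypothesis $F_1(r) \geq F_2(r)$ on $(-\infty, t]$ passes to left limits by monotonicity, so the right-hand side is non-negative and $\CVaR_p(X_1) \leq \CVaR_p(X_2)$ follows.

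The main obstacle is the bookkeeping around the possible atom at $v$, precisely the subtlety that motivated the more general $\CVaR$ definition in Section~3. Once both the identity in the first step and the variational inequality are set up so that $F_X(v^-)$ and $F_X(v)$ appear in the right places, the Fubini step and the final subtraction are routine.
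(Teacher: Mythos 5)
Your proof is correct, and it takes a genuinely different route from the paper. The paper's argument is a monotone-coupling one: it constructs (for continuous distributions) a map $u$ with $F_1 \circ u = F_2$ and $u(x) \geq x$, transports the CVaR integral of $X_1$ onto the integration domain of $X_2$ by substitution, and then asserts that the general case follows ``analogously'' after relaxing the conditions on $u$ --- which is precisely where generalized inverses and atoms would need careful handling. You instead go through the Rockafellar--Uryasev variational representation $p\,\CVaR_p(X) = \max_c \bigl(pc - \Expectation[(c-X)^+]\bigr)$, verified against the paper's atom-corrected definition, and the layer-cake identity $\Expectation[(c-X)^+] = \int_{-\infty}^c F_X(u)\,du$. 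This has two concrete advantages: the pointwise hypothesis on the CDFs transfers directly to an inequality between $\Expectation[(t-X_1)^+]$ and $\Expectation[(t-X_2)^+]$ with no case distinction between continuous and atomic distributions, and by evaluating the identity at $c=t$ for $X_1$ but only the inequality at the same $c=t$ for $X_2$ you never need to locate or compare $\VaR_p(X_2)$ at all. The price is the up-front verification of the variational identity and inequality, but both reduce to the one-line computations you indicate (your sign convention for the ``defect'' $\int_v^c(p - F_X(u^-))\,du$ is ambiguous as written, but the case split on $c\gtrless v$ goes through with the correct facts $F_X(r)\leq p$ for $r<v$ and $F_X(r)>p$ for $r>v$ under the paper's $\sup$-based definition of $\VaR_p$). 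Your version is arguably the more complete proof of the two for the discrete random variables that actually arise in the paper.
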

\begin{proof}[Proof sketch]
	Let $X_1$ and $X_2$ be random variables and $p \in (0, 1)$ as in the theorem statement.
	Further, assume for now that $X_1$ and $X_2$ are continuous.
	This allows us to choose a measurable, strictly monotone $u : (-\infty, t] \to \Reals$ such that $F_1 \circ u = F_2$.
	Note that necessarily $u(x) \geq x$, since $F_1(x) \geq F_2(x)$ for all $x \leq t$.
	Define $V_i = (-\infty, \VaR_p(X_i)]$ the domain of integration for $\CVaR$.
	Then, by substitution, we get
	\begin{multline*}
		\CVaR_p(X_1) = \int_{V_1} x \ dF_1 = \int_{V_1} u^{-1} \circ u \ dF_1 = \\
		= \int_{u(V_1)} u^{-1} \ dF_1(u) = \int_{V_2} u^{-1} \ dF_2 \leq \int_{V_2} x \ dF_2,
	\end{multline*}
	concluding the proof.

	The general case can be proven analogously, only that one has to slightly relax the conditions on $u$ and deal with the previously mentioned corner cases.
\end{proof}

\begin{corollary}
	Let $p \in (0, 1)$ and $X_1, X_2$ two real valued random variables where $X_2$ stochastically dominates $X_1$.
	Then $\Expectation[X_1] \leq \Expectation[X_2]$, $\VaR_p(X_1) \leq \VaR_p(X_2)$, and $\CVaR_p(X_1) \leq \CVaR_p(X_2)$.
\end{corollary}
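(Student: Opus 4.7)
The plan is to derive all three inequalities directly from the hypothesis that $X_2$ stochastically dominates $X_1$, which by the definition given in the paper means the CDFs satisfy $F_2(r) \leq F_1(r)$ for every $r \in \Reals$. Each of the three operators is then a one-line consequence of this pointwise ordering, with $\CVaR$ handed to us by the preceding Proposition.

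First I would dispatch $\VaR$. Since $\VaR_p(X_i) = \sup\{r \in \Reals \mid F_i(r) \leq p\}$ and $F_2 \leq F_1$ pointwise, any $r$ with $F_1(r) \leq p$ also satisfies $F_2(r) \leq p$, so the defining set for $\VaR_p(X_1)$ is contained in the defining set for $\VaR_p(X_2)$, and the suprema are ordered accordingly. Next, for $\Expectation$, I would invoke the standard tail representation $\Expectation[X] = \int_0^\infty (1 - F_X(r))\,dr - \int_{-\infty}^0 F_X(r)\,dr$, valid whenever the expectation exists; the pointwise inequality $F_2 \leq F_1$ turns into $(1 - F_2) \geq (1 - F_1)$ on the positive side and $-F_2 \geq -F_1$ on the negative side, so both integrals increase when passing from $X_1$ to $X_2$.

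For $\CVaR$, I would simply observe that the assumption of Proposition~\ref{stm:cvar_monotone}, namely $F_1(r) \geq F_2(r)$ on $(-\infty, \VaR_p(X_1)]$, is trivially implied by the \emph{global} pointwise inequality $F_2 \leq F_1$. Applying that proposition yields $\CVaR_p(X_1) \leq \CVaR_p(X_2)$ immediately. The corner cases $p = 0$ and $p = 1$ are already handled: $p = 1$ reduces to the expectation case above, and $p = 0$ reduces to $\VaR_0$.

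I do not foresee a real obstacle here; the corollary is essentially a re-packaging of Proposition~\ref{stm:cvar_monotone} together with two textbook monotonicity facts. The only thing to be mildly careful about is the sign convention in the definition of stochastic dominance used in the paper (``$F_1$ dominates $F_2$'' means $F_1 \leq F_2$), so I would state the translation to $F_2 \leq F_1$ explicitly at the top of the proof to avoid confusion.
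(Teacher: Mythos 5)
Your proposal is correct and follows essentially the same route as the paper: the $\Expectation$ and $\VaR$ inequalities are read off directly from the pointwise ordering of the CDFs (the paper leaves this implicit where you spell out the tail representation and the containment of the defining sets), and the $\CVaR$ inequality is obtained by noting that global stochastic dominance implies the hypothesis of Proposition~\ref{stm:cvar_monotone}. Your explicit remark about the sign convention in the paper's definition of stochastic dominance is a sensible precaution but does not change the argument.
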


\begin{proof}
	For $\Expectation$ and $\VaR$, this immediately follows from the respective definition, for $\CVaR$ the statement follows from Prop.~\ref{stm:cvar_monotone}.
\end{proof}

\subsubsection{CVaR of conditioned variables}

In the proof of Lem.~\ref{stm:mean_multi_mec_constant_strategies}, we applied $\CVaR$ on a random variable together with a conditioning, i.e. $\CVaR_p(X \mid W)$, where $W \in \Salgebra$ is a measurable, non-zero measure event.
Formally, this refers to the CVaR of $X$ on a slightly modified probability space, where $\Prob'$ equals $\Prob$ conditioned on $W$.

For, e.g., expectation, it is known that conditioning on a partitioning of the probability space preserves the total value, i.e. for a finite set $\{ W_i \} \subseteq \Salgebra$ which partitions $\Omega$, we have $\Expectation[X] = \sum \Prob[W_i] \cdot \Expectation[X \mid W_i]$.
This does not hold for CVaR, see Sec.~\ref{sec:app:cvar:nonlinear}.
We instead show a weaker statement, used in the proof of Lem.~\ref{stm:mean_multi_mec_constant_strategies}.

\begin{theorem} \label{stm:cvar_partitioning}
	Fix a probability space $(\Omega, \Salgebra, \Prob)$, a random variable $X$ and $p \in [0, 1]$.
	Further, let $W = \{W_i\} \subseteq \Salgebra$ be a partitioning of $\Omega$, where all $W_i$ are measurable and have non-zero measure.
	Then, there exist $p_i$ and a set $W' \subseteq W$ such that
	\begin{equation*}
		\CVaR_p(X) = \frac{1}{\Prob[\Union_{W_i \in W'} W']}\sum_{W_i \in W'} \Prob[W_i] \CVaR_{p_i}(X \mid W_i)
	\end{equation*}
\end{theorem}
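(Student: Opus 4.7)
The plan is to identify the ``worst $p$-quantile event'' $W^* \in \Salgebra$ explicitly and split it across the partition $\{W_i\}$. First I set $v := \VaR_p(X)$ and define $W^* := \{X < v\} \cup E$, where $E$ is a measurable subset of $\{X = v\}$ with $\Prob[E] = p - \Prob[X < v]$. By the definition of $\CVaR$ from the preliminaries, this gives $\int_{W^*} X\, d\Prob = p \cdot \CVaR_p(X)$, so $W^*$ is precisely the event whose conditional mean is $\CVaR_p(X)$.

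Next, for each $W_i$ I set $p_i := \Prob[W^* \mid W_i]$ and let $W' := \{W_i \in W : p_i > 0\}$. The crucial observation is that, under the conditional measure $\Prob[\cdot \mid W_i]$, the event $W^* \cap W_i$ is itself the worst-$p_i$-quantile event of $X$: it consists of all of $\{X < v\} \cap W_i$ together with a consistent fraction of $\{X = v\} \cap W_i$. Hence, by the definition of $\CVaR$ applied to the conditional distribution of $X$ given $W_i$,
\[
\Prob[W_i] \cdot p_i \cdot \CVaR_{p_i}(X \mid W_i) \;=\; \int_{W^* \cap W_i} X \, d\Prob.
\]
Summing over $W_i \in W'$ and noting that $W^*$ is the disjoint union of the events $W^* \cap W_i$ yields
\[
p \cdot \CVaR_p(X) \;=\; \sum_{W_i \in W'} \Prob[W_i] \cdot p_i \cdot \CVaR_{p_i}(X \mid W_i),
\]
while $\sum_{W_i \in W'} \Prob[W_i] \cdot p_i = \Prob[W^*] = p$. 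This exhibits $\CVaR_p(X)$ as a convex combination of the $\CVaR_{p_i}(X \mid W_i)$. To land on the exact normalization $\Prob[\bigcup_{W_i \in W'} W_i]$ appearing in the statement, I would then invoke an intermediate-value argument: as $q$ ranges over $(0,1]$, the quantity $\CVaR_q(X \mid W_i)$ varies continuously and monotonically from the essential infimum of $X$ on $W_i$ up to $\Expectation[X \mid W_i]$, so the $p_i$'s can be re-selected (possibly enlarging $W'$ to include further parts that carry mass above $v$) in order to shift the weighting from $\Prob[W_i] \cdot p_i$ to $\Prob[W_i]$ while preserving the target average $\CVaR_p(X)$.

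The hard part will be the atomic boundary case $\Prob[X = v] > 0$, where a naive choice of $E$ need not exist as a measurable subset of $(\Omega, \Salgebra)$ with the prescribed mass inside each $W_i$. To handle this I would enrich the probability space by an auxiliary independent uniform random variable $U$ on $[0,1]$ and set $E := \{X = v\} \cap \{U < \alpha\}$ with $\alpha := (p - \Prob[X < v])/\Prob[X = v]$. The enrichment preserves the joint distribution of $X$, hence preserves every $\CVaR$ value and every conditional $\CVaR$ given $W_i$, but it guarantees a measurable splitting of the atom at $v$ across the partition. Once $E$ is in hand, every subsequent step is a direct application of the law of total probability and the definition of $\CVaR$.
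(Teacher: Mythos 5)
Your construction of the tail event $W^{*}$ (including the auxiliary uniform variable used to split the atom at $v$ measurably across the cells), the choice $p_i=\Prob[W^{*}\mid W_i]$, and the resulting identity
\begin{equation*}
	p\cdot\CVaR_p(X)\;=\;\sum_{W_i\in W'}\Prob[W_i]\,p_i\,\CVaR_{p_i}(X\mid W_i),
	\qquad
	\sum_{W_i\in W'}\Prob[W_i]\,p_i=\Prob[W^{*}]=p,
\end{equation*}
are all correct, and are in fact a more careful rendering of the paper's own two-line argument, which takes $p_i=\Prob[X\le v\mid W_i]$ and relegates the atom at $v$ to the parenthetical ``taking care of potential discrete jumps.'' The genuine gap is your final step, where you try to pass from the weights $\Prob[W_i]p_i/p$ to the weights $\Prob[W_i]/\Prob[\Union_{W_i\in W'}W_i]$ by an intermediate-value argument. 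That step cannot be completed: for a fixed $W'$ the right-hand side of the stated identity ranges, continuously and monotonically in each $p_i$, over the interval from $\sum_{W_i\in W'}\tfrac{\Prob[W_i]}{\Prob[\Union W']}\VaR_0(X\mid W_i)$ up to $\Expectation[X\mid \Union W']$, and $\CVaR_p(X)$ can lie strictly below every such interval over all choices of $W'$. Concretely, take $\Omega=W_1\strictunion W_2$ with $\Prob[W_1]=0.01$, $\Prob[W_2]=0.99$, $X\equiv-100$ on $W_1$, $X\equiv 0$ on $W_2$, and $p=0.02$. Then $v=\VaR_p(X)=0$ and $\CVaR_p(X)=\tfrac{1}{0.02}\bigl((-100)(0.01)+(0.02-0.01)\cdot 0\bigr)=-50$, whereas $\CVaR_{p_i}(X\mid W_i)$ equals $-100$ or $0$ regardless of $p_i$, so the three possible choices of $W'$ yield only $-100$, $0$, or $-1$. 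No re-selection of the $p_i$ or enlargement of $W'$ can repair this, because the weights in the statement are fixed once $W'$ is.

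The defect is thus in the target normalization rather than in your argument: the convex combination you actually derived, with weights $\Prob[W_i]p_i/p$ summing to one, is the correct and provable form of the decomposition (and is what the paper's application in Lem.~\ref{stm:mean_multi_mec_constant_strategies} really uses, since there one only needs $\CVaR_p(X)$ to be \emph{some} convex combination of conditional CVaRs, each dominated by the corresponding conditional expectation). You should stop at that identity; the closing intermediate-value paragraph should be deleted rather than fixed.
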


\begin{proof}
	Recall that $\CVaR_1 = \Expectation$ and $\CVaR_0 = \VaR$.
	In these cases, the statement immediately follows:
	For expectation, choose $W' = W$ and $p_i = 1$; for $\VaR$, choose $W' = \{W_i\}$ where $W_i$ is a witness of the worst case and set $p_i = 0$.

	Now, let $p \in (0, 1)$.
	Let $W' = \{W_i \mid \Prob[X \leq v \mid W_i] > 0\}$ be all sets relevant for the CVaR, i.e. which contain \enquote{bad} events.
	Then, we can simply set $p_i = \Prob[X \leq v \mid W_i]$, taking care of potential discrete jumps of $X$ in the respective $W_i$.
\end{proof}

%\begin{example}
%	To demonstrate Thm.~\ref{stm:cvar_partitioning}, consider a random process which first tosses a fair coin and then draws a number uniformly from $[0, 2]$ if head or $[1, 3]$ if tail.
%	It is easy to verify that $\VaR_{0.5} = \frac{3}{2}$ and $\CVaR_{0.5} = \int_0^1 \frac{1}{4} x \, dx + \int_1^{\frac{3}{2}} \frac{1}{2} x \, dx = $.
%	To partition the event space, we choose $W_1 = \text{ \enquote{head}}$ and $W_2 = \text{ \enquote{tail}}$.
%	Observe that 
%\end{example}

\subsection{The general case for weighted reachability} \label{sec:app:cvar:weighted_reach}

\subsubsection{Proof of Thm.~\ref{stm:mdp_exp_reach_strategies_general}}

In the main body, we assumed that the set of target states $\reachSet$ is reached almost surely.
Thm.~\ref{stm:mdp_exp_reach_strategies} shows that in this case memoryless randomizing strategies are sufficient for $\QueryMDP_{\QueryObjReach,\QueryDimSingle}^{\{\QueryCritExp, \QueryCritVaR, \QueryCritCVaR\}}$.
If we drop this assumption, the involved strategies may become more complex.
Intuitively, this is the case since it might pay off to not move to the target set at all, e.g., if $\reward(s) < 0$ for all $s \in \reachSet$.
We show this with a simple variation of the MDP in Fig.~\ref{fig:mean_payoff_memory} and prove that 2-memory strategies are sufficient.
Recall that this is the same type of strategy required for mean payoff, see Thm.~\ref{stm:mdp_exp_mean_strategies}.

\begin{figure}[!h]
	\begin{tikzpicture}[auto,initial text={},node distance=0.5cm]
		\node[initial above,state] (s0) {};
		\node[actionnode,right=of s0] (s0b) {};
		\node[state,above right=0.2cm and 0.75cm of s0b] (s2) {$5$};
		\node[state,below right=0.2cm and 0.75cm of s0b] (s3) {$-5$};

		\path[->]
			(s0) edge[directedge,loop left]    node[action] {$a$} (s0)
			(s0) edge[actionedge]         node[action] {$b$} (s0b)
			(s0b) edge[probedge]          node[prob] {$0.9$} (s2)
			(s0b) edge[probedge,swap]     node[prob] {$0.1$} (s3)
			(s2) edge[directedge,loop right] (s2)
			(s3) edge[directedge,loop right] (s3);
	\end{tikzpicture}
	\caption{Memory is necessary for general weighted reachability}
	\label{fig:reach_general_memory}
\end{figure}
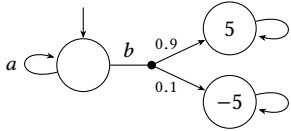 %
\begin{example}
	Consider the MDP presented in Fig.~\ref{fig:reach_general_memory} together with the constraints $\QueryProbVaR = \QueryProbCVaR = 0.05$, $\QueryThreshExp = 1$, and $\QueryThreshCVaR = -3$ (or $\QueryThreshVaR = 0$).
	Using the reasoning of Ex.~\ref{ex:mdp_exp_reach_memory} immediately yields the result, namely that any satisfying strategy needs memory. \QEE
\end{example}

\begin{proof}[Proof of Thm.~\ref{stm:mdp_exp_reach_strategies_general}]
	Since we assumed that all states in $\reachSet$ are absorbing, we can trivially convert this problem to an instance of mean payoff by assigning $\reward(s) = 0$ to all $s \in \States \setminus \reachSet$.
	Applying Thm.~\ref{stm:mdp_exp_mean_strategies} yields the result.
\end{proof}

\subsubsection{Assumptions in Thm.~\ref{stm:mdp_exp_reach_lp_correct}} \label{sec:app:reach_lp_assumptions}

For the LP in Fig.~\ref{fig:reach_lp}, we made several assumptions, namely:
\begin{enumerate}
	\item All MECs consist of a single state and we identify each MEC with the corresponding state, \label{stm:mdp_exp_reach_lp_correct:assumption:mec}
	\item all MECs $m_i$ from which $\reachSet$ is not reachable are considered part of $\reachSet$ and have $\reward(m_i) = 0$, and \label{stm:mdp_exp_reach_lp_correct:assumption:all_reach}
	\item quantile-probabilities are equal, i.e.\ $\QueryProbCVaR = \QueryProbVaR$. \label{stm:mdp_exp_reach_lp_correct:assumption:quantile}
\end{enumerate}

In the following, we present the general procedure to obtain a solution to $\QueryMDP_{\QueryObjReach,\QueryDimSingle}^{\{\QueryCritExp,\QueryCritVaR,\QueryCritCVaR\}}$ queries.
In particular, we
\begin{enumerate}
	\item prove that assumption~\ref{stm:mdp_exp_reach_lp_correct:assumption:mec} and \ref{stm:mdp_exp_reach_lp_correct:assumption:all_reach} can be made w.l.o.g.,
	\item derive a modification of the LP from Fig.~\ref{fig:reach_lp} in Fig.~\ref{fig:reach_lp_ext} which deals with assumption~\ref{stm:mdp_exp_reach_lp_correct:assumption:quantile}, and
	\item show that the combined procedure is correct.
\end{enumerate}

%Recall that we are still are operating under the attraction assumption of Def.~\ref{def:attraction_assumption}.
%For assumption~\ref{stm:mdp_exp_reach_lp_correct:assumption:mec}, observe that it is never profitable to remain in a non-target MEC.

Assumption~\ref{stm:mdp_exp_reach_lp_correct:assumption:all_reach} directly follows from the definition of weighted reachability.
Once a run enters a MEC from which the target set $\reachSet$ is not reachable, the run is guaranteed to achieve a reward of zero.
To resolve the single-state MEC assumption we can lift the problem to the \emph{MEC quotient}~\cite{dA97a}, which satisfies the criterion.
This construction intuitively collapses each MEC into a single representative state and adapts the transitions accordingly.
Since in each MEC every state can be reached from any other with probability $1$, the MEC quotient preserves many infinite horizon properties like (weighted) reachability~\cite{dA97a,Puterman-book,DBLP:conf/cav/AshokCDKM17}.
More precisely, queries can easily be transformed so that they are satisfiable on the MEC quotient if and only if the original query is satisfiable.

\begin{figure}
	\begin{enumerate}
		\renewcommand{\labelenumi}{(\arabic{enumi}) }
		\item \label{fig:reach_lp_ext:non-neg}
		All variables $y_a$, $x_s$, $\underline{x}_s^c$, $\underline{x}_s^v$ are non-negative.
		\item \label{fig:reach_lp_ext:flow}
		Flow for $s \in \States$:
		\begin{equation*}
			\mathbbm{1}_{s_0}(s) + {\sum}_{a \in \Actions} y_a \Trans(a, s) = {\sum}_{a \in \AvAct(s)} y_a + x_s
		\end{equation*}
		\item \label{fig:reach_lp_ext:switching}
		Switching to recurrent behaviour:
		\begin{equation*}
			{\sum}_{s \in M_i} y_s = x_i
		\end{equation*}
		\item \label{fig:reach_lp_ext:var_split}
		$\VaR$-consistent split:
		\begin{align*}
			\underline{x}_s^c & = x_s \text{ for $s \in \reachSet_{<c}$} & \underline{x}_s^c & \leq x_s \text{ for $s \in \reachSet_{=c}$} \\
			\underline{x}_s^v & = x_s \text{ for $s \in \reachSet_{<v}$} & \underline{x}_s^v & \leq x_s \text{ for $s \in \reachSet_{=v}$}
		\end{align*}
		\item \label{fig:reach_lp_ext:consistent_split}
		Probability-consistent split:
		\begin{align*}
			\sum_{s \in \reachSet_{\leq c}} \underline{x}_s^c & = \QueryProbCVaR & \sum_{s \in \reachSet_{\leq v}} \underline{x}_s^v & = \QueryProbVaR
		\end{align*}
		\item \label{fig:reach_lp_ext:satisfaction}
		CVaR and expectation satisfaction:
		\begin{align*}
			\sum_{s \in \reachSet_{\leq c}} \underline{x}_s^c \cdot \reward(s) & \geq \QueryProbCVaR \cdot \QueryThreshCVaR &
			\sum_{s \in \reachSet} x_s \cdot \reward(s) & \geq \QueryThreshExp
		\end{align*}
	\end{enumerate}
	\caption{LP used to decide weighted reachability queries given guesses $t_c$ and $t_v$ of $\VaR_\QueryProbCVaR$ and $\VaR_\QueryProbVaR$, respectively.
	$\reachSet_{\sim c}$ is defined as $\reachSet_{\sim c} := \{s \in \reachSet \mid s \sim t_c\}$ for $\sim~\in \{<, =, \leq\}$, analogous for $\reachSet_{\sim v}$.}
	\label{fig:reach_lp_ext}
	\vspace{-0.5em}
\end{figure}

%First, let us define our decision procedure in Alg.~\ref{alg:reach_single_decide}.
%
%\begin{algorithm}
%	\KwIn{$\MDP = (\States, \Actions, \AvAct, \Trans, \initstate)$ : MDP\newline
%		$\reward : \reachSet \to \Rationals$ : Reward function with target states\newline
%		$\QueryProbCVaR, \QueryProbVaR \in (0, 1)$ : Query probabilities\newline
%		$\QueryThreshExp, \QueryThreshCVaR, \QueryThreshVaR \in \Rationals$ : Query Thresholds}%
%	\KwResult{\textit{yes} iff there exists a strategy satisfying the query}%
%	
%	Compute MEC quotient $\widehat{\MDP}$ (Def.~\ref{def:mec_quotient})\;
%	$\reachSet' \gets \{\hat{s}_i^r\}$ \tcp{Lift $\reachSet$ and $\reward$ to the quotient}
%	\For{$\hat{s}_i^r \in \reachSet'$}{
%		\lIf{$M_i = \{s\}$ and $s \in \reachSet$}{$\reward'(\hat{s}_i^r) \gets \reward(s)$}
%		\lElse{$\reward'(\hat{s}_i^r) \gets 0$}
%	}
%	\For{$t_c \in \reward'(\reachSet')$, $t_v \in \reward'(\reachSet') \setminus (-\infty, \QueryThreshVaR)$}{
%		\lIf{LP of Fig.~\ref{fig:reach_lp_ext} is feasible}{\Return \textit{yes}}
%	}
%	\Return \textit{no}
%	\caption{Algorithm deciding $\QueryMDP_{\QueryObjReach,\QueryDimSingle}^{\{\QueryCritExp,\QueryCritVaR,\QueryCritCVaR\}}$ queries.} \label{alg:reach_single_decide}
%\end{algorithm}

In order to handle assumption~\ref{stm:mdp_exp_reach_lp_correct:assumption:quantile}, we present the adaptation of the LP from Fig.~\ref{fig:reach_lp} in Fig.~\ref{fig:reach_lp_ext}.
Essentially, only an additional type of variable, namely $\underline{x}_s^v$, has been added, verifying the $\VaR_\QueryProbVaR$ constraint.
To prove Thm.~\ref{stm:mdp_exp_reach_lp_correct} for this adapted LP, the exact same reasoning as in the original proof can be applied.

\subsection{Proof of Lem.~\ref{stm:cvar_quasiconvex}} \label{sec:app:quasiconvex}

\begin{proof}
	Let $X_1$, $X_2$, $p$, and $\lambda$ be as in the statement and define $X_\lambda = \lambda X_1 + (1-\lambda) X_2$.
	Further, for $\circ \in \{1, 2, \lambda\}$, set $v_\circ = \VaR_p(X_\circ)$ and $c_\circ = \CVaR_p(X_\circ)$, and let $F_\circ$ denote the CDF of $X_\circ$.
	W.l.o.g., let $v_1 \leq v_2$.
	Observe that $F_\lambda = \lambda F_1 + (1-\lambda) F_2$ and, since CDF are non-decreasing, $F_2(v_1) \leq F_1(v_1) \leq F_1(v_2)$.
	Together, $F_\lambda(v_1) \leq F_\lambda(v_\lambda) \leq F_\lambda(v_2)$, which in turn yields $v_1 \leq v_\lambda \leq v_2$.
	
	Moreover, we can rearrange the definition of $F_\lambda$ to
	\begin{equation*}
		\lambda (F_1(v_\lambda) - p) = (1-\lambda) (p - F_2(v_\lambda)),
	\end{equation*}
	and thus
	\begin{equation*}
		\lambda \int_{v_1}^{v_\lambda} dF_1 = (1-\lambda) \int_{v_\lambda}^{v_2} dF_2.
	\end{equation*}
	Rearranging $c_\lambda$ and splitting the integration domain, we get
	\begin{multline*}
		c_\lambda = \int_{-\infty}^{v_\lambda} v\ dF_\lambda = \lambda \int_{-\infty}^{v_\lambda} v\ dF_1 + (1-\lambda) \int_{-\infty}^{v_\lambda} v\ dF_2 = \\
			= \lambda \left(c_1 + \int_{v_1}^{v_\lambda} v\ dF_1\right) + (1-\lambda) \left(c_2 - \int_{v_\lambda}^{v_2} v\ dF_2\right) \leq \\
			\leq \lambda c_1 + (1-\lambda) c_2 + \left( \lambda v_\lambda \int_{v_1}^{v_\lambda} dF_1 - (1-\lambda) v_\lambda \int_{v_\lambda}^{v_2} dF_2 \right),
	\end{multline*}
	and thus, by the previous equality, $c_\lambda \leq \lambda c_1 + (1-\lambda) c_2$.
\end{proof}

\end{document}